\documentclass[journal,12pt,onecolumn]{IEEEtran}

\usepackage{amsmath,amsfonts}
\usepackage{amssymb,verbatim,amsfonts,amsmath,amsthm}
\usepackage{geometry}
\usepackage{cite}
\usepackage{hyperref}
\usepackage{bm}

\usepackage{tikz-cd}

\usepackage{longtable}
\usepackage{booktabs}
\usepackage{graphicx}
\usepackage[figuresright]{rotating}

\usepackage{makecell}
\usepackage{seqsplit} 

\newtheorem{theorem}{Theorem}
\newtheorem{lemma}{Lemma}

\newtheorem{proposition}{Proposition}
\newtheorem{example}{Example}
\newtheorem{remark}{Remark}
\newtheorem{definition}{Definition}

\begin{document}

\title{Locally Repairable Codes with Availability via Elliptic Function Fields}
\author{Junjie Huang and Chang-An Zhao$^\dagger$
\thanks{J. Huang is with the Department of Mathematics, School of Mathematics, Sun Yat-sen University, Guangzhou 510275, P.R.China. (e-mail: huangjj76@mail2.sysu.edu.cn).}
\thanks{C.-A. Zhao is with the School of Mathematics, Sun Yat-sen University, Guangzhou 510275, P.R.China, and also with the Guangdong Key Laboratory of Information Security, Guangzhou 510006, P.R.China (e-mail: zhaochan3@mail.sysu.edu.cn).}
\thanks{\(\dagger\) Corresponding author. }
}


\maketitle

\begin{abstract}
Locally repairable codes with availability have become essential components in modern large-scale distributed cloud storage systems and numerous other applications. In this paper, we focus on the construction of locally repairable codes with one or two recovering sets via elliptic function fields. Prior pioneering work by Li et al. (IEEE Trans. Inf. Theory, vol. 65, no. 1, 2019) and Ma and Xing (J. Comb. Theory Ser. A., vol. 193, 2023) employed maximal supersingular elliptic curves to obtain several optimal (classical) locally repairable codes. In contrast, we consider ordinary elliptic curves with many rational points. This approach yields several new families  of \(q\)-ary optimal locally repairable codes with length \(O(q+2\sqrt{q})\) and flexible locality. Consequently, our work broadens the selection of curves available for the construction of optimal locally repairable codes.

Furthermore, we present a general framework for constructing locally repairable codes with two recovering sets via automorphism groups of elliptic function fields. To realize this framework, we devise a novel construction for determining the functions \(e_i\) in the construction of locally repairable codes. By employing both supersingular and ordinary elliptic curves, we obtain several families of locally repairable codes with two recovering sets.  In particular, we construct a family of \(q^2\)-ary locally repairable codes with two recovering sets, achieving length \(O(q^2+2q)\) and Singleton-defect \(O\!\left(\frac{2\ell}{q^2+2q-8\ell}\right)\), where \(\ell \mid\mid q + 2\) with \(4\ell < q\).
\end{abstract}

\begin{IEEEkeywords}
Locally repairable codes, Algebraic geometry codes, Elliptic Function fields, Automorphism groups, Availability, Recovering sets.
\end{IEEEkeywords}

\section{Introduction}

Locally repairable codes with one or more recovering sets have been extensively studied due to recent applications in distributed storage and cloud storage systems \cite{4557599},\cite{On_the_Locality_of_Codeword_Symbols},\cite{6818438},\cite{FORBES201478},\cite{8493531}. With local recovery techniques, we can repair lost encoded data using a local approach. This means we only need a little bit of data, rather than all the information in a codeword. Specifically, in a distributed storage system, when the data in one node is erased, we want to use data from a small set of other nodes to repair the data in the failure node. However, if some nodes in this small set are unavailable, we need to find an alternative set to perform the repair. To overcome this problem, we can work with \(t\) disjoint recovering sets for each node. The number \(t\) of the disjoint sets is called availability. Informally, a block code is said to have availability $t$ and localities \((r_1, r_2, \ldots, r_t)\) if, for every coordinate of a codeword, there exist \(t\) disjoint recovering sets with cardinalities \(r_1, r_2, \ldots, r_t\), such that the value of the coordinate can be recovered by accessing any one of these sets. A formal definition of a locally repairable code with localities \((r_1, r_2, \ldots, r_t)\) is given as follows.
\begin{definition}
    Let \( \mathcal{C} \subseteq \mathbb{F}_q^n \) be a \( q \)-ary block code of length \( n \). For each \( \alpha \in \mathbb{F}_q \) and \( i \in \{1, 2, \dots, n\} \), define  
    \[
        \mathcal{C}(i, \alpha) := \big\{c = (c_1, \dots, c_n) \in \mathcal{C} : c_i = \alpha\big\}.
    \]
    For a subset \( I \subseteq \{1, 2, \dots, n\} \setminus \{i\} \), we denote by \( \mathcal{C}_I(i, \alpha) \) the projection of \( \mathcal{C}(i, \alpha) \) on \( I \). Then \( \mathcal{C} \) is called a locally repairable code with localities \((r_1, r_2, \dots, r_t)\) and availability \( t \) if, for every \( i \in \{1, 2, \dots, n\} \), there exist disjoint subsets \( I_{i,j} \subseteq \{1, 2, \dots, n\} \setminus \{i\} \) for \( j = 1, 2, \dots, t \) with \( |I_{i,j}| \leq r_j \) such that \( \mathcal{C}_{I_{i,j}}(i, \alpha) \) and \( \mathcal{C}_{I_{i,j}}(i, \beta) \) are disjoint for any \( \alpha \neq \beta \).
\end{definition}

In this paper, we focus exclusively on linear locally repairable codes over a finite field \( \mathbb{F}_q \). A linear \(q\)-ary locally repairable code with length \( n \), dimension \( k \), minimum distance \( d \) and localities \((r_1, r_2, \dots, r_t)\) is denoted by \([n, k, d; (r_1, r_2, \cdots, r_t)]_q\). For $t = 1$, locally repairable codes defined above are classical locally repairable codes, i.e. with only one recovering set. It was established in \cite{On_the_Locality_of_Codeword_Symbols} that the minimum distance of the code satisfies the Singleton-type upper bound
\begin{equation}\label{min_d_bound_one}
    d\le n-k-\bigg\lceil\frac{k}{r}\bigg\rceil+2.
\end{equation}
Any (classical) locally repairable code achieving the bound \eqref{min_d_bound_one} is called optimal. Several upper bounds on the minimum distance of locally repairable codes with availability $t$ have been established in the literature. In \cite{7398062}, Rawat et al. generalized the bound \eqref{min_d_bound_one} to the following upper bound of \([n, k, d; (r, r, \cdots, r)]_q\) locally repairable code
\begin{equation*}
    d \leq n - k - \left\lceil \frac{kt}{r} \right\rceil + t + 1. 
\end{equation*}
For codes with different localities, Bhadane and Thangaraj in \cite{8077091} established an upper bound on the minimum distance of \([n, k, d; (r_1, r_2, \cdots, r_t)]_q\) locally repairable code as
\begin{equation}\label{mul_bound_1}
    d \leq n - k + 1 - \sum_{i=1}^t \left\lfloor \frac{k-1}{\prod_{j=t+1-i}^t r_j} \right\rfloor,
\end{equation}
where \(r_1\le r_2\le\cdots\le r_t\). In \cite{Locally_Recoverable_Codes_From_Automorphism_Group_of_Function_Fields_of_Genus}, Bartoli et al. generalized the bound \eqref{min_d_bound_one} to the following upper bound of \([n, k, d; (r_1, r_2, \cdots, r_t)]_q\) locally repairable code
\begin{equation}\label{mul_bound_2}
    d \leq n - k - \left\lceil \frac{(k-1)t + 1}{1 + \sum_{i=1}^{t} r_i} \right\rceil + 2. 
\end{equation}
Throughout this paper, whenever this bound \eqref{mul_bound_2} is applicable, the bound \eqref{mul_bound_1} gives a lower value. Hence, we compare with \eqref{mul_bound_1} in what follows. We now define the Singleton-defect for an \([n, k, d; (r_1, r_2, \ldots, r_t)]_q\) locally repairable code \(\mathcal{C}\) as
\[
    \Delta(\mathcal{C}) = \frac{1}{n} \left( n - k - d + 1 - \sum_{i=1}^t \left\lfloor \frac{k-1}{\prod_{j=t+1-i}^t r_j} \right\rfloor \right), 
\]
where \(r_1\le r_2\le\cdots\le r_t\) and a smaller value of \(\Delta(\mathcal{C})\) corresponds to a better code. 


\subsection{Known Results}

Optimal locally repairable codes have attracted much attention. Constructions in \cite{6620541}, \cite{4276609}, \cite{Tamo20166661} suffer from alphabet size in code length. The construction in \cite{6284028} achieves alphabet size comparable to length, but only for specific length \(n = \lceil k/r \rceil (r+1)\) and the rate near \(1\). In \cite{A_Family_of_Optimal_Locally_Recoverable_Codes}, Tamo and Barg introduced a variation of Reed–Solomon codes to achieve local recoverability. These so-called locally repairable RS codes are optimal and have significantly lower locality than the RS codes themselves. However, their length is still shorter than the size of $\mathbb{F}_q$. A classical approach to obtain longer codes is to use algebraic curves with many rational points. In this way, Barg, Tamo, and Vlădut \cite{Locally_Recoverable_Codes_on_Algebraic_Curves} extended locally repairable RS codes to so-called locally repairable algebraic geometry codes, which in effect resulted in more locally repairable codes. The natural progression from these successes was the exploration of locally repairable codes using algebraic geometry methods, as indicated in \cite{Locally_Recoverable_Codes_from_Algebraic_Curves_and_Surfaces}, \cite{Locally_Recoverable_codes_with_availability_<i>t</i>≥2_from_fiber_products_of_curves}, \cite{Locally_Recoverable_codes_from_rational_maps}, \cite{Optimal_Locally_Repairable_Codes_Via_Elliptic_Curves}, \cite{Locally_Recoverable_J-affine_variety_codes}, \cite{Locally_Recoverable_codes_from_algebraic_curves_with_separated_variables},\cite{Construction_of_Optimal_Locally_Repairable_Codes_via_Automorphism_Groups_of_Rational_Function_Fields},\cite{Locally_Recoverable_Codes_From_Automorphism_Group_of_Function_Fields_of_Genus},\cite{MA2023105686}. Among these, a powerful technique is the construction of locally repairable codes from the automorphism groups of algebraic curves and algebraic function fields. Li et al. in~\cite{Optimal_Locally_Repairable_Codes_Via_Elliptic_Curves} provided a systematic method for producing \(q\)-ary optimal locally repairable codes with length \(O(q + 2\sqrt{q})\) via automorphism groups fixing the infinite place of elliptic curves and Jin et al. in~\cite{Construction_of_Optimal_Locally_Repairable_Codes_via_Automorphism_Groups_of_Rational_Function_Fields} constructed optimal locally repairable codes by employing automorphism groups of rational function fields. In~\cite{MA2023105686}, Ma and Xing overcame the locality constraint in~\cite{Optimal_Locally_Repairable_Codes_Via_Elliptic_Curves} by proposing a general framework for constructing optimal locally repairable codes via the automorphism groups fixing the infinite place and translation subgroups of elliptic curves. In~\cite{11223270}, Huang and Zhao constructed several \(q\)-ary either optimal or almost optimal locally repairable codes with length \(O(q + 4\sqrt{q})\) via automorphism groups of hyperelliptic curves by generalizing the methods in~\cite{Optimal_Locally_Repairable_Codes_Via_Elliptic_Curves} and~\cite{MA2023105686}. 

Locally repairable codes with availability \(t>1\) have also been the subject of extensive study in recent years. Numerous constructions of locally repairable codes with availability \(t\) have been proposed in the literature. For example, in~\cite{A_Family_of_Optimal_Locally_Recoverable_Codes}, Tamo and Barg proposed a polynomial-based construction of locally repairable codes with availability \(t\). In~\cite{7426763}, Hao et al. employed regular LDPC codes to construct binary locally repairable codes with \((r,t)\)-locality, i.e. each symbol has \(t\) disjoint recovering sets of size \(r\). In~\cite{han2024}, Han et al. utilized combinatorial designs to construct optimal locally repairable codes with \((4,t)\)-locality and dimension \(k\ge277\). In~\cite{Locally_Recoverable_codes_with_availability_<i>t</i>≥2_from_fiber_products_of_curves}, Haymaker et al. proposed a construction of locally repairable codes with availability \(t\ge2\) based on fiber products of algebraic curves. An effective class of constructions for locally repairable codes with availability also leverages the automorphism group structure of algebraic function fields. In~\cite{8865660}, Jin et al. extended the approach of \cite{Construction_of_Optimal_Locally_Repairable_Codes_via_Automorphism_Groups_of_Rational_Function_Fields} and employed the automorphism groups of rational function fields to construct locally repairable codes with availability \(t\). In \cite{Locally_Recoverable_Codes_From_Automorphism_Group_of_Function_Fields_of_Genus}, Bartoli et al. proposed a construction of locally repairable codes with $(r_1,\cdots,r_\delta)$-locality by working with a finite number of subgroups of cardinality $r_i +1$ of the automorphism group of a function field $F$ of genus $g \ge 1$ and applied their results to some well known families of function fields with many rational points. For asymptotic constructions of locally repairable codes with availability, Li et al. in~\cite{10619101} exploited the automorphism groups of function fields from the Garcia-Stichtenoth tower to obtain several families of asymptotically good such codes. Recently, in~\cite{MA2023105686}, Ma and Xing made a breakthrough by determining the precise structure of the automorphism group of elliptic function fields and analyzing its subgroup structure. Motivated by their result, we aim to construct locally repairable codes with availability \(t\) by utilizing the automorphism group of elliptic function fields in this paper.

\subsection{Main Results and Comparison}

In this paper, our first contribution is that we construct several new families of \(q\)-ary optimal (classical) locally repairable codes with length \(O(q+2\sqrt{q})\) and flexible locality by employing two special families of ordinary elliptic curves we introduce. Table~\ref{opt_parameter} presents a comparison between our optimal locally repairable codes and those constructed in previous researches using automorphism groups of algebraic curves and algebraic function fields. Compared with the construction of optimal locally repairable codes via elliptic function fields in \cite{Optimal_Locally_Repairable_Codes_Via_Elliptic_Curves} and \cite{MA2023105686}, our optimal locally repairable codes can be constructed over the finite field with characteristic \(p\) satisfying \( p \equiv 1 \pmod{12} \), a case for which no instances of optimal locally repairable codes were available in the previous results. Secondly, over the same finite field, the locality parameters of our optimal locally repairable codes also appear to differ from those in \cite{Optimal_Locally_Repairable_Codes_Via_Elliptic_Curves} and \cite{MA2023105686}, which implies that we complement the previous results. Our main results about optimal locally repairable codes are summarized as below. 

\begin{theorem}\label{opt_LRC_one_hA}
    Let \( p \) be a prime number and \(N\) be a positive integer. For any positive integer \( h \) with \(h\mid\mid N\), there exists an optimal \( p^2 \)-ary \([m(r+1), rt+1, (m-t)(r+1)]_{p^2}\) locally repairable code with locality \( r = h|A| - 1 \) for any integers \( t \) and \( m \) satisfying \(1 \leq t < m \leq \left\lceil \frac{N-2h}{r+1} \right\rceil-1 \) provided that \(|A|\), \( p \) and \(N\) satisfy one of the following cases:
    \begin{itemize}
        \item[(i)] \(N = p^2 + 2p\) and \(|A| = 2,3,6\) for \(p = 3u^2 + 3u + 1\) with\(u\in\mathbb{Z}\);
        \item[(ii)] \(N = p^2 + 2p - 3\) and \(|A| = 2,4\) for \( p \equiv 1 \pmod{4} \) and \(p = v^2 + 1\) with \(v\in\mathbb{Z}\).
    \end{itemize}
\end{theorem}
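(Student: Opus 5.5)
The plan is to realize the codes by the automorphism-group construction of Li et al.\ and Ma--Xing. Take an elliptic function field $E/\mathbb{F}_{p^2}$ with exactly $N$ rational places. Let $G=T\rtimes A$, where $T$ is a translation subgroup of order $h$ and $A\subseteq \mathrm{Aut}(E,O)$ fixes the infinite place. Evaluation on $m$ places of the fixed field $E^G$ that split completely then gives codes with locality $r=|G|-1=h|A|-1$.

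\textbf{Step 1: existence of the curves.} Write $N=p^2+1-a$. In case (i) we have $a=1-2p$, so
\[
a^2-4p^2=1-4p=-3(2u+1)^2
\]
when $p=3u^2+3u+1$. In case (ii) we have $a=4-2p$, so
\[
a^2-4p^2=16-16p=-16v^2
\]
when $p=v^2+1$. In both cases $p\nmid a$, so by Deuring/Waterhouse there is an ordinary curve over $\mathbb{F}_{p^2}$ with exactly $N$ points. Its Frobenius $\pi=(a+\sqrt{a^2-4p^2})/2$ lies in $\mathbb{Z}[\omega]$ in case (i) and in $\mathbb{Z}[i]$ in case (ii).

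I would show that $\mathbb{Z}[\pi]$ together with the maximal order is contained in $\mathrm{End}_{\mathbb{F}_{p^2}}(E)$. Concretely, take $E$ to be the appropriate twist of $y^2=x^3+b$ (resp.\ $y^2=x^3+bx$), so that $j=0$ (resp.\ $1728$). Then $\mathrm{Aut}_{\mathbb{F}_{p^2}}(E,O)$ is cyclic of order $6$ (resp.\ $4$). Its automorphisms are defined over $\mathbb{F}_{p^2}$ because $\mu_6\subseteq\mathbb{F}_p$ (resp.\ $\mu_4\subseteq\mathbb{F}_p$), using $p\equiv1\pmod{3}$, resp.\ $p\equiv1\pmod 4$. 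This yields subgroups $A$ of every order listed. The step I expect to need the most care is choosing the twist whose trace is exactly $a$ rather than another unit multiple. I would fix it by checking $\pi\equiv$ the correct unit modulo $(2+2\omega)$ (resp.\ $(2+2i)$), the usual primary normalization.

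\textbf{Step 2: the group and its orbits.} Since $h\,\|\,N$, the $h$-part of $E(\mathbb{F}_{p^2})$ has order exactly $h$. Hence it is a characteristic subgroup $T$, stable under $A$, and $G=T\rtimes A$ is a group of automorphisms of $E$ of order $h|A|$ (this follows Ma--Xing's description of $\mathrm{Aut}(E)$).

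A rational place $P$ lies in a non-regular orbit exactly when $\tau_Q\circ\alpha(P)=P$ for some nontrivial $(Q,\alpha)$, that is, when $(1-\alpha)P=-Q\in T$ with $\alpha\neq1$. Since $1-\alpha$ is an isogeny whose kernel has bounded order, the set of such $P$ lies in $(1-\alpha)^{-1}(T)$. Using the coprimality from $h\,\|\,N$, I would bound the total number of such points (together with $O$ and its orbit) by at most $2h$. Removing them leaves at least $\lceil (N-2h)/(r+1)\rceil-1$ full orbits of size $r+1$. Each such orbit is a rational place of $E^G$ splitting completely in $E/E^G$.

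\textbf{Step 3: the code.} Let $z\in E^G$ have pole divisor $(r+1)$ times the infinite place of $E^G$ (pulled back, it is supported on the orbit of $O$). Choose $e_0=1,e_1,\dots,e_{r-1}\in E$ linearly independent over $E^G$, with poles only at that orbit and with pole orders small enough that the space
\[
V=\Big\{\sum_{i=0}^{r-1}\sum_{j} a_{ij}e_i z^j\Big\}
\]
has dimension $rt+1$ and lies in some $\mathcal{L}(D)$ with $\deg D\le t(r+1)$. Here one takes $j\le t-1$ for $i\ge1$ and $j\le t$ for $i=0$.

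Evaluate $V$ at the $m(r+1)$ places of the chosen split orbits. Locality $r$ holds because on each orbit $z$ is constant, so each codeword restricted to an orbit is the evaluation of an element of the $r$-dimensional span of the $e_i$. These $r$ values determine the remaining coordinate through the $G$-structure, as in Barg--Tamo--Vl\u{a}du\c{t}. The minimum distance satisfies $d\ge m(r+1)-\deg D\ge (m-t)(r+1)$, and injectivity follows from $t<m$.

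\textbf{Step 4: optimality.} Finally, I would check that these parameters meet the Singleton-type bound \eqref{min_d_bound_one}:
\[
n-k-\lceil k/r\rceil+2=m(r+1)-rt-1-(t+1)+2=(m-t)(r+1).
\]
The main obstacle is the degree bound on the $e_i$ in Step 3, i.e.\ finding $e_i$ with pole orders at most $t(r+1)$ in total. I would follow Ma--Xing and take the $e_i$ from a basis of $\mathcal{L}((r+1)O')$ adapted to the $G$-action.
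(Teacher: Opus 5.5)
Your Steps 1 and 4 are fine. So is the group-theoretic half of Step 2: the subgroup of order $h$ is unique because $h\mid\mid N$, hence it is $A$-stable, so $TA$ is a group by Lemma~\ref{sub}. Twisting over $\mathbb{F}_{p^2}$ is a harmless variant of the paper's curves defined over $\mathbb{F}_p$.

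\textbf{Step 2, the orbit count.} The bound you intend to prove is false. You claim that at most $2h$ rational points, the orbit $T$ of $\mathcal{O}$ included, have nontrivial stabilizer in $\mathcal{G}$. In case (ii) all of $E[2]$ is rational, since $x^3+ax$ splits over $\mathbb{F}_{p^2}$. So when $|A|$ is even and $h$ is odd, every $P$ with $[2]P=Q\in T$ is fixed by $\tau_Q\circ[-1]\in\mathcal{G}$, and there are $4h$ such $P$. In Example~\ref{opt_lrc_one_ex}(ii) ($p=37$, $h=5$, $|A|=4$) this gives $20$ non-regular points and exactly $71=\lceil(1440-10)/20\rceil-1$ regular orbits. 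Your bound would force at least $72$. The same failure occurs in case (i): for $p=7$, $h=7$, $|A|=3$, the rational kernel $\{\mathcal{O},(0,\pm3)\}$ of $1-\omega$ gives $3h=21$ non-regular points.

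Note also that ``at most $2h$'' would give $\lceil (N-2h)/(r+1)\rceil$ orbits, not the value with $-1$ that you wrote. That $-1$ is exactly the extra $r+1$ in the correct bound of Lemma~\ref{split_place}: at most $r+1+2h=(|A|+2)h$ rational places of $E$ ramify in $E/E^{\mathcal{G}}$. You can recover this in your own framework by passing to $E^{T}$. The extension $E/E^{T}$ is unramified because translations act freely. By Hurwitz, at most $|A|+2$ points of the elliptic curve $E^{T}$ have nontrivial $A$-stabilizer, and each has at most $h$ rational places above it.

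\textbf{Step 3, the functions $e_j$ and locality.} This is where the paper's new ingredient (Proposition~\ref{find_e_i}) sits, and your sketch leaves it open. Your $z$, read as $(z)^E_\infty=|A|\sum_{P\in T}P$, is the paper's. But you neither construct the $e_j$ nor prove local recovery; ``through the $G$-structure'' is not an argument. Deferring to Ma--Xing does not fit this $z$ either: per the paper's remark, their $e_i$ come from a completely splitting orbit, which is then lost for evaluation unless you also adopt their modified-AG-code step.

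What is needed is that $\mathrm{span}_{\mathbb{F}_q}\{e_j\}$ be an $r$-dimensional subspace of $\mathcal{L}(|A|\sum_{P\in T}P)$ containing $1$ but not $z$. The paper takes the flag basis of $\mathcal{L}(|A|\sum_{P\in T}P-\mathcal{O})$ with pole divisors \eqref{e_i_condition}, and proves independence over $\mathbb{F}_q(z)$ by valuations at the points of $T$.

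With that choice, suppose an element of the span vanishes at every place over $Q_i$ except $P_{i,k_0}$. Then it lies in $\mathcal{L}\big((z)^E_\infty-\sum_{k\neq k_0}P_{i,k}\big)=\mathbb{F}_q\,(z-\alpha_i)$. This line meets the span only in $0$, because $1$ is in the span and $z$ is not. In the paper's formulation, the divisor of such an element would force $\mathcal{O}\sim P_{i,k_0}$, contradicting Lemma~\ref{P=Q}. This gives the $r\times r$ minor condition of Lemma~\ref{local}. Finally, $V\subseteq\mathcal{L}(t|A|\sum_{P\in T}P)$ gives $d\ge (m-t)(r+1)$.
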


\begin{theorem}\label{opt_LRC_one_2h}
    Let \( p \) be a prime number and \(N\) be a positive integer. For any positive integer \( h \) with \(h\mid N\), there exists an optimal \( p^2 \)-ary \([m(r+1), rt+1, (m-t)(r+1)]_{p^2}\) locally repairable code with locality \( r = 2h - 1 \) for any integers \( t \) and \( m \) satisfying \(1 \leq t < m \leq \left\lceil \frac{N}{r+1} \right\rceil-2 \) provided that \( p \) and \(N\) satisfy one of the following cases:
    \begin{itemize}
        \item[(i)] \(N = p^2 + 2p\) for \(p = 3u^2 + 3u + 1\) with\(u\in\mathbb{Z}\);
        \item[(ii)] \(N = p^2 + 2p - 3\) for \( p \equiv 1 \pmod{4} \) and \(p = v^2 + 1\) with \(v\in\mathbb{Z}\).
    \end{itemize}
\end{theorem}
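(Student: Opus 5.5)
We follow the Ma--Xing framework: build a fibration $E\to E/G$ of the elliptic function field $E/\mathbb{F}_{p^2}$ with $|G|=r+1$, where $G$ is generated by a translation subgroup and (in the second theorem) an automorphism of order $2$ fixing $O$. Theorem~\ref{opt_LRC_one_2h} is a companion of Theorem~\ref{opt_LRC_one_hA} in which the automorphism subgroup $A$ fixing the infinite place is replaced by $\{\pm1\}$, so $|A|=2$ and $r+1=2h$.

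\textbf{Step 1: the curves.} In case (i), $p=3u^2+3u+1$ is the norm of an Eisenstein integer $\pi$ with trace $\pm1$. Choosing the sign, or a sextic twist, we obtain an ordinary curve $E/\mathbb{F}_p$ with $j=0$ and Frobenius trace $1$, that is, $\#E(\mathbb{F}_p)=p$. Over $\mathbb{F}_{p^2}$ the Frobenius is $\pi^2$, whose trace is $1^2-2p$, so
\[
\#E(\mathbb{F}_{p^2})=p^2+1-(1-2p)=p^2+2p=N.
\]
In case (ii), $p=v^2+1$ with $\pi=1+vi$ gives a curve with $j=1728$ and trace $2$ over $\mathbb{F}_p$. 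Its trace over $\mathbb{F}_{p^2}$ is $4-2p$, so $\#E(\mathbb{F}_{p^2})=p^2+1-4+2p=N$. In both cases we need $E(\mathbb{F}_{p^2})$ to contain a cyclic subgroup $T$ of order $h$ for every $h\mid N$. We will check this from the group structure $\mathbb{Z}[\pi]/(\pi^2-1)$: over $\mathbb{F}_{p^2}$ the group is $E(\mathbb{F}_{p^2})\cong \operatorname{End}/(\pi^2-1)$, and this cyclic-subgroup claim is the point to verify. If it fails, we can still take some subgroup of order $h$, because a finite abelian group of order $N$ has subgroups of every order dividing $N$.

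\textbf{Step 2: the group and the fibres.} Let $\sigma:P\mapsto -P$ and $G=T\rtimes\langle\sigma\rangle$, so $|G|=2h$. The fixed field $F=E^G$ has genus $0$, since $E/T$ is elliptic and $-1$ acts on it. The ramified places of $E/F$ are the points $P$ with $2P\in T$. These lie above at most a bounded number of places of $F$; they occupy at most $2$ orbits, which costs $2(r+1)$ points and explains the bound $\lceil N/(r+1)\rceil-2$. Every other rational point of $E$ lies in a totally split fibre of size $2h$. This gives at least $m$ disjoint fibres $\{P_{i,1},\dots,P_{i,r+1}\}$, $i=1,\dots,m$, each lying over a rational place of $F$.

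\textbf{Step 3: the code.} Take $z\in F$ of degree $1$ in $F$, so that its pole divisor in $E$ is $G$-invariant of degree $r+1$; for instance, $z$ has a single pole at the fibre over a ramified or unused place. Take functions $e_0=1,e_1,\dots,e_{r-1}$ in $L(D)$ that are linearly independent over $F$, with pole order at most $r+1$ on a fixed $G$-invariant divisor. Riemann--Roch on $E$ supplies such a basis of $E/F$, and this is standard in Ma--Xing. The message space is
\[
V=\Big\{\sum_{j=0}^{r-1}\Big(\sum_{i=0}^{t-1}a_{ij}z^i\Big)e_j+a_{t0}z^t\Big\},
\]
of dimension $rt+1$. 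All of its elements lie in $L(t(r+1)D_\infty)$, a Riemann--Roch space of degree $t(r+1)$. Restricted to a fibre, each $f\in V$ is a combination of the $e_j$ with constant coefficients, so one erased coordinate is recovered from the other $r$ coordinates of its fibre. For the distance, evaluating at $n=m(r+1)$ points gives $d\ge n-t(r+1)=(m-t)(r+1)$. Optimality follows because
\[
n-k-\lceil k/r\rceil+2=m(r+1)-rt-1-(t+1)+2=(m-t)(r+1).
\]

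\textbf{Main obstacle.} The hardest step is choosing the $e_j$ so that $e_j\in L(D)$ with $\deg D\le r+1$ and the pole divisor is compatible with $z^t$, which keeps the degree bound $t(r+1)$. We would first try $D=$ the $G$-orbit of pole divisors used for $z$, applying Riemann--Roch with $\ell(D)=r+1$ and its $G$-structure as in Ma--Xing. We would also check that the $2h$ rational points of each used fibre avoid the support of $D$.
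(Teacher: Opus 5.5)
Your architecture matches the paper's. You use the same curves: your trace computation giving $N=(p+1)^2-t^2$ with $t=\pm1$, resp. $t=\pm2$, is Proposition~\ref{new_ellp_cuv}. You use the same group $G=T\rtimes\langle[-1]\rangle$ of order $2h$; any subgroup $T$ of order $h$ works by Lemma~\ref{sub}, so cyclicity is irrelevant. You make the same count of at most $4h=2(r+1)$ ramified rational places (Lemma~\ref{split_place}); these form up to four $G$-orbits of size $h$, not two, but your count is right. You also use the same space $V\subseteq\mathcal{L}(t(z)^E_\infty)$ and the same optimality computation.

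\textbf{The gap is in Step~3}, which is where the paper's actual work lies (Proposition~\ref{find_e_i}). Your sentence saying that on a fibre each $f\in V$ is a combination of the $e_j$ with constant coefficients, ``so one erased coordinate is recovered from the other $r$'', does not prove locality. To solve for the $r$ coefficients from $r$ values you need every $r\times r$ submatrix of $(e_j(P_{i,k}))$ to be invertible (Lemma~\ref{local}). Your choice ``$e_j\in\mathcal{L}(D)$ with pole order at most $r+1$'' does not ensure this.

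To see why, take $D=(z)^E_\infty$ of degree $r+1$ and a nonzero $f\in\mathcal{L}(D)$ vanishing at $r$ points of the fibre over $z=\alpha_i$. Then $(f)+D-\sum_{k\le r}P_{i,k}$ is effective of degree $1$, hence a rational place $R$. It follows that $(f/(z-\alpha_i))^E=R-P_{i,r+1}$, and Lemma~\ref{P=Q} forces $f=c(z-\alpha_i)$. So the minors on that fibre are all invertible exactly when $z-\alpha_i\notin\operatorname{span}(e_j)$, which with $e_0=1$ means $z\notin\operatorname{span}(e_j)$. You never impose this condition. Likewise, the $F$-linear independence of the $e_j$, which you need for $k=rt+1$, is asserted rather than proved. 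Riemann--Roch alone does not give it, and the Ma--Xing functions you cite are built differently: their poles lie on a split fibre, compensated by a modified AG code.

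The missing idea is the paper's choice of the $e_j$. Put all poles on the ramified orbit $T=\{P_1,\dots,P_h=\mathcal{O}\}$, where $(z)^E_\infty=2\sum_{P\in T}P$. Take $e_1=1$, and for $i=\mu h+\nu$ take $e_i$ with pole divisor exactly $(\mu+1)(P_1+\cdots+P_\nu)+\mu(P_{\nu+1}+\cdots+P_h)$; these exist by a Riemann--Roch counting argument.
\begin{itemize}
\item All these $e_i$ lie in $\mathcal{L}((z)^E_\infty-\mathcal{O})$, which has degree $r$, so their span avoids $z$ automatically.
\item Their distinct pole patterns give $F$-linear independence via the strict triangle inequality at a suitable $P_\nu$.
\item A combination vanishing at $r$ points of a fibre would have divisor $\sum_{k\le r}P_{i,k}-((z)^E_\infty-\mathcal{O})$. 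Dividing by $z-\alpha_i$ gives a function with divisor $\mathcal{O}-P_{i,r+1}$, contradicting Lemma~\ref{P=Q}.
\end{itemize}

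Finally, your alternative of putting the pole of $z$ on an unused split fibre can cost one fibre. The stated range $m\le\lceil N/(r+1)\rceil-2$ is only guaranteed when the poles sit on the ramified orbit $T$, unless you also bring in the modified-AG-code trick.
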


\begin{remark}
    For a prime power \(q\), there also exists an optimal \( q^2 \)-ary \([m(r+1), rt+1, (m-t)(r+1)]_{q^2}\) locally repairable code with locality \( r = h|A| - 1 \) or \( r = 2h - 1 \) provided that \(|A|\), \( q \) and \(N\) satisfy \(N = q^2 + 2q\) and \(|A| = 2,3,6\) for \(q = 3u^2 + 3u + 1\) with \(u\in\mathbb{Z}\). 
\end{remark}

    \begin{table*}[!t]
    \caption{Optimal (classical) locally repairable codes via automorphism groups of algebraic curves and algebraic function fields}
	\begin{center}
    \resizebox{1\textwidth}{!}{
	\begin{tabular}{cccccc}
        \toprule[1.2pt]
			Ref. & Length $n$ & Dimension $k$ & Locality $r$ & The range of $m$ and $t$ & Characteristic $p$ \\
        \midrule[1.2pt]
        
        Thm. \ref{opt_LRC_one_hA}(i) & $m(r+1)$ & $rt+1$ & \makecell{$h|A|-1$ and $h\mid\mid p^2 + 2p$ \\ $|A|=2,3,6$} & $1 \leq t < m \leq \left\lceil \frac{p^2 + 2p-2h}{r+1} \right\rceil-1$ & $p = 3u^2+3u+1$ \\

        \midrule[0.5pt]
        
        Thm. \ref{opt_LRC_one_hA}(ii) & $m(r+1)$ & $rt+1$ & \makecell{$h|A|-1$ and $h\mid\mid p^2 + 2p - 3$ \\ $|A|=2,4$} & $1 \leq t < m \leq \left\lceil \frac{p^2 + 2p - 3 -2h}{r+1} \right\rceil-1$ & \makecell{\( p \equiv 1 \pmod{4} \) \\ and \(p = v^2 + 1\)} \\

        \midrule[0.5pt]
        
        Thm. \ref{opt_LRC_one_2h}(i) & $m(r+1)$ & $rt+1$ & $2h-1$\text{ and }$h\mid p^2 + 2p$ & $1 \leq t < m \leq \left\lceil \frac{p^2 + 2p}{r+1} \right\rceil-2$ & $p = 3u^2+3u+1$ \\

        \midrule[0.5pt]
        
        Thm. \ref{opt_LRC_one_2h}(ii) & $m(r+1)$ & $rt+1$ & $2h-1$\text{ and }$h\mid p^2 + 2p - 3$ & $1 \leq t < m \leq \left\lceil \frac{p^2 + 2p - 3}{r+1} \right\rceil-2$ & \makecell{\( p \equiv 1 \pmod{4} \) \\ and \(p = v^2 + 1\)}\\ 

        \midrule[0.5pt]

         \cite[Thm. 1]{Optimal_Locally_Repairable_Codes_Via_Elliptic_Curves} & $3m$ & $2t+1$ & $2$ & $0\le t<m\le\left\lfloor\frac{q+2\sqrt{q}}{3}\right\rfloor$ & \makecell{\(p = 3\) \\ \(p \equiv 2 \pmod{3}\)} \\
        \midrule[0.5pt]
        
          \cite[Thm. 2]{Optimal_Locally_Repairable_Codes_Via_Elliptic_Curves} & $m(r+1)$ & $rt-(r-1)$ & $3,5,7,11,23$ & $1\le t\le m\le\left\lfloor\frac{q+2\sqrt{q}-r-2}{r+1}\right\rfloor$ & \makecell{\(p = 2, 3\) \\ \( p \equiv 3 \pmod{4} \) \\ \(p \equiv 2 \pmod{3}\)} \\
        \midrule[0.5pt]
        
          \cite[Thm. V.3]{Construction_of_Optimal_Locally_Repairable_Codes_via_Automorphism_Groups_of_Rational_Function_Fields} & $m(r+1)$ & $rt$ & $(r+1)\mid(q+1)$ & $1\le t\le m\le\frac{q+1}{r+1}$ & Arbitrary \\
        \midrule[0.5pt]
        
           \cite[Sec. IV]{Construction_of_Optimal_Locally_Repairable_Codes_via_Automorphism_Groups_of_Rational_Function_Fields} & $m(r+1)$ & $rt$ & \makecell{$p^v-1,up^v-1$ \\ or $(r+1)\mid (q-1)$} & \makecell{$1\le t\le m\le L$ \\ $L = \frac{q}{r+1},\frac{q-1}{r+1}\text{ or }\frac{q-p^v}{up^v}$ } & Arbitrary \\
        \midrule[0.5pt]

        \cite[Thm. 1.1]{MA2023105686} & $m(r+1)$ & $rt+1$ & $2h-1\text{ and }h\mid (\sqrt{q}+1)^2$ & $1\le t<m\le\left\lceil\frac{q+2\sqrt{q}-2r-1}{r+1}\right\rceil$ & \makecell{\(p = 2, 3\) \\ \( p \equiv 3 \pmod{4} \) \\ \(p \equiv 2 \pmod{3}\)} \\
        \midrule[0.5pt]

        \cite[Thm. 1.2]{MA2023105686} & $m(r+1)$ & $rt+1$ & \makecell{$h^2|A|-1$ and $h\mid \sqrt{q}+1$ \\ $|A|=2,3,4,6,8,12,24$} & $1\le t<m\le\left\lceil\frac{q+2\sqrt{q}-2h^2-r}{r+1}\right\rceil$ & \makecell{\(p = 2, 3\) \\ \( p \equiv 3 \pmod{4} \) \\ \(p \equiv 2 \pmod{3}\)} \\
        \midrule[0.5pt]
        
        \cite[Thm. 3(i)]{11223270} & $4m$ & $3m-2$ & $3$ & $1\le m\le\left\lfloor\frac{q+4\sqrt{q}-9}{4}\right\rfloor$ & $p\equiv 5$ or $7 \pmod{8}$\\
        \midrule[0.5pt]
        
        \cite[Thm. 3(ii)]{11223270} & $3m$ & $2m-1$ & $2$ & $1\le m\le\left\lfloor\frac{N(F)}{3}\right\rfloor-1$ & $p\neq3$ and $3\mid q-1$\\

        \midrule[0.5pt]
        
        \cite[Thm. 3(iii)]{11223270} & $5m$ & $4m-2$ & $4$ & $1\le  m\le\big\lfloor\frac{q+4\sqrt{q}-11}{5}\big\rfloor$ & \makecell{\(p = 5\) \\ \( p \equiv -1 \pmod{5} \)}\\
        \bottomrule[1.2pt]
	\end{tabular}
    }
    \end{center}
	\label{opt_parameter}
    \end{table*}

Another important contribution of this paper is that we introduce a general framework for constructing locally repairable codes with two recovering sets (i.e. availability \(2\)) via automorphism groups of elliptic function fields. Following the construction framework in \cite{MA2023105686}, we find that it seems unable to obtain locally repairable codes with two recovering sets and flexible parameters via elliptic function fields. To overcome this limitation, we propose a novel construction for determining the functions \(e_i\) in the construction of locally repairable codes. This ensures that the function spaces corresponding to the two distinct recovering sets lie within a common Riemann–Roch space, thereby guaranteeing that their intersection is sufficiently large. Using this framework and utilizing maximal elliptic function fields and specific elliptic curves introduced by us, we construct several distinct families of \(q\)-ary locally repairable codes with two recovering sets and length \(O(q+2\sqrt{q})\). Our main results about locally repairable codes with two recovering sets are summarized as below.

\begin{theorem}\label{lrc_two_max}
    Let \( q = p^a \) for any prime \( p \) and any even integer \( a > 0 \). Let \(\sqrt{q}+1 = \prod_\ell \ell^{h_\ell}\)  be the prime factorization of \( \sqrt{q} + 1 \). For any integer \( d_0 \) with \( 1 \leq d_0 < n \) and any positive divisor \( h \) of \( \sqrt{q} + 1 \) with
    \begin{equation*}
        h^2|A_1||A_2|>\prod_\ell \ell^{2\min\{2\nu_\ell(h),h_\ell\}}-h^2,
    \end{equation*}
    there exists a \( q \)-ary \([n = mh^2|A_1||A_2|, k, d;(r_1,r_2)]_q\) locally repairable code where \( r_1 = h^2|A_1| - 1 \), \(r_2 = h^2(|A_2|-1)\), \(d \ge n - L \ge d_0\), 
    \[
        k \geq \bigg\lfloor\frac{n-d_0}{r_1 + 1}\bigg\rfloor r_1 + \bigg\lfloor\frac{n-d_0}{r_2 + h^2}\bigg\rfloor r_2  + 2 - L,
    \]
    and
    \[
        L = \max\left\{\left\lfloor\frac{n-d_0}{r_1 + 1}\right\rfloor (r_1+1),\left\lfloor\frac{n-d_0}{r_2 + h^2}\right\rfloor (r_2 + h^2)\right\}, 
    \]
    for any integer \( m \) satisfying \(1 \leq m \leq \left\lceil \frac{q+2\sqrt{q}+1-2h^2}{h^2|A_1||A_2|} \right\rceil - 1 \), provided that \(|A_1|\), \(|A_2|\) and \( p \) satisfy one of the following cases:
     
    \begin{itemize}
        \item[(i)] \(|A_1| = 2,4,8\) and \(|A_2| = 3\) for \( p = 2 \);  
        \item[(ii)] \(|A_1| = 2,4\) and \(|A_2| = 3\) for \( p = 3 \);  
        \item[(iii)] \(|A_1| = 2\) and \(|A_2| = 3\) for \( p \equiv 2 \pmod{3} \) and \( p \neq 2 \).
    \end{itemize}
    (\(\nu_\ell(\cdot)\) is the normalized valuation of \(\mathbb{Z}\) corresponding to the prime number \(\ell\).)
\end{theorem}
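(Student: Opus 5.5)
The plan is to take a maximal elliptic function field $E/\mathbb{F}_q$ with $N(E)=q+2\sqrt{q}+1$; for $q=p^a$ with $a$ even, such a field is supersingular. Its group of rational points satisfies $E(\mathbb{F}_q)\cong (\mathbb{Z}/(\sqrt{q}+1))^2$. Following Ma--Xing, $\mathrm{Aut}(E/\mathbb{F}_q)$ contains the translation group $T_E\cong E(\mathbb{F}_q)$ together with the automorphisms fixing $O$, whose group $\mathrm{Aut}(E,O)$ contains the subgroups $A$ listed in cases (i)--(iii). For $p=2$ these are of order $2,3,4,8,24$; for $p=3$ of order $2,3,4,12$; for $p\equiv 2 \pmod 3$ of order $2,3,6$.

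The first step is to fix the subgroups. Let $H\cong(\mathbb{Z}/h)^2$ be the $h$-torsion subgroup of $E(\mathbb{F}_q)$, and put $G_1=T_H\rtimes A_1$ and $G_2=T_H\rtimes A_2$, where $A_1,A_2\le \mathrm{Aut}(E,O)$ have orders as in the statement. We need $A_1\cap A_2=\{1\}$ and $|\langle A_1,A_2\rangle|=|A_1||A_2|$, which is why $|A_2|=3$ is paired with $|A_1|\in\{2,4,8\}$. Then $G=T_H\rtimes\langle A_1,A_2\rangle$ has order $h^2|A_1||A_2|$. The two fixed subfields $F_1=E^{G_1}$ and $F_2=E^{G_2}$ are rational, and the orbits of $G_1$ and $G_2$ partition each $G$-orbit.

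The second step is to select $m$ full $G$-orbits of rational places of size exactly $h^2|A_1||A_2|$. A rational point has a nontrivial stabiliser in $G$ exactly when $\sigma(P)=P+Q$ for some $\sigma\in\langle A_1,A_2\rangle\setminus\{1\}$ and $Q\in H$, i.e.\ when $(\sigma-1)P\in H$. The inequality in the hypothesis, $h^2|A_1||A_2|>\prod_\ell \ell^{2\min\{2\nu_\ell(h),h_\ell\}}-h^2$, is exactly the counting device that bounds the number of such bad points. One uses that $(\sigma-1)^{-1}(H)\cap E(\mathbb{F}_q)$ lies in $E[h^2]\cap E(\mathbb{F}_q)$, whose size is $\prod_\ell \ell^{2\min\{2\nu_\ell(h),h_\ell\}}$. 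Removing $O$'s orbit and these points, together with the pole-support places, leaves at least $m\le\lceil (q+2\sqrt q+1-2h^2)/(h^2|A_1||A_2|)\rceil-1$ regular orbits. I would carry this out by orbit counting as in Ma--Xing's proof of Theorem~1.2.

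The third step is the function space, which is the main novelty. In each $G_1$-orbit, of size $r_1+1=h^2|A_1|$, a codeword restricted to it will be a polynomial in a generator $z_1$ of $F_1$ of degree $<r_1$ with coefficients in $F$. The $r_2$-recovery works the same way using $G_2$-orbits of size $h^2|A_2|=r_2+h^2$. Here a polynomial of degree $\le |A_2|-2$ in $z_2$ with coefficients from an $h^2$-dimensional space gives locality $r_2=h^2(|A_2|-1)$.

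The key construction is to choose the functions $e_i$ so that both spans $V_1=\{\sum_i f_i(z)\,e_i z_1^{j}\}$ and $V_2$ lie in one common Riemann--Roch space $\mathcal{L}(LO')$ for a suitable divisor of degree $L$, and then to take $\mathcal{C}=\mathrm{ev}(V_1\cap V_2)$. The dimension bound
\[
\dim(V_1\cap V_2)\ge \dim V_1+\dim V_2-\ell(L\cdot)
\]
with $\ell(L\cdot)=L$ gives $k\ge \lfloor\frac{n-d_0}{r_1+1}\rfloor r_1+\lfloor\frac{n-d_0}{r_2+h^2}\rfloor r_2+2-L$. Here the counts $\lfloor (n-d_0)/(r_1+1)\rfloor$ and $\lfloor (n-d_0)/(r_2+h^2)\rfloor$ are the degrees allowed in $F_1$ and $F_2$. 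Injectivity of evaluation together with $d\ge n-L\ge d_0$ follows since every function lies in $\mathcal{L}$ of a degree-$L$ divisor with $L<n$. Local recovery on each orbit is standard Lagrange interpolation in $z_1$ (respectively $z_2$), because the $e_i$ and the $F$-coefficients are constant on orbits.

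I expect the main obstacle to be constructing $e_i$ with the right pole orders so that both $V_1$ and $V_2$ sit inside the same $\mathcal{L}(D)$ with the claimed dimensions. This likely requires taking $e_i$ as $G_1$-invariant (respectively $G_2$-invariant) multiples of suitable functions with poles only at $O$. Their pole orders must fill the gaps $0,\dots,r_1$ (respectively in steps of $h^2$), checked via ramification of $E/F_1$ and $E/F_2$ at $O$.
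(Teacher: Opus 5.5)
Your proposal has a genuine gap at the second recovering set, and it puts the hypothesis $h^2|A_1||A_2|>\prod_\ell \ell^{2\min\{2\nu_\ell(h),h_\ell\}}-h^2$ in the wrong place.

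\textbf{The second recovering set.} The two recovering sets of a point $P$ must be disjoint. Since $G_1\cap G_2=T_H$, the orbits $G_1(P)$ and $G_2(P)$ meet exactly in $T_H(P)=\{P\oplus Q:Q\in H\}$. The first set $G_1(P)\setminus\{P\}$ already contains $T_H(P)\setminus\{P\}$, so the second set must be $G_2(P)\setminus T_H(P)$, which has exactly $r_2=h^2(|A_2|-1)$ points. On $G_2(P)$ a codeword equals $\sum_{\ell=1}^{r_2}c_\ell e^{(2)}_\ell$ with the $c_\ell$ constant on the orbit. You therefore need the $r_2\times r_2$ matrix $\bigl(e^{(2)}_\ell(P')\bigr)$, with $P'\in G_2(P)\setminus T_H(P)$, to be invertible.

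This is \emph{not} a consequence of the usual ``any $r$ of the $r+1$ points'' property, because $h^2$ points are removed rather than one. The paper proves it as Proposition~\ref{find_e_i}(iv). It uses the specific divisors $(z_2)^E_\infty=|A_2|\sum_{Q\in H}Q$ and $(e^{(2)}_\ell)^E_\infty$ spread evenly over the points of $H$. A vanishing combination $g$ would give $(g/(z_2-\alpha))^E=\sum_{Q\in H}Q-\sum_{Q\in H}(P\oplus Q)$, which is principal only if $[h^2]P=\mathcal{O}$. So every evaluation point must avoid $E[h^2]$, and \emph{this} is what the hypothesis secures. Elements of $\operatorname{Aut}(E,\mathcal{O})$ act as group automorphisms, and the translations are by $H=E[h]\subseteq E[h^2]$. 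A completely split $G$-orbit meeting $E[h^2]$ would therefore lie entirely in $E[h^2]\setminus H$, since it is disjoint from $H$, the ramified orbit of $\mathcal{O}$. But that set has only $\prod_\ell \ell^{2\min\{2\nu_\ell(h),h_\ell\}}-h^2<h^2|A_1||A_2|$ rational places.

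\textbf{Your Step~2 and the local-recovery mechanism.} Your Step~2 uses the hypothesis to bound ramified points via $(\sigma-1)^{-1}(H)\subseteq E[h^2]$, and that containment is false. For $\sigma=-1$ one gets $(\sigma-1)^{-1}(E[h])=E[2h]$, which is not contained in $E[h^2]$ when $h$ and $p$ are odd (already for $h=1$). The admissible range of $m$ comes from Lemma~\ref{split_place} alone, without the hypothesis.

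Your description of local recovery is also inverted. $z_1$ is constant on $G_1$-orbits, so there is nothing to interpolate in $z_1$. The $e_i$ are not constant on orbits, since they are linearly independent over $F_1$. Recovery means solving for the coefficients $c_\ell=g_\ell(z_1(P))$ from the linear system with entries $e_\ell(P')$. Its solvability (Proposition~\ref{find_e_i}(iii)) again rests on the concrete pole structure: a combination vanishing at $r_1$ points of the orbit would give $(f/(z_1-\alpha))^E=\mathcal{O}-P'$ with $P'\neq\mathcal{O}$ rational, contradicting Lemma~\ref{P=Q}.

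The same balanced choice of $(e_\ell)^E_\infty$ on $H$ is what places $V_1$ and $V_2$ inside the single space $\mathcal{L}\bigl((L/h^2)\sum_{Q\in H}Q\bigr)$, of dimension exactly $L$. So the step you flag as the main obstacle has to be solved so that it delivers (iii), (iv) and the common Riemann--Roch space together, not only the pole-order bookkeeping.
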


\begin{theorem}\label{lrc_two_nonmax}
    Let \( q \) be a prime power. For any integer \( d_0 \) with \( 1 \leq d_0 < n \) and any positive divisor \( h \) with \(h\mid\mid q^2+2q\), there exists a \( q^2 \)-ary \([n = mh|A_1||A_2|, k, d;(r_1,r_2)]_{q^2}\) locally repairable code where \( r_1 = h|A_1| - 1 \), \(r_2 = h(|A_2|-1)\), \(d \ge n - L \ge d_0\), 
    \[
        k \geq \bigg\lfloor\frac{n-d_0}{r_1 + 1}\bigg\rfloor r_1 + \bigg\lfloor\frac{n-d_0}{r_2 + h}\bigg\rfloor r_2  + 2 - L,
    \]
    and
    \[
        L = \max\left\{\left\lfloor\frac{n-d_0}{r_1 + 1}\right\rfloor (r_1+1),\left\lfloor\frac{n-d_0}{r_2 + h}\right\rfloor (r_2 + h)\right\}, 
    \]
    for any integer \( m \) satisfying \(1 \leq m \leq \left\lceil \frac{q^2+2q-2h}{h|A_1||A_2|} \right\rceil - 1 \), provided that \(|A_1|\), \(|A_2|\) and \( q \) satisfy that \(|A_1| = 2\) and \(|A_2| = 3\) for \(q = 3u^2 + 3u + 1\) with \(u\in\mathbb{Z}\). 
\end{theorem}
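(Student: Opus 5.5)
We prove Theorem~\ref{lrc_two_nonmax} by instantiating the general two-recovering-set framework with an ordinary elliptic curve $E/\mathbb{F}_{q^2}$ having $N = q^2+2q$ rational points. Such a curve is available when $q = 3u^2+3u+1$ (the same family as in Theorem~\ref{opt_LRC_one_hA}(i)). For this family the $j$-invariant is $0$, so $\mathrm{Aut}(E,O)$ is cyclic of order $6$. This lets us take $A_1$ of order $2$ (the map $P\mapsto -P$) and $A_2$ of order $3$ (the map $(x,y)\mapsto(\omega x,y)$ with $\omega$ a primitive cube root of unity), both fixing $O$.

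\textbf{Step 1 (groups).} Because $h\mid\mid N$, the group $E(\mathbb{F}_{q^2})$ has a subgroup $T_h$ of order $h$, and it is stable under every automorphism in $A_1$ and $A_2$. Take $G_1 = T_h\rtimes A_1$ and $G_2 = T_h\rtimes A_2$. These are subgroups of $\mathrm{Aut}(F/\mathbb{F}_{q^2})$ of orders $2h$ and $3h$, and $G_1\cap G_2 = T_h$.

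\textbf{Step 2 (evaluation points).} Choose $m$ orbits of the group $T_h\rtimes (A_1A_2)$, of total size $mh|A_1||A_2|$, each acting freely. We must avoid the points with nontrivial stabilizer: the translates of $\ker(1-\sigma)$ for $\sigma\in A_1A_2$ (the $2$-, $3$-torsion-type fixed sets), together with one further orbit reserved for the pole divisor. This explains the bound $m\le \lceil (q^2+2q-2h)/(h|A_1||A_2|)\rceil-1$.

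Each orbit of the big group splits in two ways: into $|A_2|$ orbits of $G_1$, each of size $r_1+1 = 2h$, and into $|A_1|$ orbits of $G_2$, each of size $3h$. Recovery along a $G_2$-orbit uses a subgroup structure whose local codes have $r_2 = h(|A_2|-1)$ information positions per $h(|A_2|-1)+h$ points. Disjointness of the two recovering sets of a point $P$ holds up to the common $T_h$-coset. We remove this overlap by taking the second recovering set to be the $G_2$-orbit minus the $T_h$-coset of $P$. Its size is $h|A_2|-h = r_2$, and it meets the first set only in $P$.

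\textbf{Step 3 (function spaces).} Let $z_i$ generate the fixed fields $F^{G_i}$, with pole divisors supported at the reserved orbit, and let $e_{i,j}$ ($j<r_i$) be basis functions of $F/F^{G_i}$. Here we use the paper's new choice of the $e_{i,j}$: both families are constructed from products of the same $T_h$-invariant functions, multiplied by small-degree $A_i$-equivariant coordinates. This places both spaces $V_1=\{\sum_j e_{1,j}f_j(z_1)\}$ and $V_2$ inside a common Riemann--Roch space $\mathcal{L}(G)$ with $\deg G = L$. Define $\mathcal{C}$ as the evaluation code of $V_1\cap V_2$.

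Locality follows as usual. On each $G_i$-orbit, a function in $V_i$ restricts to a vector in a space of dimension at most $r_i$, and $f(P)$ is a fixed linear combination of the other values. The distance bound $d\ge n-\deg G = n-L\ge d_0$ follows from the choice of $L$ and $\deg$ bounds on the $f_j$. The dimension bound comes from $\dim(V_1\cap V_2)\ge \dim V_1+\dim V_2-\dim\mathcal{L}(G)$. Here $\dim V_i \approx \lfloor (n-d_0)/(r_i+\cdot)\rfloor r_i+1$ and $\ell(G)=L$ for an elliptic curve with $\deg G>0$. This yields $k \ge \lfloor\cdot\rfloor r_1+\lfloor\cdot\rfloor r_2+2-L$, once injectivity of evaluation holds, which it does since $L<n$.

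\textbf{Main obstacle.} The hardest step is Step 3: exhibiting explicit $e_{i,j}$ with pole orders controlled so that each $V_i$ sits inside $\mathcal{L}(G)$ with the stated dimensions, and checking linear independence over $F^{G_i}$. My first attempt is to take $e_{1,j}$ from $\{1,y\}$ and $e_{2,j}$ from $\{1,x,x^2\}$, each twisted by functions on $E/T_h$ via the isogeny $E\to E/T_h$, and to compute pole orders at the reserved orbit.
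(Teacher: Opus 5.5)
Your choice of the second recovering set (the $G_2$-orbit of $P$ minus the coset $T_hP$, of size $r_2=2h$) is the paper's choice. The gap is that ``locality follows as usual'' does not cover it, and this is where the theorem's actual content lies. On a $G_2$-orbit the restricted space has dimension $r_2$ on $r_2+h$ coordinates. A dimension count says nothing about whether a \emph{prescribed} $r_2$-subset is an information set. You need the $r_2\times r_2$ matrix $\bigl(e^{(2)}_\ell(P')\bigr)$ over that subset to be invertible, which is Proposition~\ref{find_e_i}(iv). Its proof runs as follows. A nonzero $g=\sum_\ell c_\ell e^{(2)}_\ell$ vanishing at those $2h$ points has, by degree count, $(g)^E=\sum P'-2\sum_{R\in H}R$. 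Then $g/(z_2-\alpha)$ has divisor $\pm\bigl(\sum_{R\in H}(P\oplus R)-\sum_{R\in H}R\bigr)$, which forces $[h]P=\mathcal{O}$. This is not a formality. Conversely, if $[|H|]P=\mathcal{O}$ the matrix really is singular, since the $e^{(2)}_\ell$ span $\mathcal{L}\bigl((|A_2|-1)\sum_{R\in H}R\bigr)$; that is why Theorem~\ref{lrc_two_max} needs hypothesis \eqref{lrc_two_repair_condition}. The missing step, which is essentially the whole of the paper's proof of this theorem, is checking that no evaluation point is $h$-torsion. This is exactly what $h\mid\mid q^2+2q$ buys. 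It gives $|E[h]|=h$, so $E[h]=H$, the ramified orbit of $\mathcal{O}$ carrying the poles of $z_1,z_2$, and $H$ contains no place lying over a completely split place. You use $h\mid\mid N$ only to get an $A$-stable subgroup of order $h$ and never verify this.

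Your Step~3 is also left open, and the sketch points the wrong way. You put the poles on ``one further orbit reserved for the pole divisor.'' But to get $V_1$ and $V_2$ into one $\mathcal{L}(G)$ with $\deg G=L=\max\{2ht_1,3ht_2\}$, the pole fibre of one $z_i$ must lie inside that of the other. A free $G_1$-orbit and a free $G_2$-orbit never contain one another. So the support must be a common ramified orbit; in the paper it is $H=T_h\mathcal{O}$ itself, fixed by both $A_i$, with $(z_i)^E_\infty=|A_i|\sum_{R\in H}R$. The $e^{(i)}_\ell$ are not built from $\{1,y\}$ or $\{1,x,x^2\}$ twisted by functions on $E/T_h$. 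They come from Riemann--Roch counting as functions whose pole divisor is \emph{exactly} the staircase divisor $D_\ell$ of \eqref{e_i_condition}, supported on $H$ and of degree $\ell$. That single choice yields:
\begin{itemize}
\item linear independence over $E^{T_HA_i}$;
\item invertibility of all $r_1\times r_1$ minors on $G_1$-orbits;
\item property (iv) above;
\item $V_i\subseteq\mathcal{L}\bigl(L_0\sum_{R\in H}R\bigr)$.
\end{itemize}
All of this is already Proposition~\ref{cons_lrc_two}. The efficient proof invokes it and verifies its hypotheses: $T_H,A_1,A_2$ from Proposition~\ref{new_ellp_cuv}(i) and Lemma~\ref{sub}; $m$ completely split places from Lemma~\ref{split_place}; and the torsion condition above.
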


\begin{remark}
    \begin{itemize}
        \item[(i)] The parameters \(|A_1|\) and \(|A_2|\) in Theorems~\ref{lrc_two_max} and \ref{lrc_two_nonmax} are interchangeable. 
        \item[(ii)] We can provide a special construction of \(q\)-ary \([n, k, d;(5,6)]_q\) and \([n, k, d;(8,3)]_q\) locally repairable codes for any \( q = 4^{2a+1} \) with \( a \in \mathbb{N} \).
    \end{itemize}
\end{remark}

\subsection{Organization}

The paper is organized as follows. In Section~\ref{pre}, we provide some preliminaries about algebraic function fields, algebraic geometry codes, elliptic curves and elliptic function fields, automorphism groups of elliptic curves and elliptic function fields. In Section~\ref{construct}, we will divide this into two parts, and begin with an alternative method for determining the functions \(e_i\) in the construction of locally repairable codes. In the second part, we present several new families of optimal locally repairable codes. In Section~\ref{construct_two}, we first provide a general framework for constructing locally repairable codes with two recovering sets via automorphism groups of elliptic function fields. And then, through this framework, we construct several distinct families of locally repairable codes equipped with two recovering sets. In Section~\ref{conclusion}, we give a summary of the entire article.

\section{Preliminaries}\label{pre}
In this section, we provide a concise review of some preliminaries related to algebraic function fields, algebraic geometry codes, elliptic curves and elliptic function fields, automorphism groups of elliptic curves and elliptic function fields.

\subsection{Algebraic Function Fields}
Let $F/\mathbb{F}_q$ be a function field of genus $g(F)$ with the full constant field $\mathbb{F}_q$. Let $\mathbb{P}_F$ denote the set of places of $F$. The discrete valuation associated with $P\in\mathbb{P}_F$ is denoted by $v_P$. The divisor group of $F/\mathbb{F}_q$ is defined as the free abelian group which is generated by $\mathbb{P}_F$; it is denoted by $\operatorname{Div}(F)$. Assume that $D=\sum_{P\in\mathbb{P}_F}n_PP$ is a divisor such that almost all $n_P=0$. Then the degree of $D$ is defined by $\deg D=\sum_{P\in\mathbb{P}_F}n_P\deg P$ and the support of $D$ by $\operatorname{supp}(D)=\{P\in \mathbb{P}_F:n_P\neq0\}$. For $z\in F$, the principal divisor of $z$ is denoted by $(z)^F$, the zero divisor of $z$ is denoted by $(z)^F_0$ and the pole divisor of $z$ is denoted by $(z)^F_\infty$. Two divisors $D,D^\prime\in\operatorname{Div}(F)$ are said to be linearly equivalent, written $D\sim D^\prime$, if $D=D^\prime+(z)^F$ for some $z\in F\backslash\{0\}$. 

The Riemann-Roch space associated to the divisor $D\in\operatorname{Div}(F)$ is the finite-dimensional $\mathbb{F}_q$-vector space
\[
    \mathcal{L}(D)=\big\{z\in F: (z)^F+D\ge 0 \big\}\cup\big\{0\big\}.
\]
The dimension of $\mathcal{L}(D)$ is given by $\ell(D)$ and it satisfies the Riemann-Roch Theorem \cite[Thm. 1.5.15]{Algebraic_Function_Fields_and_Codes}. Let $\operatorname{Aut}(F/\mathbb{F}_q)$ be the automorphism group of $F$ over $\mathbb{F}_q$, i.e.
\[
    \operatorname{Aut}(F/\mathbb{F}_q)=\big\{\sigma:\sigma \text{ is an } \mathbb{F}_q\text{-automorphism of } F\big\}.
\]
Now let $\mathcal{G}$ be a finite subgroup of $\operatorname{Aut}(F/\mathbb{F}_q)$. The fixed subfield of $F$ with respect to $\mathcal{G}$ is defined by
\[
    F^{\mathcal{G}}=\big\{z\in F:\sigma(z)=z \text{ for all } \sigma\in\mathcal{G}\big\}.
\]
From the Galois theory, $F/F^{\mathcal{G}}$ is a Galois extension with $\operatorname{Gal}(F/F^{\mathcal{G}})=\mathcal{G}$. Moreover, $F^{\mathcal{G}}/\mathbb{F}_q$ is also a function field with the full constant field $\mathbb{F}_q$. By \cite[Lem. 3.5.2]{Algebraic_Function_Fields_and_Codes}, for any automorphism $\sigma\in\operatorname{Gal}(F/F^{\mathcal{G}})$ and any place $P\in\mathbb{P}_F$, then $\sigma(P):=\{\sigma(z):z\in P\}$ is a place of $F$ as well. Let $g(F^{\mathcal{G}})$ denote the genus of $F^{\mathcal{G}}$. Then the Hurwitz Genus Formula\cite[Thm. 3.4.13]{Algebraic_Function_Fields_and_Codes} yields
\begin{equation*}
    2g(F)-2=[F:F^{\mathcal{G}}](2g(F^{\mathcal{G}})-2)+\deg {\rm{Diff}}(F/F^{\mathcal{G}}),
\end{equation*}
where ${\rm{Diff}}(F/F^{\mathcal{G}})$ stands for the different of $F/F^{\mathcal{G}}$.

\subsection{Algebraic Geometry Codes}
For more details of algebraic geometry codes, the reader may refer to \cite{Algebraic-Geometric_Codes}. Let $F/\mathbb{F}_q$ be a function field of genus $g$ with the full constant field $\mathbb{F}_q$. Let $P_1,\cdots,P_n\in\mathbb{P}_F$ be $n$ pairwise distinct rational places of $F$ and $D=\sum_{i=1}^nP_i$. For a divisor $G$ of $F/\mathbb{F}_q$ with $2g-2<\deg G<n$ and ${\rm{supp}}(G)\cap {\rm{supp}}(D)=\emptyset$, the algebraic geometry code associated with the divisors $D$ and $G$ is defined as
\[
    C_{\mathcal{L}}(D,G)=\big\{(x(P_1),\,\cdots,\,x(P_n)):x\in\mathcal{L}(G)\big\}\subseteq\mathbb{F}_q^n,
\]
where $\mathcal{L}(G)$ is the Riemann-Roch space with the dimension $\dim_{\mathbb{F}_q}\mathcal{L}(G)=\deg G+1-g$ from the Riemann-Roch Theorem \cite[Thm. 1.5.15]{Algebraic_Function_Fields_and_Codes}. Then the code $C_{\mathcal{L}}(D,G)$ is an $[n,k,d]_q$ linear code with dimension $k=\dim_{\mathbb{F}_q}\mathcal{L}(G)$ and minimum distance $d\ge n-\deg G$. If $V$ is a subspace of $\mathcal{L}(G)$, then we can define a subcode of $C_{\mathcal{L}}(D,G)$ by 
\[
    C_{\mathcal{L}}(D,V)=\big\{(x(P_1),\,\cdots,\,x(P_n)):x\in V\big\}.
\]
Then the dimension of $C_{\mathcal{L}}(D,V)$ is the dimension of the space $V$ over $\mathbb{F}_q$ and the minimum distance of $C_{\mathcal{L}}(D,V)$ is still lower bounded by $n-\deg G$.

\subsection{Elliptic Curves and Elliptic Function Fields}

Let $q$ be a power of an odd prime $p$. An elliptic curve \( \mathfrak{E} \) defined over a finite field \( \mathbb{F}_q \) can be given by a nonsingular Weierstrass equation
\begin{equation}\label{weierstrass}
    y^2 + a_1xy + a_3y = x^3 + a_2x^2 + a_4x + a_6,
\end{equation}
where \( a_i \) are elements of \( \mathbb{F}_q \). We write $\mathfrak{E}/\mathbb{F}_q$ as an elliptic curve $\mathfrak{E}$ defined over $\mathbb{F}_q$. Denote by $E/\mathbb{F}_q=\mathbb{F}_q(\mathfrak{E})=\mathbb{F}_q(x,y)$ and $\mathfrak{E}(\mathbb{F}_q)$ the elliptic function field of $\mathfrak{E}/\mathbb{F}_q$ over \(\mathbb{F}_q\) and the set of $\mathbb{F}_q$-rational points, respectively. The genus of \( E \) is \( g(E) = 1 \). Let $\mathbb{P}_E$ be the set of all places of $E$ and $\mathbb{P}_E^1=\{P\in\mathbb{P}_E:\deg P =1\}$ be the set of rational places of $E$. There is a one-to-one correspondence between $\mathfrak{E}(\mathbb{F}_q)$ and $\mathbb{P}_E^1$. More specifically, the rational point $(\alpha,\beta)$ on $\mathfrak{E}$ corresponds to the unique common zero of $x-\alpha$ and $y-\beta$, denoted by $P_{\alpha,\beta}$; and the unique infinite place \( \mathcal{O} \) is the common pole of \( x \) and \( y \).  

Let \(N(E) = |\mathbb{P}_E^1|\) i.e. the number of the rational places of \(E/\mathbb{F}_q\). The following lemma gives a bound on the size of $N(E)$, which is called the Hasse-Weil Bound \cite[Thm. 5.2.3]{Algebraic_Function_Fields_and_Codes}.

\begin{lemma}
    Let $E/\mathbb{F}_q$ be an elliptic function field with $N(E)$ rational places. Then we have the following Hasse-Weil Bound
    \begin{equation}\label{Hasse-Weil_Bound}
        |N(E)-q-1|\le 2\sqrt{q}.
    \end{equation}
\end{lemma}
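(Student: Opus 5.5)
The plan is to reduce the Hasse–Weil bound to the Riemann hypothesis for curves over finite fields, which for genus one gives the bound directly. First I relate $N(E)$ to the zeta function of $E/\mathbb{F}_q$. For a function field of genus $g$, the $L$-polynomial $L(t)=(1-t)(1-qt)Z(t)$ has degree $2g$. Its reciprocal roots $\alpha_1,\dots,\alpha_{2g}$ satisfy
\[
N(E)=q+1-\sum_{i=1}^{2g}\alpha_i .
\]
This follows by comparing the coefficient of $t$ in the logarithmic derivative of the Euler product with the coefficient of $t$ in $L(t)$. For $g(E)=1$ this becomes $N(E)=q+1-(\alpha_1+\alpha_2)$, where $L(t)=1-(q+1-N(E))t+qt^2$. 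The constant term of $L$ and the functional equation give $\alpha_1\alpha_2=q$.

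Second, I show $|\alpha_i|=\sqrt{q}$, which is the Riemann hypothesis for $E$. The cleanest route for elliptic curves is Hasse's argument using the $q$-power Frobenius endomorphism $\phi$ of $\mathfrak{E}$. The map $\deg\colon \operatorname{End}(\mathfrak{E})\to\mathbb{Z}$ is a positive definite quadratic form. Moreover $1-\phi$ is separable, so
\[
\deg(1-\phi)=\#\ker(1-\phi)=|\mathfrak{E}(\mathbb{F}_q)|=N(E).
\]
Positive definiteness gives, for all integers $m,n$,
\[
0\le\deg(m-n\phi)=m^2-mn\,(1+\deg\phi-\deg(1-\phi))+n^2\deg\phi .
\]
Substituting $\deg\phi=q$ and $\deg(1-\phi)=N(E)$, the middle coefficient is $q+1-N(E)$. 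A binary quadratic form $m^2-amn+qn^2$ that is nonnegative on $\mathbb{Z}^2$ has discriminant $a^2-4q\le 0$. Applied with $a=q+1-N(E)$, this gives $(q+1-N(E))^2\le 4q$, which is exactly \eqref{Hasse-Weil_Bound}.

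The main obstacle is to justify that $\deg$ is a positive definite quadratic form on $\operatorname{End}(\mathfrak{E})$. The key input is that the dual isogeny satisfies $\hat\psi\psi=[\deg\psi]$. From this, the pairing $\langle\psi,\chi\rangle=\deg(\psi+\chi)-\deg\psi-\deg\chi$ equals $\hat\psi\chi+\hat\chi\psi$ and is bilinear.

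Since the paper only cites this bound as \cite[Thm. 5.2.3]{Algebraic_Function_Fields_and_Codes}, I would invoke that reference for these standard facts rather than reprove them. Alternatively, I would invoke Bombieri–Stepanov's general proof of $|\alpha_i|=\sqrt q$ and specialize it to $g=1$.
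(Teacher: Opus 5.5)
Your argument is correct, but it takes a different route from the one behind the paper's citation. The paper gives no proof of its own. It quotes \cite[Thm. 5.2.3]{Algebraic_Function_Fields_and_Codes}, where $|N-(q+1)|\le 2g\sqrt{q}$ is deduced for arbitrary genus from the Hasse--Weil theorem $|\alpha_i|=\sqrt{q}$ for the reciprocal roots of the $L$-polynomial. That theorem is proved there by the Bombieri--Stepanov method, a Riemann--Roch counting argument over constant field extensions.

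You instead use Hasse's genus-one argument with the $q$-Frobenius $\phi$. Separability of $1-\phi$ gives $\deg(1-\phi)=N(E)$. Nonnegativity of $\deg(m-n\phi)=m^2-(q+1-N(E))mn+qn^2$ then forces the discriminant to be $\le 0$.

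What your route buys is a short argument working directly over $\mathbb{F}_q$. It needs no rationality or functional equation of the zeta function and no constant field extensions. It also matches the Frobenius/endomorphism reasoning the paper itself uses in Proposition~\ref{new_ellp_cuv}. The price is the isogeny machinery and the restriction to $g=1$. That restriction is harmless here, because the paper defines an elliptic function field through a Weierstrass model with the rational place $\mathcal{O}$.

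Three small points on the write-up:
\begin{itemize}
\item Your first paragraph on $L(t)$ is never used. The inequality $(q+1-N(E))^2\le 4q$ is already the statement.
\item Bilinearity of $\langle\psi,\chi\rangle$ does not follow from $\hat\psi\psi=[\deg\psi]$ alone. You also need additivity of the dual, $\widehat{\psi+\chi}=\hat\psi+\hat\chi$, which is the substantive part of the standard theorem you would cite.
\item Going from nonnegativity on $\mathbb{Z}^2$ to the discriminant condition needs the easy homogeneity/density step to $\mathbb{R}^2$.
\end{itemize}
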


If the number $N(E)$ attains the upper bound \eqref{Hasse-Weil_Bound}, then $\mathfrak{E}/\mathbb{F}_q$ is called a maximal elliptic curve and $E/\mathbb{F}_q$ is called a maximal function field. The divisor group of $E/\mathbb{F}_q$ is defined as the free abelian group which is generated by $\mathbb{P}_E$; it is denoted by $\operatorname{Div}(E)$. Two divisors $D,D^\prime\in\operatorname{Div}(E)$ are said to be linearly equivalent, written $D\sim D^\prime$, if $D=D^\prime+(z)^{E}$ for some $z\in E^*$. The group of principal divisors of $E$ is $\operatorname{Princ}(E) = \{(z)^E:z\in E^*\}$. We denote the divisor class group of $\mathfrak{E}$ by $\operatorname{Cl}(E)$, which is defined as
\[
    \operatorname{Cl}(E)=\operatorname{Div}(E)/\operatorname{Princ}(E).
\]
The class of a divisor $D$ in $\operatorname{Cl}(E)$ will be denoted by $[D]$. We define $\operatorname{Div}^0(E)$ as the degree zero subgroup of $\operatorname{Div}(E)$ and define $\operatorname{Cl}^0(E)$ as the degree zero subgroup of $\operatorname{Cl}(E)$. From \cite[Prop. 6.1.7]{Algebraic_Function_Fields_and_Codes}, there is a group isomorphism between \( \mathbb{P}_E^1 \) and \( \operatorname{Cl}^0(E) \) given by
\[
    \Phi : 
    \begin{cases} 
        \mathbb{P}_E^1 \to \operatorname{Cl}^0(E), \\ 
        P \mapsto [P - \mathcal{O}]. 
    \end{cases}
\]
The group operation of \( \mathbb{P}_E^1 \) is defined by \( P \oplus Q = R \iff P + Q \sim R + \mathcal{O} \) for any \( P, Q \in \mathbb{P}_E^1 \). Denote by \([2]P = P \oplus P\) and define \([m+1]P = [m]P \oplus P\) recursively. In fact, \( \mathbb{P}_E^1 \) is an abelian group and the place \( \mathcal{O} \) is the zero element of the group \( \mathbb{P}_E^1 \). Let \(E[n]\) be the subgroup of \(\mathbb{P}_E^1\) consisting of those places \(P\) annihilated by \(n\). The following result is an important conclusion about the rational places of the elliptic function field \cite[Prop. 6.1.6]{Algebraic_Function_Fields_and_Codes}. 

\begin{lemma}\label{P=Q}
    Let \( E/\mathbb{F}_q \) be an elliptic function field and let \( P, Q \) be two rational places of \( E \). Then we have  
    \[
        P \sim Q \text{ if and only if } P = Q.
    \]
\end{lemma}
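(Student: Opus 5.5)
The plan is to prove Lemma~\ref{P=Q} directly from the Riemann--Roch theorem for a genus-one function field. One direction is trivial: if \(P = Q\), then \(P - Q = 0 = (1)^E\), so \(P \sim Q\).

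For the converse, suppose \(P \sim Q\) with \(P \neq Q\). Then there is some \(z \in E^*\) with \(P = Q + (z)^E\), that is, \((z)^E = P - Q\).

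\begin{itemize}
\item Since \((z)^E + Q = P \ge 0\), the element \(z\) lies in \(\mathcal{L}(Q)\).
\item Because \(P \neq Q\), the divisor \((z)^E\) is nonzero. Hence \(z\) is not a constant, since constants have zero divisor.
\item On the other hand, \(\deg Q = 1 > 2g(E) - 2 = 0\). The Riemann--Roch Theorem \cite[Thm. 1.5.15]{Algebraic_Function_Fields_and_Codes} therefore gives \(\ell(Q) = \deg Q + 1 - g(E) = 1\).
\item The constants \(\mathbb{F}_q\) already lie in \(\mathcal{L}(Q)\), so \(\mathcal{L}(Q) = \mathbb{F}_q\).
\end{itemize}

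This contradicts the fact that \(z \in \mathcal{L}(Q)\) is nonconstant. Hence \(P = Q\).

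An equivalent way to phrase the key point: a nonconstant \(z\) whose pole divisor is the single rational place \(Q\) would satisfy \([E : \mathbb{F}_q(z)] = \deg (z)_\infty^E = 1\). Then \(E = \mathbb{F}_q(z)\) would be rational, contradicting \(g(E) = 1\).

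The only step needing care is the computation \(\ell(Q) = 1\). Its justification is the nonspeciality of a degree-one divisor in genus one, which is exactly what the cited form of Riemann--Roch provides. Once that is in place, the rest is immediate.
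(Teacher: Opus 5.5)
Your proof is correct: a function \(z\) with \((z)^E = P - Q \neq 0\) is a nonconstant element of \(\mathcal{L}(Q)\), but Riemann--Roch gives \(\ell(Q) = 1\), so \(\mathcal{L}(Q) = \mathbb{F}_q\). Equivalently, \([E:\mathbb{F}_q(z)] = 1\) would make \(E\) rational. The paper gives no proof of its own and cites \cite[Prop. 6.1.6]{Algebraic_Function_Fields_and_Codes}, which rests on essentially this same standard argument.
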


From \cite{silvermanArithmeticEllipticCurves2009}, two elliptic curves $\mathfrak{E}_1$ and $\mathfrak{E}_2$ over $\mathbb{F}_q$ are called isogenous if there exists a non-constant smooth $\mathbb{F}_q$-morphism from $\mathfrak{E}_1$ to $\mathfrak{E}_2$ mapping the zero element of $\mathfrak{E}_1$ to that of $\mathfrak{E}_2$. A fundamental result in the theory of elliptic curves states that two elliptic curves defined over $\mathbb{F}_q$ are isogenous if and only if they have the same number of $\mathbb{F}_q$-rational points. More precisely, we have the following lemma\cite{waterhouse_abelian_1969}.

\begin{lemma}\label{iso}
    The isogeny classes of elliptic curves over \( \mathbb{F}_q \) for \( q = p^a \) are in one-to-one correspondence with the rational integers \( t \) having \( |t| \leq 2\sqrt{q} \) and satisfying one of the following conditions:
    \begin{itemize}
        \item[(i)] \( (t, p) = 1 \);
        \item[(ii)] If \( a \) is even: \( t = \pm 2\sqrt{q} \);
        \item[(iii)] If \( a \) is even and \( p \not\equiv 1 \pmod{3} \): \( t = \pm \sqrt{q} \);
        \item[(iv)] If \( a \) is odd and \( p = 2 \) or \( 3 \): \( t = \pm p^{\frac{a+1}{2}} \);
        \item[(v)] If either (i) \( a \) is odd or (ii) \( a \) is even and \( p \not\equiv 1 \pmod{4} \): \( t = 0 \).
    \end{itemize}
    The first of these is ordinary, the rest are supersingular. Furthermore, an elliptic curve in the isogeny class corresponding to \( t \) has \( q+1-t \) rational points.
\end{lemma}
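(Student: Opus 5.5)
This is Waterhouse's classification \cite{waterhouse_abelian_1969}, and I would prove it by combining Tate's isogeny theorem with the Honda--Tate classification of simple abelian varieties over finite fields, specialised to dimension one.

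\textbf{Step 1: reduce to Frobenius traces.} For an elliptic curve $\mathfrak{E}/\mathbb{F}_q$, the $q$-power Frobenius $\pi$ satisfies $\pi^2 - t\pi + q = 0$ with $t\in\mathbb{Z}$. Counting fixed points of $\pi$ gives $|\mathfrak{E}(\mathbb{F}_q)| = q+1-t$. The Hasse--Weil bound \eqref{Hasse-Weil_Bound} gives $|t|\le 2\sqrt q$, since the roots of $x^2-tx+q$ have absolute value $\sqrt q$. Tate's theorem, $\operatorname{Hom}(\mathfrak{E}_1,\mathfrak{E}_2)\otimes\mathbb{Z}_\ell \cong \operatorname{Hom}_{\mathrm{Gal}}(T_\ell\mathfrak{E}_1,T_\ell\mathfrak{E}_2)$ for $\ell\ne p$, shows that two curves are isogenous if and only if their Frobenius characteristic polynomials agree, i.e.\ if and only if they have the same $t$. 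So isogeny classes inject into admissible integers $t$, and it remains to decide which $t$ occur.

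\textbf{Step 2: Honda--Tate for dimension one.} Every Weil $q$-number $\pi$ (a root of $x^2-tx+q$ with $|t|\le 2\sqrt q$, or $\pi=\pm\sqrt q$) arises from a unique simple abelian variety up to isogeny. Its endomorphism algebra is a central division algebra over $\mathbb{Q}(\pi)$ whose local invariants are:
\begin{itemize}
\item at a place $v\mid p$: $\operatorname{inv}_v = \frac{v(\pi)}{v(q)}[\mathbb{Q}(\pi)_v:\mathbb{Q}_p]$;
\item at a real place: $\tfrac12$;
\item elsewhere: $0$.
\end{itemize}
The dimension is $\frac12[\mathbb{Q}(\pi):\mathbb{Q}]\,e$, where $e$ is the order of the Brauer class. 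So $t$ is realised by an elliptic curve exactly when this dimension equals $1$, and I would check that condition case by case.

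\textbf{Step 3: the ordinary case, $(t,p)=1$.} Here $\mathbb{Q}(\pi)$ is imaginary quadratic and $p$ splits in it, since $\pi\bar\pi=q$ and $p\nmid\pi+\bar\pi$. One root has valuation $0$ at one prime above $p$, so the invariants are $0$ and $1$, which are integral. Thus $e=1$, the dimension is $1$, and these curves are ordinary. This gives case (i).

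\textbf{Step 4: the supersingular cases, $p\mid t$.} Write $q=p^a$. I would examine each possibility as follows.
\begin{itemize}
\item $t=\pm2\sqrt q$ with $a$ even: then $\pi=\pm\sqrt q\in\mathbb{Q}$ is real, and the invariants $\tfrac12$ at $p$ and at $\infty$ give $e=2$ and dimension $1$. This is case (ii). For $a$ odd, $\pi=\pm\sqrt q$ is instead real quadratic with $e=1$ and dimension $1$, which accounts for the curves with $t=0$ in case (v).
\item $t=0$: $\pi^2=-q$ and $\mathbb{Q}(\pi)=\mathbb{Q}(\sqrt{-q})$. For $a$ odd this field is $\mathbb{Q}(\sqrt{-p})$, $p$ ramifies, and the invariant is $\tfrac12\cdot 2=1$, so $e=1$. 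For $a$ even the field is $\mathbb{Q}(i)$, and we need $p$ not to split, i.e.\ $p\not\equiv1\pmod 4$; otherwise the invariants $\tfrac12$ force dimension $2$. This gives case (v).
\item $t=\pm\sqrt q$ with $a$ even: here $\pi$ is $\sqrt q$ times a primitive cube or sixth root of unity, in $\mathbb{Q}(\sqrt{-3})$. The condition for dimension one is $p\not\equiv 1\pmod 3$, giving case (iii).
\item $t=\pm p^{(a+1)/2}$ with $a$ odd: the discriminant $t^2-4q=p^a(p-4)$ must be negative, which leaves only $p=2,3$. Then $\mathbb{Q}(\pi)=\mathbb{Q}(\sqrt{-1})$ or $\mathbb{Q}(\sqrt{-3})$ with $p$ ramified, and one checks $e=1$. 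This gives case (iv).
\end{itemize}
Finally, every other $t$ with $p\mid t$ is excluded. Any such $\pi$ has $\pi/\sqrt q$ a root of unity by the valuation constraint (a Kronecker-type argument), and this restricts $t^2/q$ to $\{0,1,2,3,4\}$. The values $t^2=2q$ and $t^2=3q$ are what give case (iv) for $p=2,3$.

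\textbf{Main obstacle.} The hardest step is Step 4: correctly computing the Brauer invariants and the splitting behaviour of $p$ in each quadratic field, and proving that no other $t$ divisible by $p$ can occur. I would handle the latter first by showing that $t^2/q$ is a rational algebraic integer from the valuation condition. The existence direction relies on Honda's theorem (CM lifting), which I would cite rather than reprove.
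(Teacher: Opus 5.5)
Your route is the right one. The paper gives no proof of its own and simply cites Waterhouse, whose argument is exactly this Honda--Tate case analysis. Most of your invariant computations are correct: the ordinary case, $t=\pm2\sqrt q$ with $a$ even, both $t=0$ cases, $t=\pm\sqrt q$, and $t=\pm p^{(a+1)/2}$.

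However, your claim in the first bullet of Step 4 is false: for $a$ odd, the real Weil number $\pm\sqrt q$ does not have $e=1$ and dimension $1$. Here $\mathbb{Q}(\pi)=\mathbb{Q}(\sqrt p)$ is real quadratic, so it has \emph{two} real places, each with invariant $\tfrac12$. The unique (ramified) place over $p$ has invariant $\tfrac12\cdot 2=1\equiv 0 \pmod 1$. Hence $e=2$ and the dimension is $\tfrac12\cdot 2\cdot 2=2$. This Weil number gives a simple abelian surface with characteristic polynomial $(x^2-q)^2$, not an elliptic curve.

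You can see this without Honda--Tate. The conjugate of $\sqrt q$ is $-\sqrt q$, so the product of the two roots would be $-q$, which is incompatible with $x^2-tx+q$. As written, your argument attaches two non-conjugate Weil numbers, $\sqrt q$ and $\sqrt{-q}$, to the same trace $t=0$. That contradicts the injectivity you established in Step 1. The correct statement is that $x^2-tx+q$ has a real root only when $t^2=4q$, which forces $a$ even, so real Weil numbers contribute only case (ii). The $t=0$ curves for $a$ odd come from $\pi=\sqrt{-q}$, which your next bullet already treats correctly. So the faulty sentence should simply be deleted.

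You should also say explicitly which constraint drives the exclusion step, because $p\mid t$ alone does not give $v(\pi)=v(q)/2$. For example, $\pi=10-5i$ is a Weil $125$-number with $t=20$, divisible by $5$. Its slopes at the two primes over $5$ are $\tfrac13$ and $\tfrac23$, and $t^2/q=\tfrac{16}{5}$ is not an integer; it belongs to a simple abelian threefold.

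What you need is the requirement $e=1$ when $K=\mathbb{Q}(\pi)$ is imaginary quadratic. If $p$ splits in $K$, integrality of $v(\pi)/v(q)$ forces $v(\pi)\in\{0,v(q)\}$, i.e.\ $p\nmid t$. So when $p\mid t$ there is a single conjugation-stable place over $p$, and $v(\pi)=v(\bar\pi)=v(q)/2$. Then $\pi^2/q$ is a unit all of whose conjugates have absolute value $1$, hence a root of unity $\zeta$ by Kronecker. It follows that $t^2/q=2+\zeta+\bar\zeta$ is a rational integer in $[0,4]$. With these two repairs your case analysis yields exactly (i)--(v).
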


The abelian group structure of $\mathbb{P}_E^1$ follows from \cite[Thm. 3]{Ruck1987301} and \cite[Thm. 9.97]{hirschfeld2008algebraic}.

\begin{lemma}
    Let \( \mathbb{F}_q \) be the finite field with \( q = p^s \) elements. Let \( h = \prod_{\ell} \ell^{h_\ell} \) be a possible number of rational places of an elliptic curve \( E \) over \( \mathbb{F}_q \). Then all the possible groups \( \mathbb{P}^1_E \) are the following
    \[
        \mathbb{Z}/p^{h_p} \mathbb{Z} \times \prod_{\ell \neq p} (\mathbb{Z}/\ell^{a_\ell} \mathbb{Z} \times \mathbb{Z}/\ell^{h_\ell - a_\ell} \mathbb{Z})
    \]
    with
    \begin{itemize}
        \item[(I)] In case (ii) of Lemma \ref{iso}: Each \( a_\ell \) is equal to \( h_\ell / 2 \), i.e., \( \mathbb{P}^1_E \cong \mathbb{Z}/(\sqrt{q} \pm 1)\mathbb{Z} \times \mathbb{Z}/(\sqrt{q} \pm 1)\mathbb{Z} \).
        \item[(II)] In other cases of Lemma \ref{iso}: \( a_\ell \) is an arbitrary integer satisfying \( 0 \leq a_\ell \leq \min\{\nu_\ell(q-1), \left\lceil h_\ell/2\right\rceil\} \). In cases (iii) and (iv) of Lemma \ref{iso}: \( \mathbb{P}^1_E \cong \mathbb{Z}/h\mathbb{Z} \). In cases (v) of Lemma \ref{iso}: if \( q \not\equiv -1 \pmod{4} \), then \( \mathbb{P}^1_E \cong \mathbb{Z}/(q+1)\mathbb{Z} \); otherwise, \( \mathbb{P}^1_E \cong \mathbb{Z}/(q+1)\mathbb{Z} \) or \( \mathbb{P}^1_E \cong \mathbb{Z}/2\mathbb{Z} \times \mathbb{Z}/\frac{q+1}{2}\mathbb{Z} \).
    \end{itemize}
\end{lemma}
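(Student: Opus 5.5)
This lemma is quoted from \cite[Thm. 3]{Ruck1987301} and \cite[Thm. 9.97]{hirschfeld2008algebraic}, so the plan below reconstructs its proof from structural facts about $\mathbb{P}_E^1$. By the isomorphism $\Phi$, $\mathbb{P}_E^1\cong \operatorname{Cl}^0(E)\cong \mathfrak{E}(\mathbb{F}_q)$. The proof proceeds in three parts: an upper bound on the possible group structures, a realization of every allowed structure, and a separate treatment of the supersingular cases.

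\textbf{Upper bound on the structure.} For $\ell\neq p$ we have $\mathfrak{E}[\ell^k](\overline{\mathbb{F}}_q)\cong (\mathbb{Z}/\ell^k\mathbb{Z})^2$, while the $p$-part of $\mathfrak{E}(\overline{\mathbb{F}}_q)$ is cyclic or trivial. Hence any finite subgroup, in particular $\mathfrak{E}(\mathbb{F}_q)$, has the form $\mathbb{Z}/p^{h_p}\mathbb{Z}\times\prod_{\ell\ne p}(\mathbb{Z}/\ell^{a_\ell}\mathbb{Z}\times\mathbb{Z}/\ell^{h_\ell-a_\ell}\mathbb{Z})$. If $\mathfrak{E}[\ell^{a}]\subseteq\mathfrak{E}(\mathbb{F}_q)$, then the Weil pairing, which is Galois-equivariant, forces $\mu_{\ell^a}\subseteq\mathbb{F}_q$. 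This gives $a_\ell\le\nu_\ell(q-1)$. Taking $a_\ell$ to be the smaller exponent also gives $a_\ell\le\lceil h_\ell/2\rceil$, which is weaker than the trivial bound $a_\ell\le\lfloor h_\ell/2\rfloor$.

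\textbf{Realization in the ordinary case.} To show that every such structure occurs, I would use Deuring--Waterhouse theory. Curves in the isogeny class with trace $t$ correspond to orders $\mathcal{O}$ with $\mathbb{Z}[\pi]\subseteq\mathcal{O}\subseteq\mathcal{O}_K$, where $K=\mathbb{Q}(\sqrt{t^2-4q})$ and $\pi$ is the Frobenius. Moreover $\mathfrak{E}(\mathbb{F}_q)\cong\mathcal{O}/(\pi-1)\mathcal{O}$. The $\ell$-part is $\mathbb{Z}/\ell^{a}\times\mathbb{Z}/\ell^{h_\ell-a}$, where $\ell^{a}$ is the largest power of $\ell$ for which $(\pi-1)/\ell^{a}\in\mathcal{O}$. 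Varying the conductor of $\mathcal{O}$ independently at each prime $\ell$ realizes every $a_\ell$ up to the maximal value allowed by $\mathcal{O}_K$. The remaining arithmetic check is that this maximal value agrees with $\min\{\nu_\ell(q-1),\lceil h_\ell/2\rceil\}$. Here one uses $N=q+1-t$, so that $\ell^{2a}\mid N$ and $\ell^a\mid q-1$ together imply $\ell^a\mid t-2$, and hence $(\pi-1)/\ell^a$ is integral over $\mathbb{Z}$.

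\textbf{Supersingular cases.} These are handled directly, and this is the main obstacle, since $\operatorname{End}$ may be a maximal order in a quaternion algebra.
\begin{itemize}
\item Case (ii): $\pi=\pm\sqrt q\in\mathbb{Z}$, so $\pi-1$ is an integer $n=\mp\sqrt q-1$. Then $\mathfrak{E}(\mathbb{F}_q)=\ker(\pi-1)=\mathfrak{E}[n]\cong(\mathbb{Z}/n\mathbb{Z})^2$, giving (I).
\item Cases (iii) and (iv): $\pi^2=\pm q$ or $\pi^2\mp p^{(a+1)/2}\pi+q=0$. A direct computation shows $\gcd(N,q-1)$ is so small that $a_\ell=0$ for every $\ell$, hence the group is cyclic.
\item Case (v): $N=q+1$ and $\pi^2=-q$. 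The only possible obstruction is at $\ell=2$, where $a_2\le\nu_2(q-1)$. This bound permits $a_2=1$ exactly when $4\mid q+1$, giving the two options $\mathbb{Z}/(q+1)\mathbb{Z}$ or $\mathbb{Z}/2\mathbb{Z}\times\mathbb{Z}/\frac{q+1}{2}\mathbb{Z}$.
\end{itemize}
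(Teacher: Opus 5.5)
The paper does not prove this lemma; it quotes it from R\"uck and from Hirschfeld--Korchm\'aros--Torres. Your outline follows the standard argument behind those sources: Waterhouse's description of the possible $\operatorname{End}_{\mathbb{F}_q}(E)$, together with the fact that the $\ell$-part of the smaller invariant factor is the largest $\ell^a$ with $(\pi-1)/\ell^a\in\operatorname{End}_{\mathbb{F}_q}(E)$ (which your use of Lenstra's isomorphism encodes). The necessity direction, case (I), and the ordinary realization are sound, modulo the floor/ceiling point below.

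\textbf{The gap: realization in case (v).} The lemma says $a_\ell$ is ``an arbitrary integer'' in the allowed range, so every listed group must actually occur. When $q\equiv 3\pmod 4$ (that is, $q=p^a$ with $a$ odd and $p\equiv 3\pmod 4$), this means both $\mathbb{Z}/(q+1)\mathbb{Z}$ and $\mathbb{Z}/2\mathbb{Z}\times\mathbb{Z}/\frac{q+1}{2}\mathbb{Z}$ must be realized. You only show the group is one of these two. Your realization step was set up for ordinary $t$ and does not transfer verbatim: in the commutative supersingular case, Waterhouse allows only orders containing $\pi$ that are maximal at $p$. You can close this in either of two ways.
\begin{itemize}
\item Via orders: here $\mathbb{Q}(\pi)=\mathbb{Q}(\sqrt{-p})$ and the admissible orders are $\mathbb{Z}[\sqrt{-p}]$ and $\mathbb{Z}[\frac{1+\sqrt{-p}}{2}]$. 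The element $(\pi-1)/2=\frac{-1+p^{(a-1)/2}\sqrt{-p}}{2}$ lies in the second order but not the first, so the first gives the cyclic group and the second gives the non-cyclic one.
\item Explicitly: $y^2=x^3-x$ and $y^2=x^3+x$ are both supersingular with trace $0$ over $\mathbb{F}_{p^a}$. The first has all of $E[2]$ rational. The second has exactly one rational point of order $2$, since $-1$ is a non-square in $\mathbb{F}_q$, so its group is cyclic.
\end{itemize}

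\textbf{Smaller points.}
\begin{itemize}
\item Floor, not ceiling. Your integrality argument uses $\ell^{2a}\mid N$, so the maximal value attained by $\mathcal{O}_K$ is $\min\{\nu_\ell(q-1),\lfloor h_\ell/2\rfloor\}$. The equality you claim with $\lceil h_\ell/2\rceil$ is false, for example when $h_\ell=1$ and $\ell\mid q-1$. The ceiling version of the lemma is still correct only because $a_\ell$ and $h_\ell-a_\ell$ give the same group, and you should say this explicitly.
\item Case (iii). Frobenius satisfies $\pi^2\mp\sqrt q\,\pi+q=0$; the relation $\pi^2=-q$ belongs to case (v). Here $\gcd(N,q-1)$ can equal $3$, so ``the gcd is small'' only gives $a_3\le 1$. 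You also need $9\nmid N$ whenever $3\mid\gcd(N,q-1)$; indeed $N\equiv 3\pmod 9$ in that situation.
\item Case (v), necessity. The restriction to $4\mid q+1$ comes from $2a_2\le\nu_2(q+1)$, not from $a_2\le\nu_2(q-1)$.
\item Where the quaternion issue lives. A quaternionic $\operatorname{End}_{\mathbb{F}_q}(E)$ arises only when $\pi\in\mathbb{Z}$, i.e.\ in case (ii), which you handle directly. In cases (iii)--(v), $\mathbb{Q}(\pi)$ is imaginary quadratic, so the only real supersingular difficulty is the realization question in (v), not the one you flagged.
\end{itemize}
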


\subsection{Automorphism Groups of Elliptic Curves and Elliptic Function Fields}

We begin by reviewing the automorphism groups of elliptic curves. Let $\mathfrak{E}/\mathbb{F}_q$ be an elliptic curve defined by the Weierstrass equation \eqref{weierstrass}, and denote by $\operatorname{Aut}(\mathfrak{E})$ its group of automorphisms over the algebraic closure $\overline{\mathbb{F}}_q$. Every such automorphism must fix the point at infinity $\mathcal{O}$. The following lemma is a full description of the automorphism group which is given in \cite[Thm. 3.10.1]{silvermanArithmeticEllipticCurves2009}.

\begin{lemma}
    Let \( \mathfrak{E}/\mathbb{F}_q \) be an elliptic curve with \( j \)-invariant \( j(\mathfrak{E}) \). Then the order of \( \operatorname{Aut}(\mathfrak{E}) \) divides 24. More precisely speaking, the order of \( \operatorname{Aut}(\mathfrak{E}) \) is given by the following list:
    \begin{itemize}
        \item[(i)] \( |\operatorname{Aut}(\mathfrak{E})| = 2 \) if \( j(\mathfrak{E}) \neq 0, 1728 \);
        \item[(ii)] \( |\operatorname{Aut}(\mathfrak{E})| = 4 \) if \( j(\mathfrak{E}) = 1728 \) and \( \text{char}(\mathbb{F}_q) \neq 2, 3 \);
        \item[(iii)] \( |\operatorname{Aut}(\mathfrak{E})| = 6 \) if \( j(\mathfrak{E}) = 0 \) and \( \text{char}(\mathbb{F}_q) \neq 2, 3 \);
        \item[(iv)] \( |\operatorname{Aut}(\mathfrak{E})| = 12 \) if \( j(\mathfrak{E}) = 0 = 1728 \) and \( \text{char}(\mathbb{F}_q) = 3 \);
        \item[(v)] \( |\operatorname{Aut}(\mathfrak{E})| = 24 \) if \( j(\mathfrak{E}) = 0 = 1728 \) and \( \text{char}(\mathbb{F}_q) = 2 \).
    \end{itemize}
\end{lemma}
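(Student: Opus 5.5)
The plan is to follow the standard argument of \cite[Thm. 3.10.1]{silvermanArithmeticEllipticCurves2009}. Every automorphism of \(\mathfrak{E}\) fixing \(\mathcal{O}\) comes from a change of Weierstrass coordinates
\[
    x = u^2x' + r,\qquad y = u^3y' + su^2x' + t,\qquad u\in\overline{\mathbb{F}}_q^{*},\ r,s,t\in\overline{\mathbb{F}}_q,
\]
that carries the equation \eqref{weierstrass} to itself. So \(\operatorname{Aut}(\mathfrak{E})\) is in bijection with the tuples \((u,r,s,t)\) satisfying the transformation formulas \(u^i a_i' = a_i + (\text{terms in } r,s,t)\) with \(a_i' = a_i\). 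The proof then counts these tuples. First I reduce \eqref{weierstrass} to a convenient normal form adapted to the characteristic, and then I read off \(j\) from that form.

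\textbf{Characteristic \(\neq 2,3\).} Complete the square and the cube to get \(y^2 = x^3 + Ax + B\). A coordinate change preserving this shape must have \(r=s=t=0\). What remains is \(u^{-4}A = A\) and \(u^{-6}B = B\), with \(j = 1728\cdot 4A^3/(4A^3+27B^2)\).
\begin{itemize}
    \item If \(AB\neq 0\), that is \(j\neq 0,1728\), then \(u^2=1\), giving order \(2\).
    \item If \(B=0\) (\(j=1728\)), then \(u\in\mu_4\), giving order \(4\).
    \item If \(A=0\) (\(j=0\)), then \(u\in\mu_6\), giving order \(6\).
\end{itemize}

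\textbf{Characteristic \(3\) or \(2\) with \(j\neq 0\).} When \(j\neq 0\) in characteristic \(3\) or \(2\), the standard normal forms are \(y^2 = x^3 + a_2x^2 + a_6\) with \(a_2 \neq 0\) (characteristic \(3\)) and \(y^2 + xy = x^3 + a_2x^2 + a_6\) (characteristic \(2\)). Solving the transformation equations in these cases forces \((u,r,s,t)\) to be the identity or \([-1]\), so the order is \(2\).

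\textbf{Characteristic \(3\), \(j=0\).} Use the form \(y^2 = x^3 + a_4x + a_6\) with \(a_4\neq 0\). The conditions become \(s=t=0\), \(u^4=1\), and \(r^3 + a_4r + (1-u^2)a_6 = 0\).
\begin{itemize}
    \item For each of the \(4\) choices of \(u\), the cubic in \(r\) is separable, since its derivative is the nonzero constant \(a_4\).
    \item Hence there are \(4\cdot 3 = 12\) automorphisms.
\end{itemize}

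\textbf{Characteristic \(2\), \(j=0\).} Use the form \(y^2 + a_3y = x^3 + a_4x + a_6\) with \(a_3\neq 0\). The conditions reduce to \(u^3=1\), \(r = s^2\), \(s^4 + a_3s + (1-u)a_4 = 0\), and \(t^2 + a_3t + s^6 + a_4s^2 = 0\).
\begin{itemize}
    \item The quartic in \(s\) is separable, because its derivative is \(a_3\neq 0\), so it has \(4\) roots.
    \item The quadratic in \(t\) is also separable, because its derivative is \(a_3\), so it has \(2\) roots.
    \item This gives \(3\cdot 4\cdot 2 = 24\) automorphisms.
\end{itemize}
Since \(j=0\) coincides with \(j=1728\) in characteristics \(2\) and \(3\), these two counts are exactly cases (iv) and (v).

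\textbf{Divisibility by \(24\) and the main obstacle.} The claim that the order divides \(24\) is then immediate, because every listed order \(2,4,6,12,24\) divides \(24\). I expect the main obstacle to be the characteristic \(2\) and \(3\) cases. There one has to carry out the transformation formulas carefully to derive these reduced conditions on \((u,r,s,t)\), and then confirm that each equation has distinct roots so the counts are exact. I would handle this by writing out the \(a_i'\) formulas explicitly for the chosen normal form and checking the derivative of each resulting polynomial, as indicated above.
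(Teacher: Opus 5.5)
Your proposal is correct and takes essentially the paper's route. The paper gives no argument of its own and only cites \cite[Thm. 3.10.1]{silvermanArithmeticEllipticCurves2009}, whose proof is exactly the Weierstrass coordinate-change count you reconstruct. That includes the characteristic $2$ and $3$ normal-form computations: your reduced conditions for $j=0$ and your separability checks via the nonzero constant derivatives $a_4$ and $a_3$ correctly give $12$ and $24$.
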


Now turn to the corresponding function field $E/\mathbb{F}_q$ of $\mathfrak{E}/\mathbb{F}_q$ and let $\operatorname{Aut}(E/\mathbb{F}_q)$ be the group of its $\mathbb{F}_q$-automorphisms. For any $\sigma \in \operatorname{Aut}(E/\mathbb{F}_q)$ and any place $P \in \mathbb{P}_E$, we know that $\sigma(P)$ is also a place of $E$\cite[Lem. 3.5.2]{Algebraic_Function_Fields_and_Codes}. We define $\operatorname{Aut}(E,\mathcal{O})$ as the subgroup of $\operatorname{Aut}(E/\mathbb{F}_q)$ fixing the infinite place $\mathcal{O}$. This subgroup represents the intersection of the curve's automorphism group with the function field's $\mathbb{F}_q$-automorphisms, i.e. $\operatorname{Aut}(E,\mathcal{O}) = \operatorname{Aut}(\mathfrak{E}) \cap \operatorname{Aut}(E/\mathbb{F}_q)$. Furthermore, for each rational place $Q \in \mathbb{P}_E^1$, the translation-by-$Q$ map $\tau_Q$ defined by $\tau_Q(P) = P \oplus Q$ induces an $\mathbb{F}_q$-automorphism. The collection of all such maps, i.e. $T_E = \{ \tau_Q:Q\in \mathbb{P}_E^1 \}$, is called the translation group and is isomorphic to $\mathbb{P}_E^1$. The following lemmas \cite[Thm. 3.1 and Prop. 3.2]{MA2023105686} characterize the full automorphism group of an elliptic function field and its subgroup structures.

\begin{lemma}\label{ell_func_auto}
    Let \( E/\mathbb{F}_q \) be an elliptic function field. The automorphism group of elliptic function field \( E \) over \( \mathbb{F}_q \) is the semidirect product of the translation group \( T_E \) and the stabilizer \( \operatorname{Aut}(E, \mathcal{O}) \) of the infinite place \( \mathcal{O} \), i.e., 
    \[
        \operatorname{Aut}(E/\mathbb{F}_q) = T_E \rtimes \operatorname{Aut}(E, \mathcal{O}).
    \]
    The group law of \( \operatorname{Aut}(E/\mathbb{F}_q) \) is given by \( (\tau_P\alpha) \cdot (\tau_Q\beta) = \tau_{P\oplus\alpha(Q)} \cdot \alpha\beta \) for any \( \tau_P, \tau_Q \in T_E \) and \( \alpha, \beta \in \operatorname{Aut}(E, \mathcal{O}) \).
\end{lemma}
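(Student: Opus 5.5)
We prove the lemma by showing that every $\sigma\in\operatorname{Aut}(E/\mathbb{F}_q)$ factors uniquely as $\tau_P\alpha$, and then reading off the group law. The starting point is to let $\sigma$ act on rational places.

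\textbf{Step 1 (factorization).} Since $\sigma$ is an $\mathbb{F}_q$-automorphism, it permutes places and preserves degree, so $P:=\sigma(\mathcal{O})$ is a rational place. The translation $\tau_P$ is an $\mathbb{F}_q$-automorphism of $E$ with $\tau_P(\mathcal{O})=P$. Set $\alpha:=\tau_P^{-1}\sigma=\tau_{-P}\sigma$. Then $\alpha(\mathcal{O})=\mathcal{O}$, so $\alpha\in\operatorname{Aut}(E,\mathcal{O})$ and $\sigma=\tau_P\alpha$. This gives $\operatorname{Aut}(E/\mathbb{F}_q)=T_E\cdot\operatorname{Aut}(E,\mathcal{O})$.

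\textbf{Step 2 (trivial intersection).} Suppose $\tau_Q$ fixes $\mathcal{O}$. Then $Q=\tau_Q(\mathcal{O})=\mathcal{O}$, so $\tau_Q=\mathrm{id}$. Hence $T_E\cap\operatorname{Aut}(E,\mathcal{O})=\{\mathrm{id}\}$, and the decomposition $\sigma=\tau_P\alpha$ is unique.

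\textbf{Step 3 (conjugation and the group law).} The key identity is
\[
\alpha\tau_Q\alpha^{-1}=\tau_{\alpha(Q)}\qquad\text{for } \alpha\in\operatorname{Aut}(E,\mathcal{O}).
\]
To prove it, we first show that any $\alpha$ fixing $\mathcal{O}$ is a homomorphism of the group $(\mathbb{P}_E^1,\oplus)$. The automorphism $\alpha$ maps principal divisors to principal divisors, since $\alpha((z)^E)=(\alpha(z))^E$ up to the convention for the direction of the action. Hence $P\oplus Q=R$, i.e. $P+Q\sim R+\mathcal{O}$, implies $\alpha(P)+\alpha(Q)\sim\alpha(R)+\mathcal{O}$, which says $\alpha(P)\oplus\alpha(Q)=\alpha(R)$.

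With this, evaluate on an arbitrary rational place $R$:
\[
\alpha\tau_Q\alpha^{-1}(R)=\alpha\bigl(\alpha^{-1}(R)\oplus Q\bigr)=R\oplus\alpha(Q).
\]
So $\alpha\tau_Q\alpha^{-1}$ and $\tau_{\alpha(Q)}$ agree on all rational places.

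Agreement on rational places alone does not force two automorphisms to be equal. To conclude equality as automorphisms, we use that both are automorphisms of $E$ and compare them with $\tau_{\alpha(Q)}^{-1}\alpha\tau_Q\alpha^{-1}$. This composite fixes $\mathcal{O}$ and is a group automorphism of the curve that is the identity on the translation part. More cleanly, we argue geometrically over $\overline{\mathbb{F}}_q$, where the same identity holds on all points, so the two maps coincide.

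It follows that $T_E$ is normal and $\operatorname{Aut}(E/\mathbb{F}_q)=T_E\rtimes\operatorname{Aut}(E,\mathcal{O})$. The stated group law is then
\[
(\tau_P\alpha)(\tau_Q\beta)=\tau_P\,(\alpha\tau_Q\alpha^{-1})\,\alpha\beta=\tau_P\tau_{\alpha(Q)}\,\alpha\beta=\tau_{P\oplus\alpha(Q)}\,\alpha\beta,
\]
using $\tau_P\tau_{Q'}=\tau_{P\oplus Q'}$.

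\textbf{Main obstacle.} The delicate point is that automorphisms fixing $\mathcal{O}$ respect $\oplus$, and that equality of these automorphisms on points implies equality of the field automorphisms. I would handle both by passing to the constant field extension $E\overline{\mathbb{F}}_q$. There each automorphism corresponds to a morphism of the curve, and such a morphism is determined by its action on the infinitely many $\overline{\mathbb{F}}_q$-points.
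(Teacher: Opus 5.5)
Your proof is correct. The paper does not prove this lemma at all: it quotes it from Ma and Xing \cite[Thm.~3.1]{MA2023105686}. What you give is the standard proof of that result:
\begin{itemize}
\item factor $\sigma=\tau_{\sigma(\mathcal{O})}\alpha$;
\item check that $T_E\cap\operatorname{Aut}(E,\mathcal{O})=\{\mathrm{id}\}$;
\item get normality from $\alpha\tau_Q\alpha^{-1}=\tau_{\alpha(Q)}$.
\end{itemize}
For the last point, $\alpha((z)^E)=(\alpha(z))^E$ and $\alpha(\mathcal{O})=\mathcal{O}$ together force $\alpha$ to respect $\oplus$. With the paper's convention $\sigma(P)=\{\sigma(z):z\in P\}$ the first identity holds exactly, so there is no direction ambiguity. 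You are also right that agreement on $\mathbb{P}_E^1$ is not enough. For example, over $\mathbb{F}_2$ the curve $y^2+y=x^3+x+1$ has $\mathcal{O}$ as its only rational place, yet $[-1]\colon y\mapsto y+1$ is nontrivial.

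The one thing to repair is how the last step is written. The sentence saying the composite is ``the identity on the translation part'' is not an argument; delete it and let the base change do the work.
\begin{itemize}
\item Over $\bar{E}=E\overline{\mathbb{F}}_q$, the extension of $\alpha$ fixes the unique place above $\mathcal{O}$. So the same divisor argument shows it respects $\oplus$ on every place of $\bar{E}$.
\item The extension of $\tau_Q$ acts on every place by $R\mapsto R\oplus Q$. This holds because $\tau_Q$ is the translation morphism, given by the addition formulas, and not merely a map on $\mathbb{P}_E^1$.
\item Hence $\rho=\tau_{\alpha(Q)}^{-1}\alpha\tau_Q\alpha^{-1}$ fixes every place of $\bar{E}$. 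Then $(\rho(z))=(z)$, so $\rho(z)=c_z z$ with $c_z\in\overline{\mathbb{F}}_q^{*}$.
\item For nonconstant $z$, compare $\rho(z+1)=c_z z+1$ with $c_{z+1}(z+1)$; this gives $c_z=1$. So $\rho=\mathrm{id}$ on $\bar{E}$, and therefore on $E$.
\end{itemize}
The same remark is what justifies $\tau_P\tau_{Q'}=\tau_{P\oplus Q'}$ as field automorphisms. You use that identity in your final display without comment, and it too is only visible on points.
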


\begin{lemma}\label{ell_func_auto_sub}
    Let \( E/\mathbb{F}_q \) be an elliptic function field and let \( G \) be a subgroup of  \( \operatorname{Aut}(E/\mathbb{F}_q) \). Then, we have \( G \cong (T_E \cap G) \rtimes \pi(G) \), i.e., every subgroup of \( \operatorname{Aut}(E/\mathbb{F}_q) \) is isomorphic to a semidirect product of a subgroup of \( T_E \) and a subgroup of \( \operatorname{Aut}(E, \mathcal{O}) \).
\end{lemma}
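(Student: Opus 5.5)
The plan is to use the projection homomorphism onto the stabilizer of $\mathcal{O}$. Then, after an explicit conjugation by a translation (possibly defined over $\overline{\mathbb{F}}_q$), I would show that the extension it defines splits.

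\textbf{Step 1 (the projection).} By Lemma~\ref{ell_func_auto}, every element of $\operatorname{Aut}(E/\mathbb{F}_q)$ can be written uniquely as $\tau_P\alpha$ with $\tau_P\in T_E$ and $\alpha\in\operatorname{Aut}(E,\mathcal{O})$. Define $\pi(\tau_P\alpha)=\alpha$. The group law $(\tau_P\alpha)(\tau_Q\beta)=\tau_{P\oplus\alpha(Q)}\alpha\beta$ shows that $\pi$ is a homomorphism with kernel $T_E$. Restricting $\pi$ to $G$ gives an exact sequence
\[
1\to K:=T_E\cap G\to G\to H:=\pi(G)\to 1 .
\]
The kernel $K$ is abelian and normal in $G$. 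Conjugation in $G$ induces the natural action of $H$ on $K$, namely $\alpha\tau_Q\alpha^{-1}=\tau_{\alpha(Q)}$. The statement therefore reduces to showing that this extension splits, because a split extension is the semidirect product $K\rtimes H$.

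\textbf{Step 2 (the cocycle and a conjugation trick).} For each $\alpha\in H$, fix an element $\tau_{c(\alpha)}\alpha\in G$. Then $G=\{\tau_{c(\alpha)\oplus k}\alpha:\alpha\in H,\ k\in K\}$, and closure of $G$ under multiplication says that $c(\alpha\beta)\equiv c(\alpha)\oplus\alpha(c(\beta))$ modulo $K$. For any $R\in\mathfrak{E}(\overline{\mathbb{F}}_q)$, the translation $\tau_R$ is an automorphism of $\overline{\mathbb{F}}_q E$, and a direct computation gives
\[
\tau_{-R}(\tau_{c}\alpha)\tau_R=\tau_{c\ominus R\oplus\alpha(R)}\alpha .
\]
Translations commute with each other, so $G'=\tau_R^{-1}G\tau_R\cong G$ satisfies $G'\cap T=K$ and $\pi(G')=H$. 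Suppose $R$ can be chosen so that $c(\alpha)\ominus(1-\alpha)(R)\in K$ for every $\alpha\in H$. Then each $\alpha$ lies in $G'$, because $K\subseteq G'$. Hence $H$ is a subgroup of $G'$ meeting $K$ trivially, and $G'=K\rtimes H$ internally. It follows that $G\cong K\rtimes H$ abstractly, which is all the lemma claims.

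\textbf{Step 3 (solving for $R$).} This step is the main obstacle. When $H$ is cyclic, which covers all subgroups when $j\neq 0,1728$ or the characteristic is not $2,3$ (orders $2,4,6$), let $\alpha_0$ be a generator. The map $1-\alpha_0$ is a nonzero isogeny, so it is surjective on $\mathfrak{E}(\overline{\mathbb{F}}_q)$, and I pick $R$ with $(1-\alpha_0)(R)=c(\alpha_0)$. The cocycle congruence then propagates the required condition to every power $\alpha_0^i$. This works because $(1-\alpha_0^i)=(1+\alpha_0+\cdots+\alpha_0^{i-1})(1-\alpha_0)$ matches the cocycle recursion $c(\alpha_0^i)\equiv\sum_{j<i}\alpha_0^j c(\alpha_0)$ modulo $K$.

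For the noncyclic $H$ occurring in characteristic $2$ and $3$ (quaternion-type groups and $\mathrm{SL}_2(\mathbb{F}_3)$-type groups inside the groups of order $12$ and $24$), I would first try the same idea. Choose generators of $H$, solve for $R$ on one generator, and use the fact that these automorphisms fix only finitely many points to adjust $R$ by an element of $\ker(1-\alpha_0)$ so that the second generator's condition also holds modulo $K$. Failing that, I would argue cohomologically. The obstruction class lies in $H^2(H,K)$, and it maps to zero in $H^2(H,\mathfrak{E}(\overline{\mathbb{F}}_q))$ because $c$ is an honest cocycle with values in $\mathfrak{E}(\overline{\mathbb{F}}_q)/K$. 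I would then combine this with the divisibility of $\mathfrak{E}(\overline{\mathbb{F}}_q)$ and the vanishing of $\mathfrak{E}(\overline{\mathbb{F}}_q)^H$-related terms to conclude that the class vanishes.
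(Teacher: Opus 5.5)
The paper gives no argument for this lemma; it imports it from \cite{MA2023105686}, so your proposal has to stand on its own. Steps 1 and 2 are correct. This covers the reduction to splitting of $1\to K\to G\to\pi(G)\to 1$ and the conjugation formula. It also covers the observation that a point $R\in\mathfrak{E}(\overline{\mathbb{F}}_q)$ with $(1-\alpha)(R)\equiv c(\alpha)$ modulo $K$ for every $\alpha\in\pi(G)$ yields a complement in $\tau_R^{-1}G\tau_R\cong G$. The cyclic case is also right, and it can be shortened. Since $1-\alpha_0\neq 0$, we have $1+\alpha_0+\cdots+\alpha_0^{n-1}=0$ in $\operatorname{End}(\mathfrak{E})$. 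Hence any lift $\tau_c\alpha_0\in G$ already has order $n$ and generates a complement, with no base change.

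The genuine gap is the noncyclic case, which you leave open. It is not marginal in this paper: the groups $A_1A_2$ with $|A_1|=8$, $|A_2|=3$ for $p=2$ and with $|A_1|=4$, $|A_2|=3$ for $p=3$ in Theorem~\ref{lrc_two_max} are $\mathrm{SL}_2(\mathbb{F}_3)$ and the dicyclic group of order $12$. Neither of your two routes is an argument.
\begin{itemize}
\item Moving $R$ inside the finite set $(1-\alpha_0)^{-1}(c(\alpha_0)+K)$ comes with no reason why some element of it should satisfy the congruence for a second generator.
\item In the cohomological route, knowing that the extension class dies in $H^2(\pi(G),\mathfrak{E}(\overline{\mathbb{F}}_q))$ only shows that it equals $\delta[c]$ for $[c]\in H^1(\pi(G),\mathfrak{E}(\overline{\mathbb{F}}_q)/K)$. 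But $[c]=0$ is precisely the solvability of your congruences, i.e.\ the original problem, and divisibility of $\mathfrak{E}(\overline{\mathbb{F}}_q)$ does not make this $H^1$ vanish.
\end{itemize}

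The missing idea is a fixed-point argument on $\mathfrak{E}'=\mathfrak{E}/K$.
\begin{itemize}
\item \textbf{Setup.} Since $\alpha(K)=K$, each $\tau_{c(\alpha)}\alpha$ descends to an automorphism $\bar g_\alpha=\tau_{\bar c(\alpha)}\alpha'$ of $\mathfrak{E}'$. Here $\alpha'\neq 1$ for $\alpha\neq 1$; otherwise the nonconstant isogeny $\alpha-1$ would map $\mathfrak{E}(\overline{\mathbb{F}}_q)$ into the finite group $K$. So $G/K\cong\pi(G)$ acts faithfully on $\mathfrak{E}'$. A point of $\mathfrak{E}'(\overline{\mathbb{F}}_q)$ fixed by every $\bar g_\alpha$ is exactly the image of an $R$ solving your congruences.
\item \textbf{Supersingularity.} A noncyclic $\pi(G)$ forces $j=0$ in characteristic $2$ or $3$. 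So $\mathfrak{E}$, and the isogenous curve $\mathfrak{E}'$, are supersingular.
\item \textbf{Characteristic $2$.} Here $\pi(G)\supseteq Q_8$ contains the central involution $-1$. The fixed points of $\bar g_{-1}$ are the solutions of $[2]\bar R=\bar c(-1)$. This is a single point, because $[2]$ is bijective on $\mathfrak{E}'(\overline{\mathbb{F}}_q)$. Since $\bar g_{-1}$ is central in $G/K$, every $\bar g_\alpha$ fixes that point.
\item \textbf{Characteristic $3$.} Here $\pi(G)$ is the whole group of order $12$, whose Sylow $3$-subgroup $\langle\beta\rangle$ is normal. From $(1-\beta')(1-(\beta')^2)=3$, the map $1-\beta'$ is injective on points, so $\bar g_\beta$ has a unique fixed point. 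Normality of $\langle\bar g_\beta\rangle$, together with $\operatorname{Fix}(\bar g_\beta^2)=\operatorname{Fix}(\bar g_\beta)$, makes this point fixed by all of $G/K$.
\end{itemize}
Equivalently, $H^1(\pi(G),\mathfrak{E}'(\overline{\mathbb{F}}_q))=0$ by restriction to Sylow subgroups. Cyclic Sylow subgroups are handled by your own computation (norm zero, $1-s$ surjective). The only noncyclic one, $Q_8$ in characteristic $2$, acts on a uniquely $2$-divisible group. With this in place, your Step 2 finishes the proof.
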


Lemma \ref{ell_func_auto_sub} states that for any subgroup $G$ of $\operatorname{Aut}(E/\mathbb{F}_q)$, there exist a subgroup $T$ of $T_E$ and a subgroup $A$ of $\operatorname{Aut}(E, \mathcal{O})$ such that $G \cong T \rtimes A = TA$. Conversely, given a subgroup $T$ of $T_E$ and a subgroup $A$ of $\operatorname{Aut}(E, \mathcal{O})$, the product $TA$ is a subgroup of $\operatorname{Aut}(E/\mathbb{F}_q)$ if and only if a specific condition holds. This condition is provided by the following lemma\cite[Prop. 3.3]{MA2023105686}.

\begin{lemma}\label{sub}
    Let \( T \) be a subgroup of the translation group \( T_E \) and let \( A \) be a subgroup of \( \operatorname{Aut}(E, \mathcal{O}) \). Then \( TA \) is a subgroup of \( \operatorname{Aut}(E / \mathbb{F}_q) \) if and only if \( \tau_{\sigma^{-1}(Q)} \in T \) for all \( \sigma \in A \) and \( \tau_Q \in T \).
\end{lemma}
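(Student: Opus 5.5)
The statement is a characterization: for subgroups \(T\le T_E\) and \(A\le\operatorname{Aut}(E,\mathcal{O})\), the product \(TA\) is a subgroup if and only if \(\tau_{\sigma^{-1}(Q)}\in T\) for all \(\sigma\in A\) and \(\tau_Q\in T\). My plan rests on one conjugation identity derived from the group law of Lemma~\ref{ell_func_auto}, namely
\[
\sigma^{-1}\tau_Q\sigma=\tau_{\sigma^{-1}(Q)}\qquad (\sigma\in\operatorname{Aut}(E,\mathcal{O}),\ Q\in\mathbb{P}_E^1).
\]
I will prove this from the law \((\tau_P\alpha)(\tau_Q\beta)=\tau_{P\oplus\alpha(Q)}\alpha\beta\). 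Taking \(P=\mathcal{O}\), \(\alpha=\sigma^{-1}\) and \(\beta=\sigma\) gives \(\sigma^{-1}\tau_Q\sigma=\tau_{\sigma^{-1}(Q)}\cdot\mathrm{id}\). Alternatively, I can check it directly on rational places. Since \(\sigma\) fixes \(\mathcal{O}\) and sends principal divisors to principal divisors, it is a group automorphism of \((\mathbb{P}_E^1,\oplus)\). Hence \(\sigma^{-1}(\sigma(P)\oplus Q)=P\oplus\sigma^{-1}(Q)\). This identity on places determines the automorphism, by the faithfulness underlying the semidirect decomposition in Lemma~\ref{ell_func_auto}.

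For sufficiency, assume \(T\) is stable under conjugation by \(A\) in the sense above. Replacing \(\sigma\) by \(\sigma^{-1}\) shows that \(\tau_{\sigma(Q)}\in T\) as well, so \(\sigma T\sigma^{-1}=T\) for every \(\sigma\in A\). Then \(A\) normalizes \(T\), and the standard argument shows \(TA\) is a subgroup:
\begin{itemize}
\item[(i)] It is closed under products: \((\tau_P\alpha)(\tau_Q\beta)=\tau_P(\alpha\tau_Q\alpha^{-1})\alpha\beta=\tau_{P\oplus\alpha(Q)}\,\alpha\beta\in TA\).
\item[(ii)] It is closed under inverses: \((\tau_P\alpha)^{-1}=\alpha^{-1}\tau_{-P}=\tau_{\alpha^{-1}(-P)}\alpha^{-1}\in TA\).
\end{itemize}

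For necessity, suppose \(TA\) is a subgroup. Then for \(\sigma\in A\) and \(\tau_Q\in T\), the element \(\sigma^{-1}\tau_Q\sigma\) lies in \(TA\), and it equals \(\tau_{\sigma^{-1}(Q)}\) by the identity. So \(\tau_{\sigma^{-1}(Q)}=\tau_R\beta\) for some \(\tau_R\in T\) and \(\beta\in A\).

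The step that needs care is uniqueness of this decomposition, \(T_E\cap\operatorname{Aut}(E,\mathcal{O})=\{\mathrm{id}\}\). A translation \(\tau_R\) fixing \(\mathcal{O}\) must satisfy \(R=\mathcal{O}\), which is where Lemma~\ref{P=Q} enters. Given this, \(\tau_{\sigma^{-1}(Q)}\tau_R^{-1}=\beta\in T_E\cap\operatorname{Aut}(E,\mathcal{O})\), so \(\beta=\mathrm{id}\). Therefore \(\tau_{\sigma^{-1}(Q)}=\tau_R\in T\), which is the required condition.
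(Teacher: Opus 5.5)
Your proof is correct. The paper gives no proof of this lemma; it imports it from \cite[Prop.~3.3]{MA2023105686}. Your argument is the natural one: you read off $\sigma^{-1}\tau_Q\sigma=\tau_{\sigma^{-1}(Q)}$ from the group law in Lemma~\ref{ell_func_auto}, so the hypothesis says exactly that $A$ normalizes $T$, which gives sufficiency. Necessity then follows from $\sigma^{-1}\tau_Q\sigma\in TA$ together with uniqueness of the decomposition, i.e.\ $T_E\cap\operatorname{Aut}(E,\mathcal{O})=\{\mathrm{id}\}$, which is already part of the semidirect-product statement.

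Drop the ``alternative'' check of the conjugation identity on rational places, because it is not valid as stated. An automorphism of $E/\mathbb{F}_q$ is not determined by its action on $\mathbb{P}_E^1$. For example, on $y^2=x^3-x$ over $\mathbb{F}_3$ every rational place is $2$-torsion, so the nontrivial automorphism $y\mapsto -y$ fixes all of them. That verification would have to be done on geometric points, but your derivation from the group law already suffices.
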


For an elliptic function field $E/\mathbb{F}_q$ and a subgroup $G$ of its automorphism group $\operatorname{Aut}(E/\mathbb{F}_q)$, the Galois extension $E/E^G$ is ramified at only a few places, as shown by the following lemma\cite[Prop. 4.1]{MA2023105686}.

\begin{lemma}\label{split_place}
    Let \( E/\mathbb{F}_q \) be an elliptic function field with \( N(E) \) rational places. Let \( T \) be a subgroup of the translation group \( T_E \) and let \( A \) be a nontrivial subgroup of \( \operatorname{Aut}(E, \mathcal{O}) \) such that \( G = TA \) is a subgroup of \( \operatorname{Aut}(E/\mathbb{F}_q) \). Let \( |G| = r + 1 \) and \( F = E^G \). Then there are at most \( r + 1 + 2|T| \) rational places of \( E \) that are ramified in \( E/F \). All unramified rational places of \( E \) split completely in \( E/F \).
\end{lemma}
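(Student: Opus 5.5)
The statement is Lemma~\ref{split_place}. My plan is to count ramified rational places by analysing the inertia groups of the automorphisms in $G=TA$ that have fixed points. Since $E/F$ is Galois with group $G$, a place $P$ of $E$ is ramified if and only if its inertia group is nontrivial. For a rational place $P$ the residue field is $\mathbb{F}_q$, so the decomposition group equals the inertia group and both equal $G_P=\{\sigma\in G:\sigma(P)=P\}$. This gives the second claim immediately: if $P$ is rational and unramified, then $G_P$ is trivial. Hence the orbit of $P$ has $|G|=r+1$ distinct places, all rational, lying over a place of $F$ of degree $1$ with ramification index $1$. Since $\sum e_i f_i=[E:F]=r+1$ and there are $r+1$ places above it, the place splits completely.

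For the bound, I write each element of $G$ as $\sigma=\tau_Q\alpha$ with $\tau_Q\in T$ and $\alpha\in A$. A rational place $P$ is ramified precisely when $\tau_Q\alpha(P)=P$ for some $\sigma\neq\mathrm{id}$, that is, when $\alpha(P)\oplus Q=P$.
\begin{itemize}
\item If $\alpha=\mathrm{id}$, then $Q=\mathcal{O}$ and $\sigma=\mathrm{id}$, since translations by nonzero points have no fixed points. So only elements with $\alpha\neq\mathrm{id}$ contribute.
\item If $\alpha\neq\mathrm{id}$, then $\alpha$ is a group automorphism of $\mathbb{P}_E^1$, so the fixed-point equation reads $(1-\alpha)(P)=Q$.
\end{itemize}
I then count the solutions $P$ of this equation over all $\alpha\in A\setminus\{\mathrm{id}\}$ and $\tau_Q\in T$.

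To do the counting, I use that $A$ is cyclic in the relevant characteristics; in general, I reduce to the ramification structure of $E/E^T$ and $E^T/E^G$. The cleaner route is the tower $E\supseteq E^T\supseteq E^{TA}=F$. The extension $E/E^T$ is unramified, since translations act freely. Moreover $E^T$ is again an elliptic function field, $E/E^T$ is an isogeny with kernel $T$, and $A$ descends to $E^T$ as automorphisms fixing the image of $\mathcal{O}$; this uses Lemma~\ref{sub}, because $A$ normalizes $T$.

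The places of $E$ ramified in $E/F$ are exactly those lying over places of $E^T$ ramified in $E^T/F$. Each such place has exactly $|T|$ rational preimages in $E$, since $E/E^T$ is unramified of degree $|T|$ and the preimages are translates. So it suffices to bound the number of rational places of $E^T$ fixed by some nontrivial $\bar\alpha\in\bar A\cong A$.

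On the curve $E^T$, I apply the Hurwitz formula (stated in the preliminaries) to $E^T/(E^T)^{A}$. Since $A$ is nontrivial, the fixed field is rational, so
\[0=|A|(-2)+\deg\mathrm{Diff}.\]
This gives $\deg\mathrm{Diff}=2|A|$. Each ramified rational place $R$ with ramification index $e_R\ge2$ contributes at least $e_R-1\ge e_R/2$ to the different. The orbits of ramified places are then controlled: the orbit of the zero place has size $1$ and $e=|A|$, and the remaining ramified orbits together contribute $\sum(|A|-|A|/e_R)$.

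From this I aim for the bound "at most $|A|+2$ ramified rational places of $E^T$, counted via orbits". I will try to get it through the explicit orbit structure $\sum_{\text{orbits}}(1-1/e)=2$ that the Hurwitz formula forces when the quotient has genus $0$, in the tame case. Then $|\text{ramified places}|\le 1+\sum_{\text{other orbits}}|A|/e_i$, and I check that this is at most $|A|/|T|\cdot|T|+2$ in the standard cases.

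Lifting back to $E$ multiplies the count by $|T|$. This gives at most $|T|(|A|+2)=|G|+2|T|=r+1+2|T|$, as required.

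The main obstacle is the orbit counting on $E^T$, particularly in the wild cases $p=2,3$, where $\mathrm{Diff}$ exponents exceed $e-1$. There I would fall back on direct fixed-point counting of $1-\bar\alpha$, whose kernel has size $\deg(1-\bar\alpha)\le4$. This bounds the total number of fixed points over nontrivial $\bar\alpha$, grouped by subgroup, by $|A|+2$.
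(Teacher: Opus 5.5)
Your first part is fine: for a rational place the decomposition and inertia groups coincide, so an unramified rational place has a free orbit of $r+1$ rational places and splits completely. The reduction to $E^T$ is also sound. $E/E^T$ is unramified, $E^T$ is elliptic, $A$ descends to $\mathrm{Gal}(E^T/F)\cong A$, and over each rational place of $E^T$ there are at most $|T|$ rational places of $E$. So a bound of $|A|+2$ on $E^T$ would give $|T|(|A|+2)=r+1+2|T|$.

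\textbf{The gap.} That bound of $|A|+2$ is the entire content of the lemma, and it is never proved. You say you ``aim for'' it and plan to check it ``in the standard cases'' from the tame identity $\sum(1-1/e)=2$. For $p=2,3$ you only assert that a fixed-point count gives $|A|+2$, and that is not automatic. Summing $|\ker(1-\bar\alpha)|\le 4$ over the nontrivial elements gives $4(|A|-1)$. Even with the exact kernel sizes $4,3,3,1,1$ for $|A|=6$, the naive sum is $12>8$. You would therefore need the precise overlaps of these kernels in each group $A$, which may be non-abelian in characteristic $2$ and $3$. Also, the inequality you propose to sum, $e_R-1\ge e_R/2$, bounds the number of ramified fibres, not the number of ramified places.

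\textbf{The missing observation.} Wild ramification is not an obstacle. Dedekind's different theorem gives $d(R)\ge e(R)-1\ge 1$ at every ramified place, tame or wild, and this inequality is all the count needs. On $E^T/F$:
\begin{itemize}
\item $\deg\mathrm{Diff}=2|A|$;
\item the place $\mathcal{O}'$ contributes at least $|A|-1$;
\item every other ramified rational place contributes at least $1$.
\end{itemize}
Hence there are at most $|A|+1$ other ramified rational places and at most $|A|+2$ in all, in every characteristic. With this one line your argument is complete.

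\textbf{Comparison with the paper.} The paper does not prove the lemma; it quotes it from Ma and Xing. The bound $r+1+2|T|$ is exactly what the same count gives directly on $E/F$, without passing to $E^T$:
\begin{itemize}
\item $g(F)=0$ because $\mathcal{O}$ ramifies, so $\deg\mathrm{Diff}(E/F)=2(r+1)$;
\item the $|T|$ translates of $\mathcal{O}$ have ramification index $|A|$ and contribute at least $|T|(|A|-1)=r+1-|T|$;
\item each remaining ramified rational place contributes at least $1$, so there are at most $r+1+|T|$ of them.
\end{itemize}
Your detour through the isogeny is correct but buys nothing here.
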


In \cite{Optimal_Locally_Repairable_Codes_Via_Elliptic_Curves} and \cite{MA2023105686}, the authors investigated maximal elliptic function fields and subgroups of the automorphism group $\operatorname{Aut}(E / \mathbb{F}_q)$ satisfying Lemma \ref{sub}, where the subgroup \(A\) of \( \operatorname{Aut}(E, \mathcal{O}) \) is nontrivial, and pioneeringly constructed several families of locally repairable codes with flexible localities via maximal elliptic function fields. In this paper, aiming to obtain locally repairable codes with a wider variety of localities and long lengths, we now study certain elliptic curves which approach maximal and their automorphism groups \( \operatorname{Aut}(E, \mathcal{O}) \). Many of the things we talk about below are known to experts. However, we have been unable to found any one work that contains all of these facts and their proofs. Therefore, we prove the facts that we need, as well as some related extensions here. 

\begin{proposition}\label{new_ellp_cuv}
    Let \( p \) be a  prime number.
    \begin{itemize}
        \item[(i)] Let \( \mathfrak{E}/\mathbb{F}_{p^2} \) be an elliptic curve over \( \mathbb{F}_{p^2} \) defined by the equation \( y^2 = x^3 + b \) for some \( b \in \mathbb{F}_{p}^* \) and \( E/\mathbb{F}_{p^2} \) be the elliptic function field of \( \mathfrak{E}/\mathbb{F}_{p^2} \). Then \(N(E) = {p^2} + 2p\) if and only if \(p = 3u^2 + 3u + 1\) with \(u\in\mathbb{Z}\). In this case, we have \(|\operatorname{Aut}(E, \mathcal{O})| = 6\). 
        \item[(ii)] Let \( \mathfrak{E}/\mathbb{F}_{p^2} \) be an elliptic curve over \( \mathbb{F}_{p^2} \) defined by the equation \( y^2 = x^3 + ax \) for some \( a \in \mathbb{F}_{p}^* \) and \( E/\mathbb{F}_{p^2} \) be the elliptic function field of \( \mathfrak{E}/\mathbb{F}_{p^2} \). Then \(N(E) = {p^2} + 2p - 3\) if and only if \( p \equiv 1 \pmod{4} \) and \(p = v^2 + 1\) with \(v\in\mathbb{Z}\). In this case, we have \(|\operatorname{Aut}(E, \mathcal{O})| = 4\). 
    \end{itemize}
\end{proposition}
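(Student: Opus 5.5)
The plan is to reduce everything to point counts over the prime field $\mathbb{F}_p$ and then pass to $\mathbb{F}_{p^2}$. Since $b\in\mathbb{F}_p^*$ (resp.\ $a\in\mathbb{F}_p^*$), the curve $\mathfrak{E}$ is defined over $\mathbb{F}_p$. Write $|\mathfrak{E}(\mathbb{F}_p)| = p+1-t$. The Frobenius eigenvalues $\pi,\bar\pi$ then satisfy $\pi+\bar\pi=t$ and $\pi\bar\pi=p$, so
\[
N(E)=p^2+1-(\pi^2+\bar\pi^2)=p^2+1-(t^2-2p).
\]
Consequently $N(E)=p^2+2p$ holds iff $t^2=1$, and $N(E)=p^2+2p-3$ holds iff $t^2=4$. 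Note first that $p=2,3$ must be excluded in (i), since the curve is singular there. In (ii) the curve is singular for $p=2$, and in both parts these small primes fail the stated arithmetic conditions anyway.

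For (i), the $j=0$ curve $y^2=x^3+b$ is supersingular when $p\equiv 2\pmod 3$. Then $t=0$ and $N(E)=p^2+2p+1\neq p^2+2p$, so necessarily $p\equiv1\pmod3$. In that case $\pi$ lies in $\mathbb{Z}[\zeta_3]$ (by Deuring, or concretely via the Gauss/Jacobi-sum formula for the count). Hence $4p=t^2+3s^2$ for some integer $s$. Imposing $t=\pm1$ forces $s$ odd; writing $s=2u+1$ gives $4p=12u^2+12u+4$, i.e.\ $p=3u^2+3u+1$. For the converse, take $p=3u^2+3u+1$, so that $p=\pi\bar\pi$ with $\pi=\frac{1+(2u+1)\sqrt{-3}}{2}$ up to units. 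The six sextic twists $y^2=x^3+b$ with $b$ running over $\mathbb{F}_p^*/(\mathbb{F}_p^*)^6$ have Frobenius $\zeta\pi$ for the six units $\zeta$ of $\mathbb{Z}[\zeta_3]$. Their traces are therefore $\pm t$ and $\pm\frac{t\pm3s}{2}$, which include $\pm1$, so a suitable $b$ exists.

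For the automorphisms in (i), $j=0$ and $p\neq2,3$ give $|\operatorname{Aut}(\mathfrak{E})|=6$ over the algebraic closure. The group is generated by $(x,y)\mapsto(\omega x,-y)$ with $\omega$ a primitive cube root of unity. Since $p\equiv1\pmod3$, we have $\omega\in\mathbb{F}_p\subseteq\mathbb{F}_{p^2}$, so all six automorphisms are defined over $\mathbb{F}_{p^2}$. Hence $|\operatorname{Aut}(E,\mathcal{O})|=6$.

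For (ii), the argument is parallel with $j=1728$. When $p\equiv3\pmod4$ the curve is supersingular, giving $t=0$ and $N(E)=p^2+2p+1$, so necessarily $p\equiv1\pmod4$. Then $\pi\in\mathbb{Z}[i]$ with $t$ even, so $4p=t^2+4s^2$. Imposing $t=\pm2$ gives $p=1+s^2$, i.e.\ $p=v^2+1$. For the converse, the four quartic twists $y^2=x^3+ax$ with $a\in\mathbb{F}_p^*/(\mathbb{F}_p^*)^4$ have traces $\pm t$ and $\pm2s$, and one of these equals $\pm2$. For the automorphisms, $(x,y)\mapsto(-x,iy)$ with $i\in\mathbb{F}_p$ (since $p\equiv1\pmod4$) generates all $4$ automorphisms over $\mathbb{F}_{p^2}$.

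I expect the main obstacle to be the trace description of the twists. This means justifying that $\pi$ lies in the CM order and that twisting by $b$ (resp.\ $a$) multiplies $\pi$ by a unit, so that every associate of $\pi$ is realized by some $b\in\mathbb{F}_p^*$ (resp.\ $a\in\mathbb{F}_p^*$). I would handle this by the explicit character-sum formula for $|\mathfrak{E}(\mathbb{F}_p)|$ in terms of cubic/sextic (resp.\ quartic) Jacobi sums. With that formula, replacing $b$ by $bc$ multiplies the relevant Jacobi sum by $\chi(c)$, a unit, and surjectivity of the characters gives all associates.
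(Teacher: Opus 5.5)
Your proof is correct and takes essentially the same route as the paper: reduce to $\mathbb{F}_p$ via $N(E)=(p+1)^2-t^2$, rule out the supersingular primes, use that the Frobenius lies in $\mathbb{Z}[\omega]$ (resp.\ $\mathbb{Z}[i]$) so that $t=\pm1$ (resp.\ $t=\pm2$) together with $\pi\bar\pi=p$ forces $p=3u^2+3u+1$ (resp.\ $p=v^2+1$), and check that all geometric automorphisms are defined over $\mathbb{F}_{p^2}$. Where you go beyond the paper is the converse: the paper's chain of equivalences only shows that the norm equation is solvable, whereas your sextic/quartic twist argument (via Jacobi sums) actually produces a coefficient $b$ (resp.\ $a$) in $\mathbb{F}_p^*$ realizing the right associate, and this existential reading is necessary, since for a fixed coefficient the ``if'' direction can fail (e.g.\ $p=7$, $b=1$ gives $t=-4$ and $N(E)=48$).
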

\begin{proof}
    By Lemma \ref{iso}, we know that the elliptic curves under consideration are ordinary. Moreover, by \cite[Prop. 4.37]{washington_elliptic_2008}, the elliptic curve \( y^2 = x^3 + b \) over \( \mathbb{F}_{p^2} \) is ordinary if and only if \( p \equiv 1 \pmod{3} \), and the elliptic curve \( y^2 = x^3 + ax \) over \( \mathbb{F}_{p^2} \) is ordinary if and only if \( p \equiv 1 \pmod{4} \). Let \(|\mathfrak{E}(\mathbb{F}_p)| = p + 1 - t = p + 1 - (\pi + \bar{\pi})\) where \(\pi\) is the Frobenius endomorphism. Then we have 
    \[
        |\mathfrak{E}(\mathbb{F}_{p^2})| = N(E) = (p + 1)^2 - t^2 = p^2 + 2p + 1 - t^2.
    \]
    
    (i) In this case, the $j$-invariant \(j(\mathfrak{E})\) of \( \mathfrak{E} \) is equal to 0 and the endomorphism ring of \( \mathfrak{E} \) is \({\rm End}(\mathfrak{E}) = \mathbb{Z}[\omega]\) where \(\omega = \frac{-1 + \sqrt{-3}}{2}\). Let \( \pi = u + v\omega\) with \( u,v\in\mathbb{Z} \). Then
    \[
        N(E) = p^2 + 2p \Longleftrightarrow t = \pi + \bar{\pi} = \pm 1.
    \]
    From the proposition of the Frobenius endomorphism, we have \( \pi\bar{\pi} = p \). Hence, after straightforward calculations, we conclude that
    \[
        N(E) = p^2 + 2p \Longleftrightarrow \begin{cases}
            \pi + \bar{\pi} = \pm 1\\
            \pi\bar{\pi} = p
        \end{cases} \Longleftrightarrow p = 3u^2 + 3u + 1,
    \]
    where \(u\in\mathbb{Z}\). Moreover, by \cite[Thm. III.10.1]{silvermanArithmeticEllipticCurves2009} and \( p^2 \equiv 1 \pmod{6} \), we can obtain that \(|\operatorname{Aut}(E, \mathcal{O})| = 6\). 

    (ii) In this case, the $j$-invariant \(j(\mathfrak{E})\) is equal to 1728 and the endomorphism ring of \( \mathfrak{E} \) is \({\rm End}(\mathfrak{E}) = \mathbb{Z}[\omega]\) where \(\omega = \sqrt{-1}\). Let \( \pi = u + v\omega\) with \( u,v\in\mathbb{Z} \). Then
    \[
        N(E) = p^2 + 2p - 3 \Longleftrightarrow t = \pi + \bar{\pi} = 2u = \pm 2.
    \]
    From the proposition of the Frobenius endomorphism, we have \( \pi\bar{\pi} = p \). Hence, after straightforward calculations, we conclude that
    \[
        N(E) = p^2 + 2p - 3 \Longleftrightarrow \begin{cases}
            \pi + \bar{\pi} = \pm 2\\
            \pi\bar{\pi} = p
        \end{cases} \Longleftrightarrow p = v^2 + 1,
    \]
    where \(v\in\mathbb{Z}\). Moreover, by \cite[Thm. III.10.1]{silvermanArithmeticEllipticCurves2009} and \( p^2 \equiv 1 \pmod{4} \), we can obtain that \(|\operatorname{Aut}(E, \mathcal{O})| = 4\).     The proof is completed.
\end{proof}

\begin{remark}\label{new_ellp_cuv_rem}
    \begin{itemize}
        \item[(i)] In the above proposition, both cases under consideration correspond to elliptic curves that have the largest number of rational points aside from maximal curves. For the elliptic curve \( \mathfrak{E}/ \mathbb{F}_{p^2} \) defined by the equation \( y^2 = x^3 + ax \), when \( N(E) = p^2 + 2p, p^2 + 2p - 1\) or \(p^2 + 2p - 2 \), the relations \( N(E) = p^2 + 2p + 1 - t^2 \) and \( t = 2u\) admit no integer solution for \( u \), and consequently, such a curve does not exist.
        \item[(ii)] It can be verified that \( p^2 + 2p - 3 \) is a square number if and only if \( p = 1 \). Hence, in each of the above two cases, there may exist a prime number \( s \) such that \( s \mid\mid N(E) \) (i.e. \( s\mid N(E)\) and \(\gcd(s,N(E)/s) = 1\)). Consequently, by employing a method similar to that in \cite{MA2023105686}, we can obtain new optimal locally repairable codes with more flexible localities. 
        \item[(iii)] By Mihăilescu's theorem\cite{metsankyla_catalans_2004}, there is no prime power \( q \) satisfying the condition \( q \equiv 1 \pmod{4} \) and \(q = v^2 + 1\) where \(v\in\mathbb{Z}\); however, it can be easily verified that \( 13^2 \) satisfies the condition \(13^2 = 3\cdot7^2 + 3\cdot7 + 1\). Therefore, Proposition \ref{new_ellp_cuv}(i) can be similarly applied to \( \mathfrak{E}/\mathbb{F}_{q^2} \) where \(q = 3u^2 + 3u + 1\) with \(u\in\mathbb{Z}\)
    \end{itemize}
\end{remark}

\section{Constructions of Locally Repairable Codes via Automorphism Groups}\label{construct} 

In this section, we divide the discussion into two parts. In the first part, we present an alternative method for determining the functions \( e_i \) in the construction of locally repairable codes, which is particularly useful for constructing locally repairable codes with two recovering sets in Section~\ref{construct_two}. In the second part, we introduce several new families of optimal classical locally repairable codes and provide some examples which seem unable to be obtained in \cite{Optimal_Locally_Repairable_Codes_Via_Elliptic_Curves} and \cite{MA2023105686}. . 

\subsection{Construction of Locally Repairable Codes via Automorphism Groups of Elliptic Function Fields}\label{lrc_framework}

Let $\mathfrak{E}/\mathbb{F}_q$ be an elliptic curve with function field $E$ and $\operatorname{Aut}(E/\mathbb{F}_q)$ be the automorphism group of $E$ over $\mathbb{F}_q$. Let $\mathcal{G}$ be a subgroup of $\operatorname{Aut}(E/\mathbb{F}_q)$ of order $r+1$ and $E^{\mathcal{G}}$ be the fixed subfield of $E$ with respect to $\mathcal{G}$. Then $E/E^{\mathcal{G}}$ is a Galois extension with Galois group $\operatorname{Gal}(E/E^{\mathcal{G}})=\mathcal{G}$.

Let $Q_1,Q_2,\cdots,Q_\ell$ be rational places of $E^{\mathcal{G}}$ which split completely in $E/E^{\mathcal{G}}$. Let $P_{i,1},P_{i,2},\cdots,P_{i,r+1}$ be the $r+1$ rational places of $E$ lying over $Q_i$ for $1\le i\le \ell$. Set $D = \sum_{i=1,j=1}^{\ell,r+1}P_{i,j}$. Let $e_1,e_2,\cdots,e_r$ be elements of $E$ that are linearly independent over $E^{\mathcal{G}}$ and satisfy
\[
    v_{P_{i,j}}(e_u)\ge0 \text{ for } 1\le i\le \ell,\,1\le j\le r+1 \text{ and }1\le u\le r.
\]
Let $f_1,f_2,\cdots,f_t$ be elements of $E^{\mathcal{G}}$ that are linearly independent over $\mathbb{F}_q$ and satisfy
\[
    v_{Q_i}(f_v)\ge0\text{ for }1\le i\le \ell\text{ and }1\le v\le t.
\]
Define
\begin{equation}\label{func}
    V=\Bigg\{ \sum_{j=1}^{t+1}a_{1,j}f_je_1+\sum_{i=2}^r\bigg(\sum_{j=1}^{t}a_{i,j}f_j\bigg)e_i:a_{i,j}\in\mathbb{F}_q \Bigg\}.
\end{equation}

The following lemma establishes result on local recoverability under certain conditions.

\begin{lemma}\label{local}\cite[Prop. 4.2]{Locally_Recoverable_Codes_from_Algebraic_Curves_and_Surfaces}
    Let $i$ be an integer with $1\le i\le \ell$, and suppose every $r\times r$ submatrix of the matrix 
    \[
        M=\begin{pmatrix}
        e_1(P_{i,1}) & e_2(P_{i,1}) & \cdots & e_r(P_{i,1}) \\
        e_1(P_{i,2}) & e_2(P_{i,2}) & \cdots & e_r(P_{i,2}) \\
        \vdots & \vdots & \ddots & \vdots \\
        e_1(P_{i,r+1}) & e_2(P_{i,r+1}) & \cdots & e_r(P_{i,r+1})
        \end{pmatrix}
    \]
    is invertible. Then the value of $f\in V$ at any place in the set $\{P_{i,1},P_{i,2},\cdots,P_{i,r+1}\}$ can be recovered from the values of $f$ at the other $r$ places.
\end{lemma}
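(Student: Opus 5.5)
The key fact to establish is that, for fixed $i$ and for every $f\in V$, the vector of values $(f(P_{i,1}),\dots,f(P_{i,r+1}))$ lies in the column space of $M$. Once this is shown, the standard argument of \cite[Prop. 4.2]{Locally_Recoverable_Codes_from_Algebraic_Curves_and_Surfaces} applies.

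First, rewrite $f$ in terms of the $e_u$. Any $f\in V$ has the form $f=\sum_{u=1}^r g_u e_u$ with $g_u\in E^{\mathcal{G}}$. Explicitly, $g_1=\sum_{j=1}^{t+1}a_{1,j}f_j$ and $g_u=\sum_{j=1}^t a_{u,j}f_j$ for $u\ge2$. Each $g_u$ is an $\mathbb{F}_q$-combination of the $f_j$, and $v_{Q_i}(f_j)\ge0$ (this is also needed for $f_{t+1}$, which is implicit in the definition of $V$). Hence $v_{Q_i}(g_u)\ge0$, so $g_u(Q_i)=:c_u\in\mathbb{F}_q$ is well defined.

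Next, show that each $g_u$ takes the same value at every place above $Q_i$. Since $g_u\in E^{\mathcal{G}}$ and $P_{i,j}$ lies over $Q_i$ with ramification index $1$, we get $v_{P_{i,j}}(g_u)=v_{Q_i}(g_u)\ge0$. Since $P_{i,j}$ is rational, the residue of $g_u$ at $P_{i,j}$ equals the residue at $Q_i$. Indeed, $g_u-c_u\in Q_i\subseteq P_{i,j}$. Therefore $g_u(P_{i,j})=c_u$ for all $j$. Combined with $v_{P_{i,j}}(e_u)\ge0$, this gives
\[
f(P_{i,j})=\sum_{u=1}^r c_u\, e_u(P_{i,j}),\qquad 1\le j\le r+1,
\]
that is, $\mathbf{f}_i=M\mathbf{c}$ with $\mathbf{c}=(c_1,\dots,c_r)^T$.

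Finally, recover the erased value. Suppose $P_{i,s}$ is erased, and let $M_s$ be the $r\times r$ submatrix of $M$ obtained by deleting row $s$. By hypothesis $M_s$ is invertible, so $\mathbf{c}=M_s^{-1}\mathbf{f}_i^{(s)}$, where $\mathbf{f}_i^{(s)}$ is the vector of the $r$ known values. Then $f(P_{i,s})=(e_1(P_{i,s}),\dots,e_r(P_{i,s}))\,M_s^{-1}\mathbf{f}_i^{(s)}$. This is a fixed linear functional of the other $r$ values, independent of $f$, so two codewords that agree on the other $r$ coordinates agree at $P_{i,s}$. This is exactly the disjointness condition in the definition of local recoverability.

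The only delicate point is the second step: checking that the residue of a function in $E^{\mathcal{G}}$ is constant on a completely split fiber. This holds because the fiber consists of rational places over a rational place. The rest is linear algebra.
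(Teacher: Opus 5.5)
Your proposal is correct and follows the same argument as the cited \cite[Prop.~4.2]{Locally_Recoverable_Codes_from_Algebraic_Curves_and_Surfaces}, which the paper invokes without reproving and essentially repeats in the proof of Proposition~\ref{cons_lrc_two}: write $f=\sum_u g_u e_u$ with $g_u\in E^{\mathcal{G}}$, note that each $g_u$ takes the constant value $g_u(Q_i)$ on the fiber over $Q_i$ so that the value vector equals $M\mathbf{c}$, then solve for $\mathbf{c}$ from any $r$ rows. Your remark that $f_{t+1}$ (which appears in $V$ but is never introduced) must also satisfy $v_{Q_i}(f_{t+1})\ge 0$ correctly points out a small notational slip in the paper.
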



From Lemma~\ref{local}, we see that the construction of the functions \( e_i \) is very important, as it guarantees the repair property (locality) of the linear code we construct. In the following, by considering certain subgroups of $\operatorname{Aut}(E/\mathbb{F}_q)$, we can determine the fixed subfield $E^{\mathcal{G}}$ and find a set of elements $e_1, e_2, \cdots, e_r \in E$ that are linearly independent over $E^{\mathcal{G}}$ and satisfy the property in Lemma~\ref{local}. 

\begin{proposition}\label{find_e_i}
    Let \( E/\mathbb{F}_q \) be an elliptic function field. Let
    \[
        H = \big\{P_1,P_2,\cdots,P_{|H|} = \mathcal{O}\big\}
    \]
    be a subgroup of \( \mathbb{P}^1_E \) with \( |H|<q \). Let \( T_H = \{\tau_P:P\in H\} \) be a subgroup of \( T_E \) and let \( A \) be a nontrivial subgroup of \( \operatorname{Aut}(E, \mathcal{O}) \) such that \( \mathcal{G} = T_H A \) is a subgroup of \( \operatorname{Aut}(E/\mathbb{F}_q) \) with \( |\mathcal{G}| = r + 1\). Then the following statements hold.

    \begin{itemize}
        \item[(i)] There exists an element \( z \in E \) satisfying that \( E^{\mathcal{G}} = \mathbb{F}_q(z) \) and \( (z)^E_\infty = |A|\sum_{i = 1}^{|H|} P_i \).
        \item[(ii)] There exist elements \( e_i \in E \) for \( 2\le i\le r \) satisfying that 
        \begin{equation}\label{e_i_condition}
            (e_i)^{E}_\infty = (\mu + 1)P_1 + (\mu + 1)P_2 + \cdots + (\mu + 1)P_\nu + \mu P_{\nu + 1} + \cdots + \mu P_{|H|}
        \end{equation}
        where \( i = \mu |H| + \nu \) with \( \mu,\nu\in\mathbb{Z}_{\ge0} \) and \( 0\le \nu < |H| \). Moreover, \( e_1 = 1, e_2, \cdots, e_r \) are linearly independent over \( E^{\mathcal{G}} \).
        \item[(iii)] Let \(\{P_{i,1}, P_{i,2}, \cdots, P_{i,r+1}\}\) be pairwise distinct rational places of \( E \) lying over a rational place \( Q_i \) of \( E^{\mathcal{G}} \) for each \( 1 \leq i \leq \ell \). Then all \( r \times r \) submatrices of the following matrix
        \[
            M=\begin{pmatrix}
            e_1(P_{i,1}) & e_2(P_{i,1}) & \cdots & e_r(P_{i,1}) \\
            e_1(P_{i,2}) & e_2(P_{i,2}) & \cdots & e_r(P_{i,2}) \\
            \vdots & \vdots & \ddots & \vdots \\
            e_1(P_{i,r+1}) & e_2(P_{i,r+1}) & \cdots & e_r(P_{i,r+1})
            \end{pmatrix}
        \]
        are invertible for all $1\le i\le \ell$. 
        \item[(iv)] For any place \(P_{i,j}\) with \(1\le j \le r+1\). Let 
        \[
            \{P_{i,1}^{(j)},P_{i,2}^{(j)},\cdots,P_{i,r+1 - |H|}^{(j)}\} = \{P_{i,1}, P_{i,2}, \cdots, P_{i,r+1}\}\backslash\{\tau_P(P_{i,j}) = P_{i,j}\oplus P:P\in H\}. 
        \]
        If \(P_{i,j}\notin E[|H|]\), then the following \( (r + 1 - |H|) \times (r + 1 - |H|) \) matrix 
        \[
            M^\prime=\begin{pmatrix}
            e_1(P_{i,1}^{(j)}) & e_2(P_{i,1}^{(j)}) & \cdots & e_{r + 1 -|H|}(P_{i,1}^{(j)}) \\
            e_1(P_{i,2}^{(j)}) & e_2(P_{i,2}^{(j)}) & \cdots & e_{r + 1 -|H|}(P_{i,2}^{(j)}) \\
            \vdots & \vdots & \ddots & \vdots \\
            e_1(P_{i,r+1 - |H|}^{(j)}) & e_2(P_{i,r+1 - |H|}^{(j)}) & \cdots & e_{r + 1 -|H|}(P_{i,r+1 - |H|}^{(j)})
            \end{pmatrix}
        \]
        is invertible. 
    \end{itemize}
\end{proposition}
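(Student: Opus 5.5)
For (i), I will first pass to the intermediate field $E^{T_H}$. The translation group $T_H$ acts without fixed points, so $E/E^{T_H}$ is unramified. Hence $E^{T_H}$ is again elliptic, and it has a rational place $\mathcal{O}'$ lying under $\mathcal{O}$, whose fibre is exactly $H=\{P_1,\dots,P_{|H|}\}$.

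Since $A$ normalizes $T_H$ (Lemma~\ref{sub}), $A$ descends to a group $\bar A$ of automorphisms of $E^{T_H}$ fixing $\mathcal{O}'$. Its fixed field is $E^{\mathcal{G}}$, and it is rational because $A$ is nontrivial. Concretely, I take the standard invariant of $\bar A$ on the Weierstrass model of $E^{T_H}$: $x'$ for $|\bar A|=2$, $y'$ for $|\bar A|=3$, $x'^2$ for $|\bar A|=4$, $x'^3$ for $|\bar A|=6$. Alternatively, I take a norm-type element whose pole divisor is $|A|\mathcal{O}'$. By the L\"uroth/degree argument, $[E^{T_H}:\mathbb{F}_q(z')]=|A|$ forces $E^{\mathcal{G}}=\mathbb{F}_q(z')$.

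I then set $z=z'$ viewed in $E$. Since $E/E^{T_H}$ is unramified with fibre $H$ over $\mathcal{O}'$, conorm gives $(z)^E_\infty=|A|\sum_{i}P_i$.

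For (ii), I build the $e_i$ from Riemann--Roch spaces. For $i=\mu|H|+\nu$, let $D_i$ be the divisor on the right-hand side of \eqref{e_i_condition}. Since $\deg D_i=i\ge 2$, Riemann--Roch on the genus-one field $E$ gives $\ell(D_i)=i$ and $\ell(D_i-P_{\nu})=i-1$ when $\nu\ge1$. A similar statement holds for $\nu=0$, using $P_{|H|}$. Therefore some $e_i\in\mathcal{L}(D_i)$ has exactly the pole orders prescribed.

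For linear independence over $E^{\mathcal{G}}=\mathbb{F}_q(z)$, I will suppose $\sum_i g_i(z)e_i=0$ and compare pole orders at the places of $H$. Every $g_i(z)$ has pole divisor a multiple of $|A|\sum P_j$, so its pole order at each $P_j$ is a multiple of $|A|$. The valuations of $e_i$ at $P_1$ and at $P_{|H|}$ together encode $i$: they give $\mu$ and whether $\nu\ge1$. With $r+1=|H||A|$ we have $i\le r<|H||A|$, so $\mu<|A|$. I expect the main obstacle to be that valuation at a single place does not separate the $\nu$'s. I will handle this by working place by place: at the place $P_\nu$ the values $\mu+1$ versus $\mu$ separate the classes, and I will use a leading-term argument over the pair $(\text{pole order of }g_i \bmod |A|,\ \text{pattern of }e_i)$. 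If this becomes delicate, the fallback is to show directly that $1,e_2,\dots,e_r$ span an $\mathbb{F}_q$-space meeting $z\,\mathcal{L}(\cdot)$ trivially. Equivalently, I will show that $\{e_i z^k\}$ have distinct pole divisors, which gives a triangular structure.

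For (iii), take nonzero $(c_u)$ with $\sum c_u e_u$ vanishing at $r$ of the $r+1$ places over $Q_i$. The function $f=\sum c_ue_u$ lies in $\mathcal{L}(D_r)$, and $\deg D_r=r$. Then $f$ has $r$ zeros, so $(f)=$ those $r$ places minus $D_r$ exactly. Consequently the sum of these $r$ places, taken in the group law, equals the sum of the points of $D_r$. Since the $r+1$ places form a $\mathcal{G}$-orbit of a point $P$, the full orbit has a computable sum $S$, and the omitted point would be forced to equal $S\ominus\mathrm{sum}(D_r)$.

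I will then argue via Lemma~\ref{P=Q} and the orbit structure that this equation fails, or that $f$ would be a constant multiple, contradicting degree. Where needed I also use $|H|<q$. Part (iv) follows the same pattern: a nonzero $f\in\mathcal{L}(D_{r+1-|H|})$ vanishing on the $r+1-|H|$ listed places has a divisor determined by the group law. The hypothesis $P_{i,j}\notin E[|H|]$ is exactly what makes the resulting sum condition fail, because removing a full $H$-coset shifts the sum by $|H|P_{i,j}$ plus a constant.
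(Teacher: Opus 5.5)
Your plan for (ii)--(iv) follows the paper's lines, and your route to (i) is genuinely different and valid. However, two steps are not actually established as written.

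\textbf{What differs and works.} For (i) the paper just quotes Prop.~4.2 of Ma and Xing for some $z_0$ with $(z_0)^E_0=|A|\sum_u P_u$ and inverts it. You instead derive it: $E/E^{T_H}$ is unramified, and $A$ normalises $T_H$ and acts faithfully on $E^{T_H}$ fixing $\mathcal{O}'$. Hence $E^{\mathcal G}$ has genus $0$ and a rational place $R$ below $\mathcal{O}'$. This is self-contained and correct, but justify the generator as follows. Take $z$ with $(z)_\infty=R$ in the rational field $E^{\mathcal G}$. Since $\mathcal{O}'$ is the only place over $R$, it is totally ramified with index $|A|$, and the unramified conorm to $E$ gives $|A|\sum_u P_u$.

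Your other two suggestions for $z$ do not give it in general. The explicit invariants $x',y',x'^2,x'^3$ depend on a special Weierstrass model and need not apply in characteristic $2$ or $3$; for instance $x'^3$ fails for $|A|=6$ in characteristic $3$. They also do not cover $|A|\in\{8,12,24\}$. A norm $\prod_\sigma\sigma(w)$ of $w\in\mathcal{L}(k\mathcal{O}')$ has pole divisor $k|A|\mathcal{O}'$ with $k\ge 2$, so it never has pole divisor $|A|\mathcal{O}'$.

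Your triangular fallback for independence is correct and tidier than the paper's case analysis on $w$. Index $z^k=z^ke_1$ by $k(r+1)$ and $z^ke_i$ ($2\le i\le r$) by $i+k(r+1)$. These indices are pairwise distinct. The $D_j$ form a flag with $D_j-D_{j-1}$ a single place of $H$, and the term of largest index has strictly smallest valuation at that place. This only needs $e_i\in\mathcal{L}(D_i)\setminus\mathcal{L}(D_{i-1})$.

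\textbf{Gap 1: exact pole divisor in (ii).} From $\ell(D_i)=i$ and $\ell(D_i-P_\nu)=i-1$ you only get an $e_i$ whose pole order is exact at one place. The orders at the other points of $\operatorname{supp}(D_i)$ may drop, so $(e_i)^E_\infty=D_i$ does not follow. You must choose $e_i$ outside $\bigcup_{P\in\operatorname{supp}(D_i)}\mathcal{L}(D_i-P)$, which has at most $|H|q^{i-1}<q^i$ elements. This is exactly where $|H|<q$ is used; it plays no role in (iii).

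\textbf{Gap 2: the contradiction in (iii) and (iv).} You correctly get $(f)^E=\sum_{j\ne j_0}P_{i,j}-D_r$, but you never compute $S$. The alternative ``$f$ would be a constant multiple'' is not a viable route.

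The missing input is the divisor of $z-\alpha_i$. Since $Q_i$ splits completely, it is not the pole of $z$, so it is the zero of $z-\alpha_i$ for some $\alpha_i\in\mathbb{F}_q$. By (i), $(z-\alpha_i)^E=\sum_{j=1}^{r+1}P_{i,j}-|A|\sum_uP_u$, so $S=[|A|]\bigoplus_{P\in H}P$. The point of the choice \eqref{e_i_condition} is that $D_r=|A|\sum_uP_u-P_{|H|}$ falls short of $(z)^E_\infty$ by exactly $\mathcal{O}$. Hence $(f/(z-\alpha_i))^E=P_{|H|}-P_{i,j_0}$, and Lemma~\ref{P=Q} forces $P_{i,j_0}=\mathcal{O}$. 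This is impossible, since $P_{i,j_0}$ is a zero of $z-\alpha_i$ while $\mathcal{O}$ is a pole of $z$.

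In (iv) the same division applies to $g$ with $(g)^E=\sum_tP^{(j)}_{i,t}-(|A|-1)\sum_uP_u$, giving $\sum_{P\in H}(P_{i,j}\oplus P)\sim\sum_{P\in H}P$. Your ``constant'' is $\bigoplus_{P\in H}P$. It cancels precisely because $D_{r+1-|H|}=(|A|-1)\sum_uP_u$, leaving $[|H|]P_{i,j}=\mathcal{O}$.

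Also record that $f\neq 0$ (resp.\ $g\ne0$) because the $e_u$ are $\mathbb{F}_q$-linearly independent by (ii).
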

\begin{proof}
    (i) In \cite[Prop. 4.2]{MA2023105686}, it has already been shown that \(E^{\mathcal{G}} = \mathbb{F}_q(z_0)\) with \( z_0\in E^{\mathcal{G}} \) and 
    \[
        (z_0)^{E}_0 = |A|\sum_{i = 1}^{|H|} P_i.
    \]
    We set \( z = z_0^{-1} \), then it follows that \( E^{\mathcal{G}} = \mathbb{F}_q(z_0) = \mathbb{F}_q(z) \) and \( (z)^E_\infty = |A|\sum_{i = 1}^{|H|} P_i \). 

    (ii) For each $2\le i\le r<q$ where \( i = \mu |H| + \nu \) with \( \mu,\nu\in\mathbb{Z}_{\ge0} \) and \( 0\le \nu < |H| \). Let
    \[
        D_i = (\mu + 1)P_1 + (\mu + 1)P_2 + \cdots + (\mu + 1)P_\nu + \mu P_{\nu + 1} + \cdots + \mu P_{|H|}. 
    \]
    By the Riemann-Roch Theorem\cite[Thm. 1.5.15]{Algebraic_Function_Fields_and_Codes}, we have
    \[
        \begin{cases}
            \left|\bigcup\limits_{j=1}^\nu\mathcal{L}(D_i-P_j)\right|\le \nu q^{i-1}<\left|\mathcal{L}(D_i)\right|=q^i,  &\mu = 0,\\
            \\
            \left|\bigcup\limits_{j=1}^{|H|}\mathcal{L}(D_i-P_j)\right|\le |H| q^{i-1}<\left|\mathcal{L}(D_i)\right|=q^i,  &\mu > 0.\\
        \end{cases}
    \]
    It implies that there exists an element \(e_i\) with \( (e_i)^{E}_\infty = D_i \) for each $2\le i\le r$. We now proceed to show that $e_1 = 1,e_2,\cdots,e_r$ are linearly independent over $E^{\mathcal{G}}$. Suppose, contrary to the claim, that a nontrivial relation $\sum_{i=1}^r \varphi_i(z) e_i = 0$ holds for some $\varphi_i(z) \in \mathbb{F}_q(z)$. Multiplying by a common denominator if necessary, we can assume without loss of generality that all $\varphi_i(z)$ are polynomials in $z$. Let $1\le w\le r$ with \( w = \mu_w |H| + \nu_w \), \( \mu_w,\nu_w\in\mathbb{Z}_{\ge0} \) and \( 0\le \nu_w < |H| \) be the largest integer such that 
    \[
        \deg(\varphi_w(z))=\mathop{\max}_{1\le i\le w}\{\,\deg(\varphi_i(z))\},
    \]
    and
    \[
        \deg(\varphi_w(z))>\mathop{\max}_{w< i\le r}\{\,\deg(\varphi_i(z))\,\}.
    \]
    \begin{itemize}
        \item If $w>1$, then for $i<w$ we have
    \begin{align*}
        v_{P_{\nu_w}}(\varphi_i(z)e_i)&=-|A|\deg(\varphi_i(z))+v_{P_{\nu_w}}(e_i)\\
        &\ge-|A|\deg(\varphi_i(z))-\mu_w\\
        &>-|A|\deg(\varphi_w(z))-\mu_w-1\\
        &=v_{P_{\nu_w}}(\varphi_w(z)e_w).
    \end{align*}
    For $i>w$, we have
    \begin{align*}
        v_{P_{\nu_w}}(\varphi_i(z)e_i)&=-|A|\deg(\varphi_i(z))+v_{P_{\nu_w}}(e_i)\\
        &\ge-|A|\deg(\varphi_i(z))-|A|\\
        &\ge-|A|\deg(\varphi_w(z))\\
        &>-|A|\deg(\varphi_w(z))+v_{P_{\nu_w}}(e_w)\\
        &=v_{P_{\nu_w}}(\varphi_w(z)e_w).
    \end{align*}
    The Strict Triangle Inequality \cite[Lem. 1.1.11]{Algebraic_Function_Fields_and_Codes} yields
    \[
        v_{P_{\nu_w}}\left(\sum_{i=1}^r\varphi_i(z)e_i\right)=v_{P_{\nu_w}}(\varphi_w(z)e_w)\neq v_{P_{\nu_w}}(0),
    \]
    which is a contradiction since $\sum_{i=1}^r\varphi_i(z)e_i=0$. 

        \item If $w=1$, then for $i>w$, we have
        \begin{align*}
            v_{P_{|H|}}(\varphi_i(z)e_i)&=-|A|\deg(\varphi_i(z))+v_{P_{|H|}}(e_i)\\
            &>-|A|\deg(\varphi_i(z))-|A|\\
            &\ge-|A|\deg(\varphi_w(z))+v_{P_{|H|}}(e_w)\\
            &=v_{P_{|H|}}(\varphi_w(z)e_w).
        \end{align*}
    It follows that
    \[
        v_{P_{|H|}}\left(\sum_{i=1}^r\varphi_i(z)e_i\right)=v_{P_{|H|}}(\varphi_w(z)e_w)\neq v_{P_{|H|}}(0),
    \]
    which is also a contradiction. 
    \end{itemize}
    Hence, the elements $e_1,\cdots,e_r$ are linearly independent over $E^{\mathcal{G}}$. 

    (iii) Let $P_{i,1},P_{i,2},\cdots,P_{i,r+1}$ be the $r+1$ rational places lying over $Q_i$ that corresponds to the zero of $z-\alpha_i$ for some $\alpha_i\in\mathbb{F}_q$. Without loss of generality, we can consider the first $r$ rows of the matrix $M$. Suppose that there exists ${\bf 0}\neq(c_1,\cdots,c_r)\in\mathbb{F}_q^r$ such that
    \[
        \begin{pmatrix}
        e_1(P_{i,1}) & e_2(P_{i,1}) & \cdots & e_r(P_{i,1}) \\
        e_1(P_{i,2}) & e_2(P_{i,2}) & \cdots & e_r(P_{i,2}) \\
        \vdots & \vdots & \ddots & \vdots \\
        e_1(P_{i,r}) & e_2(P_{i,r}) & \cdots & e_r(P_{i,r})
        \end{pmatrix}
        \begin{pmatrix}
        c_1 \\
        c_2 \\
        \vdots \\
        c_r
        \end{pmatrix}={\bf 0}.
    \]
    Set $f=c_1e_1+\cdots+c_re_r$. Then for $1\le j \le r$, we have 
    \[
        v_{P_{i,j}}(f)\ge1\quad {\rm i.e.}\quad \deg\, (f)^E_0\ge r.
    \]
    On the other hand, $f\in\mathcal{L}\big(|A|\sum_{i = 1}^{|H|-1} P_i + (|A|-1)P_{|H|}\big)$ i.e. $\deg(f)^E_{\infty}\le r$. Thus we derive that
    \begin{equation}\label{(f)}
        (f)^E=\sum_{j=1}^rP_{i,j}-\bigg(|A|\sum_{i = 1}^{|H|-1} P_i + (|A|-1)P_{|H|}\bigg).
    \end{equation}
    Note that $(z)_\infty^{\mathbb{F}_q(z)}=(z-\alpha_i)_\infty^{\mathbb{F}_q(z)}$, so 
    \begin{equation}\label{(z-a_i)}
        (z-\alpha_i)^E=\sum_{j=1}^{r+1}P_{i,j}-|A|\sum_{i = 1}^{|H|} P_i.
    \end{equation}
    By \eqref{(f)} and \eqref{(z-a_i)}, we conclude
    \[
        (f/(z-\alpha_i))^E=P_{|H|}-P_{i,r+1},
    \]
    which is a contradiction by Lemma \ref{P=Q}.

    (iv) Similar to the proof of (iii), suppose contrary to the statement, then we can derive that there exists a function \( g\in E \) with
    \[
        (g)^E = \sum_{t=1}^{r+1-|H|}P^{(j)}_{i,t}-\bigg((|A|-1)\sum_{i = 1}^{|H|} P_i\bigg), 
    \]
    and therefore
    \[
         (g/(z-\alpha_i))^E=\sum_{t=1}^{|H|} (P_{i,j}\oplus P_t) - \sum_{i = 1}^{|H|} P_i. 
    \]
    It follows that
    \[
        \oplus_{t=1}^{|H|} (P_{i,j}\oplus P_t) = ([|H|]P_{i,j})\oplus (\oplus_{t = 1}^{|H|} P_t) = \oplus_{i = 1}^{|H|} P_i\in \mathbb{P}_E^1,
    \]
    where \([|H|]P_{i,j}\) means that \(\underbrace{P_{i,j}\oplus\cdots\oplus P_{i,j}}_{|H|\text{ terms}}\). So we have
    \[
        [|H|]P_{i,j} = \mathcal{O}\in \mathbb{P}_E^1,
    \]
    which contradicts to \(P_{i,j}\notin E[|H|]\). 

    The proof is completed. 
\end{proof}

Let \( N(E) \) be the number of rational places of \( E/\mathbb{F}_q \) and let \(T_H\) and \(A\) be given in Proposition \ref{find_e_i}. Then we obtain that \( E^{\mathcal{G}} = \mathbb{F}_q(z) \) with \( (z)^E_\infty = |A|\sum_{i = 1}^{|H|} P_i \) and find a set of elements $e_1, e_2, \cdots, e_r \in E$. There are at least \( \ell := \left\lceil \frac{N(E)-2|T|}{r+1} - 1 \right\rceil = \left\lceil \frac{N(E)-2|T|}{r+1} \right\rceil - 1 \) rational places of \( E^{\mathcal{G}} \) which split completely in \( E/E^{\mathcal{G}} \) from Lemma \ref{split_place}. Set \(f_i = z^{i-1}\) for any \(1\le i\le t\). Then the function space \(V\) defined by \eqref{func} is a subspace of \( \mathcal{L}\big(t|A|\sum_{i = 1}^{|H|} P_i\big)\). Let \( 1 \le t < m \le \ell \) and \( D = \sum_{i=1,j=1}^{m,r+1}P_{i,j} \). Then the algebraic geometry code
\[
    C_{\mathcal{L}}(D,V)=\big\{(f(P))_{P\in\operatorname{supp}(D)}:f\in V\big\}
\]
is an \([n = m(r+1), k = rt+1, d \ge n-t(r+1)]_q\) locally repairable code with locality \( r \). Indeed, the locality of $C_{\mathcal{L}}(D,V)$ follows directly from Lemma \ref{local} together with Proposition \ref{find_e_i}(iii). 

\begin{remark}
    In Proposition 20 in \cite{Optimal_Locally_Repairable_Codes_Via_Elliptic_Curves} and Proposition 4.2 in \cite{MA2023105686}, the authors use a set of completely splitting places to determine the functions \( e_i \). Subsequently, they ingeniously employ the modified algebraic geometry code technique to compensate for the absence of such completely splitting places. Here, we adopt a different approach. We fully exploit the properties of subgroup \( \mathcal{G} \) to determine the functions \( e_i \). Moreover, the functions \( e_i \) we obtain also satisfy the property in Lemma~\ref{local} as those in \cite{Optimal_Locally_Repairable_Codes_Via_Elliptic_Curves} and \cite{MA2023105686}. 
\end{remark}

\subsection{New Families of Optimal Classical Locally Repairable Codes via Elliptic Function Fields}

In this subsection, we will employ the aforementioned construction along with Proposition \ref{new_ellp_cuv} to present several new families of optimal locally repairable codes via elliptic function fields. It seems that the parameters of the optimal locally repairable codes we construct are not attainable by previous constructions in the literature. At first, we provide a proof of Theorem~\ref{opt_LRC_one_hA}. 
\\

{
\raggedright
{\it Proof of Theorem~\ref{opt_LRC_one_hA}.} 
}By Proposition \ref{new_ellp_cuv}, there exist elliptic curves satisfying the above conditions about \(|A|\), \( p \) and \(N\). Hence, there exists a subgroup of \( \operatorname{Aut}(E/\mathbb{F}_q) \) with order \(h|A|\) from Lemma \ref{sub} and \(h\mid\mid N\). Following the discussion in Section~\ref{lrc_framework}, we can obtain an \([n = m(r+1), k = rt+1, d \ge (m-t)(r+1)]_{p^2}\) locally repairable code with locality \( r \) via elliptic function field. 

    On the other hand, applying the Singleton-type upper bound \eqref{min_d_bound_one} yields 
    \begin{align*}
        d &\leq n - k - \left\lceil \frac{k}{r} \right\rceil + 2 \\
        & = n - rt - 1 - \left\lceil \frac{tr + 1}{r} \right\rceil + 2 \\
        &= n - t(r + 1) = (m-t)(r+1).
    \end{align*}
    Therefore, we conclude that the code we construct is indeed an optimal locally repairable code.\qed
\\

Let \(T\) be a subgroup of the translation group \(T_E\) and let \(A\) be the cyclic subgroup of \(\operatorname{Aut}(E, \mathcal{O})\) generated by \(\sigma\), where \(\sigma\) is defined by \(\sigma(x) = x\) and
\[
    \sigma(y) =
    \begin{cases} 
        -y, & \text{char}(\mathbb{F}_q) \neq 2, \\
        y + 1, & \text{char}(\mathbb{F}_q) = 2 \text{ and } j(E) = 0, \\
        y + x, & \text{char}(\mathbb{F}_q) = 2 \text{ and } j(E) \neq 0.
    \end{cases}
\]
By Proposition 4.5 of \cite{MA2023105686}, the product \(TA\) forms a subgroup of \(\operatorname{Aut}(E/\mathbb{F}_q)\) of order \(2|T|\). From this, we can provide a proof of Theorem~\ref{opt_LRC_one_2h}.
\\

{
\raggedright
{\it Proof of Theorem~\ref{opt_LRC_one_2h}.} 
}By Proposition \ref{new_ellp_cuv}, there exist elliptic curves satisfying the above conditions about \(|A|\), \( p \) and \(N\). Let \(T\) be a subgroup of the translation group \(T_E\) with order \(h\). Then from the above discussion, it follows that there exists an subgroup of \( \operatorname{Aut}(E/\mathbb{F}_q) \) with order \(2h\). The rest of the proof follows analogously to that of Theorem \ref{opt_LRC_one_hA}. \qed

\begin{remark}\label{opt_LRC_one_re}
    \begin{itemize}
        \item[(i)] From Remark \ref{new_ellp_cuv_rem}(iii), for a prime power \(q\), there also exists an optimal \( q^2 \)-ary \([m(r+1), rt+1, (m-t)(r+1)]_{q^2}\) locally repairable code with locality \( r = h|A| - 1 \) or \( r = 2h - 1 \) provided that \(|A|\), \( q \) and \(N\) satisfy \(N = q^2 + 2q\) and \(|A| = 2,3,6\) for \(q = 3u^2 + 3u + 1\) with \(u\in\mathbb{Z}\). 
        \item[(ii)] It can be seen that the length of our \(q\)-ary optimal locally repairable codes constructed in Theorems~\ref{opt_LRC_one_hA} and \ref{opt_LRC_one_2h} can approach \(q + 2\sqrt{q}\). 
        \item[(iii)] Reviewing the results of optimal locally repairable codes in \cite{Optimal_Locally_Repairable_Codes_Via_Elliptic_Curves} and \cite{MA2023105686}, since their construction is based on maximal elliptic function fields, the resulting codes can only be defined over finite fields with characteristic \(p=2\), \(3\), \( p \equiv 3 \pmod{4} \) and \(p \equiv 2 \pmod{3}\). Theorems \ref{opt_LRC_one_hA} and \ref{opt_LRC_one_2h} take into account the case where the characteristic \(p\) of the finite field is both \( p \equiv 1 \pmod{3} \) and \( p \equiv 1 \pmod{4} \), a case for which no constructions of optimal locally repairable codes were available in \cite{Optimal_Locally_Repairable_Codes_Via_Elliptic_Curves} and \cite{MA2023105686}. In addition, even if over the same finite field, the locality of our optimal locally repairable codes seems to be able to differ from those in \cite{Optimal_Locally_Repairable_Codes_Via_Elliptic_Curves} and \cite{MA2023105686}. Therefore, our results effectively complement previous work. 
    \end{itemize}
\end{remark}

In the following, we present several examples of our optimal locally repairable codes which seem unable to be obtained in \cite{Optimal_Locally_Repairable_Codes_Via_Elliptic_Curves} and \cite{MA2023105686}. 

\begin{example}\label{opt_lrc_one_ex}
    \begin{itemize}
        \item[(i)] Let \(p=7 = 3\cdot 1^2 + 3\cdot 1 + 1\) and \(\mathbb{F}_{p^2} = \mathbb{F}_p(\omega)\). Then we can find the elliptic curve \(\mathfrak{E}/\mathbb{F}_{p^2}:y^2 = x^3 + 2\) with \(N(E) = p^2 + 2p = 63\). Take \(H = E[7]\) with \(|H| = 7\mid 63\) and \(A = \{\sigma_u:u^2 = 1\}\) where \(\sigma_u(x) = u^2x\) and \(\sigma_u(y) = u^3y\). Let \(\mathcal{G} = T_HA\). Then we obtain that \(E^{\mathcal{G}} = \mathbb{F}_{p^2}(z)\) where \(z = (x^7 + 3x^4 + 6x)/(x^6 + 5x^3 + 1)\). By Theorem~\ref{opt_LRC_one_2h}, we can construct an optimal \(7^2\)-ary 
        \[
            [n=14m,\,k=13t+1,\,14(m-t)]_{7^2}
        \]
        locally repairable code with locality \(r = 13\) for any \(1\le t\le m\le4\). In Theorems~1.1 and 1.2 of \cite{MA2023105686}, optimal locally repairable codes with locality \(r = 2^\ell-1\) and \(2\le\ell\le8\) over \(\mathbb{F}_{7^2}\) can be constructed. Hence, it seems that our optimal locally repairable code with locality \(r = 13\) does not seem to be obtainable from \cite{MA2023105686}.
        \item[(ii)] Let \(p=37 = 6^2 + 1\) and \(\mathbb{F}_{p^2} = \mathbb{F}_p(\omega)\). Then we can find the elliptic curve \(\mathfrak{E}/\mathbb{F}_{p^2}:y^2 = x^3 + x\) with \(N(E) = p^2 + 2p - 3 = 1440\). Take \(H = E[5]\) with \(|H| = 5\mid\mid 1440\) and \(A = \operatorname{Aut}(E, \mathcal{O}) = \{\sigma_u:u^4 = 1\}\) where \(\sigma_u(x) = u^2x\) and \(\sigma_u(y) = u^3y\). Let \(\mathcal{G} = T_HA\). Then we obtain that \(E^{\mathcal{G}} = \mathbb{F}_{p^2}(z)\) where \(z = (11x^{10} + 7x^8 + 24x^6 + 13x^4 + 30x^2 + 26)/(x^8 + 3x^6 + 8x^4 + 4x^2 + 10)\). By Theorem~\ref{opt_LRC_one_hA}, we can construct an optimal \(37^2\)-ary 
        \[
            [n=20m,\,k=19t+1,\,20(m-t)]_{37^2}
        \]
        locally repairable code with locality \(r = 19\) for any \(1\le t\le m\le71\). We note that optimal locally repairable codes over over finite field of characteristic \(37\) do not seem to be obtained in \cite{Optimal_Locally_Repairable_Codes_Via_Elliptic_Curves} and \cite{MA2023105686}.
        \item[(iii)] Let \(q=13^2 = 3\cdot7^2 + 3\cdot 7+1\) and \(\mathbb{F}_{q^2} = \mathbb{F}_q(\omega)\). Then we can find the elliptic curve \(\mathfrak{E}/\mathbb{F}_{q^2}:y^2 = x^3 + 2\) with \(N(E) = q^2 + 2q = 28899\). Take \(H = E[9]\) with \(|H| = 9\mid 28899\) and \(A = \{\sigma_u:u^2 = 1\}\) where \(\sigma_u(x) = u^2x\) and \(\sigma_u(y) = u^3y\). Let \(\mathcal{G} = T_HA\). Then we obtain that \(E^{\mathcal{G}} = \mathbb{F}_{p^2}(z)\) where \(z = (3x^9 + 7x^8 + 9x^6 + 8x^5 + 4x^3 + 6x^2 + 2)/(x^8 + 3x^5 + 12x^2)\). By Theorem~\ref{opt_LRC_one_2h}, we can construct an optimal \(13^4\)-ary 
        \[
            [n=18m,\,k=17t+1,\,18(m-t)]_{13^4}
        \]
        locally repairable code with locality \(r = 17\) for any \(1\le t\le m\le1605\). We note that optimal locally repairable codes over over finite field of characteristic \(13\) do not seem to be obtained in \cite{Optimal_Locally_Repairable_Codes_Via_Elliptic_Curves} and \cite{MA2023105686}.
    \end{itemize}

    From the above examples, we can see that our results enrich the existing results on optimal locally repairable codes, i.e. our results not only produce \(q\)-ary optimal locally repairable codes with length \(O(q+2\sqrt{q})\) over more finite fields, but also offer more flexible locality. 
\end{example}

\section{Locally Repairable Codes with Two Recovering Sets via Elliptic Function Fields}\label{construct_two} 

In this section, we focus on the construction of locally repairable codes with two recovering sets via elliptic function fields. We first present a general framework for such constructions, and then, utilizing maximal elliptic function fields and Proposition \ref{new_ellp_cuv}, we construct several distinct families of locally repairable codes equipped with two recovering sets. Furthermore, we present a special construction of locally repairable codes with two recovering sets on a specific elliptic function field defined over a finite field of characteristic \(2\). 

\subsection{General Framework for Constructing Locally Repairable Codes with Two Recovering Sets}

We first present a general framework for constructing locally repairable codes with two recovering sets via elliptic function fields. 

\begin{proposition}\label{cons_lrc_two}
    Let \( E/\mathbb{F}_q \) be an elliptic function field. Let \(H = \{P_1,P_2,\cdots,P_{|H|} = \mathcal{O}\}\) be a subgroup of \( \mathbb{P}^1_E \) with \( |H|<q \). Let \( T_H = \{\tau_P:P\in H\} \) be a subgroup of \( T_E \) and let \( A_1,A_2 \) be two nontrivial subgroups of \( \operatorname{Aut}(E, \mathcal{O}) \) such that \(|A_1\cap A_2| = 1\) and \( \mathcal{G} = T_H A_1 A_2 \) is a subgroup of \( \operatorname{Aut}(E/\mathbb{F}_q) \). Suppose that \( Q_1, \cdots, Q_m \) are rational places of \( E^{\mathcal{G}} \) that split completely in \( E \) and the rational place $P_{i,j}$ of $E$ lying over $Q_i$ satisfies that \(P_{i,j}\notin E[|H|]\) for any $1\le i\le m$ and \(1\le j\le |H||A_1||A_2|\). Put \( r_1 = |H||A_1| - 1 \), \( r_2 = |H|(|A_2| - 1)\) and \( n = m|H||A_1||A_2| \). Then for any integer \( d_0 \) with \( 1 \leq d_0 < n \), there exists a \( q \)-ary \([n, k, d; (r_1, r_2)]_q\) locally repairable code with \(d \ge n - L\ge d_0\) and  
    \[
        k \geq \bigg\lfloor\frac{n-d_0}{r_1 + 1}\bigg\rfloor r_1 + \bigg\lfloor\frac{n-d_0}{r_2 + |H|}\bigg\rfloor r_2  + 2 - L,
    \]
    where
    \[
        L =  \max\left\{\left\lfloor\frac{n-d_0}{r_1 + 1}\right\rfloor (r_1+1),\left\lfloor\frac{n-d_0}{r_2 + |H|}\right\rfloor (r_2 + |H|)\right\}.  
    \]
\end{proposition}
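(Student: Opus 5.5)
The plan is to build two local structures from the two subgroups $\mathcal{G}_1 = T_H A_1$ and $\mathcal{G}_2 = T_H A_2$ of $\mathcal{G}$. Both share the same pole support $H$, so I will use \emph{one} family of functions $e_i$ for both. The code will be the intersection of two Tamo--Barg type function spaces, both sitting inside a single Riemann--Roch space $\mathcal{L}(G)$ whose degree is $L$.

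\textbf{Setup.} First I check that $\mathcal{G}_1$ and $\mathcal{G}_2$ are subgroups of $\mathcal{G}$ satisfying Lemma~\ref{sub}; this is inherited from the hypothesis that $\mathcal{G} = T_H A_1 A_2$ is a group. They have orders $r_1+1 = |H||A_1|$ and $|H||A_2| = r_2+|H|$.

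\textbf{Fixed fields.} By Proposition~\ref{find_e_i}(i), $E^{\mathcal{G}_1} = \mathbb{F}_q(z_1)$ and $E^{\mathcal{G}_2} = \mathbb{F}_q(z_2)$, with
\[
(z_s)^E_\infty = |A_s|\sum_{i=1}^{|H|} P_i, \qquad s = 1,2.
\]

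\textbf{The functions $e_i$.} By Proposition~\ref{find_e_i}(ii), the functions $e_1 = 1, e_2, \dots$ with pole divisors \eqref{e_i_condition} depend only on $H$. So the same $e_i$ can be used for both groups, with $i \le \max\{r_1, r_2\}$; this requires $\max\{r_1,r_2\} < q$, which I will check or assume from $|H| < q$ in the intended applications.

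\textbf{Recovering sets.} Fix a place $P_{i,j}$ in the support. Its orbits under $\mathcal{G}_1$ and $\mathcal{G}_2$ lie in its $\mathcal{G}$-orbit, which is the full fibre over $Q_i$ because $Q_i$ splits completely.
\begin{itemize}
\item The first recovering set is $\mathcal{G}_1(P_{i,j}) \setminus \{P_{i,j}\}$, of size $r_1$.
\item The second is $\mathcal{G}_2(P_{i,j})$ minus the $T_H$-orbit $\{P_{i,j} \oplus P : P \in H\}$, of size $r_2$.
\end{itemize}
These two sets are disjoint. Since $\mathcal{G}$ acts freely on the fibre, $\mathcal{G}_1(P_{i,j}) \cap \mathcal{G}_2(P_{i,j}) = (\mathcal{G}_1 \cap \mathcal{G}_2)(P_{i,j})$. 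From $|A_1 \cap A_2| = 1$ and the semidirect product structure (Lemma~\ref{ell_func_auto_sub}) we get $\mathcal{G}_1 \cap \mathcal{G}_2 = T_H$. The second set removes exactly this $T_H$-orbit, which contains $P_{i,j}$ and hence also the part shared with the first set.

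\textbf{The code.} Put $t_1 = \lfloor (n-d_0)/(r_1+1) \rfloor$ and $t_2 = \lfloor (n-d_0)/(r_2+|H|) \rfloor$. Define
\[
V_1 = \Big\{ \sum_{a=0}^{t_1} c_{1,a} z_1^a e_1 + \sum_{u=2}^{r_1} \sum_{a=0}^{t_1-1} c_{u,a} z_1^a e_u \Big\},
\]
as in \eqref{func}, and define $V_2$ analogously with $z_2$, $t_2$ and $e_1, \dots, e_{r_2}$.

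Degree counting shows $z_1^a e_u \in \mathcal{L}(t_1|A_1|\sum_i P_i)$, because $\deg (e_u)_\infty = u \le r_1 < |A_1||H|$ spreads at most one extra pole per $P_i$. So $V_1 \subseteq \mathcal{L}(t_1|A_1|\sum_i P_i)$, and likewise $V_2 \subseteq \mathcal{L}(t_2|A_2|\sum_i P_i)$.

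Both spaces then lie in $\mathcal{L}(G)$ with $G = \max\{t_1|A_1|, t_2|A_2|\}\sum_i P_i$, which has degree $L$. Take the code $C_{\mathcal{L}}(D, V_1 \cap V_2)$.

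\textbf{Parameters.} Linear independence over the fixed fields (Proposition~\ref{find_e_i}(ii)) gives $\dim V_1 = t_1 r_1 + 1$ and $\dim V_2 = t_2 r_2 + 1$. Since $\ell(G) = L$ by Riemann--Roch on genus $1$,
\[
\dim(V_1 \cap V_2) \ge \dim V_1 + \dim V_2 - L.
\]
Evaluation is injective because $L \le n - d_0 < n$. Hence $k$ is as claimed, and $d \ge n - \deg G = n - L \ge d_0$.

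\textbf{Locality.} On a $\mathcal{G}_1$-orbit, $z_1$ is constant, so $f \in V_1$ restricts to a fixed combination $\sum_{u \le r_1} c_u e_u$. By Proposition~\ref{find_e_i}(iii) applied to $\mathcal{G}_1$, the $r_1$ values on the first recovering set determine the $c_u$, hence $f(P_{i,j})$, exactly as in Lemma~\ref{local}.

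On a $\mathcal{G}_2$-orbit, $f \in V_2$ restricts to $\sum_{u \le r_2} c_u e_u$, because $z_2$ is constant there. The $r_2 \times r_2$ matrix of $e_1, \dots, e_{r_2}$ at the second recovering set is invertible by Proposition~\ref{find_e_i}(iv) applied to $\mathcal{G}_2$, with $r+1-|H| = r_2$. This step uses the hypothesis $P_{i,j} \notin E[|H|]$. So the $c_u$, and hence $f(P_{i,j})$, are recovered.

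\textbf{Main obstacle.} The delicate step is the second locality. There one must ensure that the $\mathcal{G}_2$-local expansion uses only $e_1, \dots, e_{r_2}$, and that the determinant argument of (iv) really applies to the set left after removing the $T_H$-coset. I would verify this through the divisor computation in (iv), which requires $(\sum_{u \le r_2} c_u e_u)_\infty \le (|A_2|-1)\sum_i P_i$; this holds since $r_2 = |H|(|A_2|-1)$.
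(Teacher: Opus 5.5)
Your proposal is essentially the paper's proof. It uses the same spaces $V_1,V_2$ built from $z_1,z_2$ and the pole-balanced $e_u$ supported on $H$. It places them in the same common Riemann--Roch space $\mathcal{L}\bigl(\max\{t_1|A_1|,t_2|A_2|\}\sum_{i}P_i\bigr)$ of degree $L$. It takes the same two recovering sets: the $T_HA_1$-orbit minus the point, and the $T_HA_2$-orbit minus the $T_H$-coset. These are handled by Proposition~\ref{find_e_i}(iii) and (iv), exactly as in the paper. Your explicit checks that each $T_HA_s$ is a subgroup and that $T_HA_1\cap T_HA_2=T_H$ fill in details the paper only asserts.

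There are two small corrections.

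First, the condition $\max\{r_1,r_2\}<q$ that you worry about is not needed. The existence of $e_u$ with pole divisor \eqref{e_i_condition} only uses that at most $|H|$ subspaces $\mathcal{L}(D_u-P_j)$ of codimension one cannot cover $\mathcal{L}(D_u)$, and this needs only $|H|<q$.

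Second, $\ell(G)=L$ holds only when $L\ge 1$. If $n-d_0<|H|\min\{|A_1|,|A_2|\}$, then $t_1=t_2=0$ and $G=0$, so $\ell(G)=1$ and $V_1=V_2=\mathbb{F}_q$. The code is then the repetition code with $k=1$, so the claimed bound $k\ge 2$ is actually false in this case. The paper's proof has the same oversight. The statement needs the extra assumption $n-d_0\ge |H|\min\{|A_1|,|A_2|\}$, equivalently $L\ge 1$.
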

\begin{proof}
    By a standard group-theoretic result, we have
    \[
        |\mathcal{G}| = \frac{|T_H|\cdot|A_1| \cdot |A_2|}{|T_H\cap (A_1A_2)|\cdot|A_1 \cap A_2|} = |H|\cdot|A_1| \cdot |A_2|.
    \]
    It follows that the extension \( E/E^{\mathcal{G}} \) has degree \(|H||A_1| |A_2|\). Consider the field extensions tower in Fig.\ref{extension}. 
    \begin{figure}[htbp]
        \centering
        \begin{tikzcd}
          & E \ar[dl, -] \ar[dr, -] & \\
         E^{T_H A_1} \ar[dr, -] & & E^{T_H A_2} \ar[dl, -] \\
          & E^{\mathcal{G}} &
        \end{tikzcd}
        \caption{The field extensions tower}
        \label{extension}
    \end{figure}
    
    By Proposition \ref{find_e_i}, we derive that \(E^{T_H A_i} = \mathbb{F}_q(z_i)\) with \((z_i)^E_\infty = |A_i|\sum_{u = 1}^{|H|} P_u\) for \(i = 1,2\). Moreover, we can also derive that, for \(i = 1,2\), there exist elements \( e_\ell^{(i)} \in E \) for \( 2\le \ell\le r_i \) satisfying \eqref{e_i_condition} and \( e^{(i)}_1 = 1, e^{(i)}_2, \cdots, e^{(i)}_{r_i} \) are linearly independent over \( E^{T_HA_i} \). Define 
    \[
        V_i = \Bigg\{ \sum_{j=0}^{t_i}a_{1,j}z_i^j+\sum_{\ell=2}^{r_i}\bigg(\sum_{j=0}^{t_i-1}a_{\ell,j}z_i^j\bigg)e^{(i)}_\ell:a_{\ell,j}\in\mathbb{F}_q \Bigg\}
    \]
    with
    \[
        t_1 = \left\lfloor\frac{n-d_0}{r_1 + 1}\right\rfloor \text{ and }t_2 = \left\lfloor\frac{n-d_0}{r_2 + |H|}\right\rfloor. 
    \]
    Then it follows that 
    \[
        V_i \subseteq \mathcal{L}\left( L_0 \sum_{u = 1}^{|H|} P_u\right),  
    \]
    where \(L_0 = \max\left\{\left\lfloor\frac{n-d_0}{r_1 + 1}\right\rfloor |A_1|,\left\lfloor\frac{n-d_0}{r_2 + |H|}\right\rfloor |A_2|\right\}\). Let \(V = V_1\cap V_2\). Then we have
    \begin{align*}
        \dim_{\mathbb{F}_q} V &= \dim_{\mathbb{F}_q}V_1 + \dim_{\mathbb{F}_q}V_2 - \dim_{\mathbb{F}_q}(V_1 + V_2) \\ 
        &\ge \dim_{\mathbb{F}_q}V_1 + \dim_{\mathbb{F}_q}V_2  - \dim_{\mathbb{F}_q}\mathcal{L}\left( L_0 \sum_{u = 1}^{|H|} P_u\right) \\ 
        &=  \bigg\lfloor\frac{n-d_0}{r_1 + 1}\bigg\rfloor r_1 + \bigg\lfloor\frac{n-d_0}{r_2 + |H|}\bigg\rfloor r_2 + 2  - L, 
    \end{align*}
    where
    \[
        L =  \max\left\{\left\lfloor\frac{n-d_0}{r_1 + 1}\right\rfloor (r_1+1),\left\lfloor\frac{n-d_0}{r_2 + |H|}\right\rfloor (r_2 + |H|)\right\}.  
    \]
    Set \( D = \sum_{i=1,j=1}^{m,|H||A_1||A_2|}P_{i,j} \). Then we can easily obtain an \([n, k, d]_q\) algebraic geometry code
    \[
       C_{\mathcal{L}}(D,V)=\big\{(f(P))_{P\in\operatorname{supp}(D)}:f\in V\big\}. 
    \]
    
    In the following, we will show that this code is indeed a \( q \)-ary \([n, k, d; (r_1, r_2)]_q\) locally repairable code. Let $P$ be a rational place in $\operatorname{supp}(D)$ lying over \(Q\in\mathbb{P}_{E^\mathcal{G}}^1\). For each $i = 1, 2$, there exists a rational place $R_i = P\cap E^{T_H A_i}$ of $E^{T_H A_i}$ such that $P$ lies over $R_i$. Denote by $\mathcal{P}_i$ the set of places of $E$ lying over $R_i$. Then $P$ belongs to both $\mathcal{P}_1$ and $\mathcal{P}_2$. Since \(T_HA_1\cap T_HA_2 = T_H\), we conclude that 
    \[
        \mathcal{P}_1\cap \mathcal{P}_2 = \left\{\tau_{P^\prime}(P) = P\oplus P^\prime:P^\prime\in H\right\}. 
    \]
    Now, by Lemma \ref{local} and Proposition \ref{find_e_i}(iii), it follows that for any \(f \in V\), the value \(f(P)\) can be repaired from the set \(\{f(P') : P' \in \mathcal{P}_1 \setminus \{P\}\}\). This means that the code we constructed has locality \( r_1 \). Moreover, for every \(P^\prime\in \mathcal{P}_2\setminus(\mathcal{P}_1\cap \mathcal{P}_2)\), we have
    \begin{align*}
        f(P^\prime) &= \sum_{j=0}^{t_2}a_{1,j}z_2^j(P^\prime)+\sum_{\ell=2}^{r_2}\bigg(\sum_{j=0}^{t_2-1}a_{\ell,j}z_2^j(P^\prime)\bigg)e^{(2)}_\ell(P^\prime) \\
        & = \sum_{j=0}^{t_2}a_{1,j}z_2^j(R_2)+\sum_{\ell=2}^{r_2}\bigg(\sum_{j=0}^{t_2-1}a_{\ell,j}z_2^j(R_2)\bigg)e^{(2)}_\ell(P^\prime)\\
        & = \sum_{\ell=1}^{r_2} c_\ell e^{(2)}_\ell(P^\prime)
    \end{align*}
    where
    \[
        c_\ell = \begin{cases}
            \sum_{j=0}^{t_2}a_{1,j}z_2^j(R_2), &\ell = 1\\
            \sum_{j=0}^{t_2-1}a_{\ell,j}z_2^j(R_2), & \ell>1 
        \end{cases}. 
    \]
    Since the values of \(e^{(2)}_\ell(P')\) are known for each \(\ell\) and \(P'\), we can determine the coefficients \(c_\ell\) from the values of \(f\) at the other \(r_2\) places in \(\mathcal{P}_2 \setminus (\mathcal{P}_1 \cap \mathcal{P}_2)\), using the assumption that \(P \notin E[|H|]\) together with Proposition \ref{find_e_i}(iv). Hence, the code \( C_{\mathcal{L}}(D,V) \) is indeed a locally repairable code with localities \((r_1, r_2)\) and availability \( 2 \). The proof is completed. 
\end{proof}

\begin{remark}
    It is worth emphasizing that if we adopt the functions \( e_i \) and \( z \) from the framework in \cite{MA2023105686}, then the intersection of the two function spaces \( V_1 \) and \(V_2\) in Proposition~\ref{cons_lrc_two} will be very small, which limits the parameter flexibility of the resulting locally repairable codes with two recovering sets. This is also the key motivation for a novel construction of the functions \(e_i\) presented in this paper. 
\end{remark}

Using the above proposition together with specific elliptic function fields, we can construct several distinct families of locally repairable codes equipped with two recovering sets as illustrated in Theorems~\ref{lrc_two_max} and \ref{lrc_two_nonmax}. In the following we restate Theorem~\ref{lrc_two_max}, which presents the result constructed on maximal elliptic function fields, and provide its proof. 

\begin{theorem}\label{lrc_two_max_restate}
    Let \( q = p^a \) for any prime \( p \) and any even integer \( a > 0 \). Let \(\sqrt{q}+1 = \prod_\ell \ell^{h_\ell}\)  be the prime factorization of \( \sqrt{q} + 1 \). For any integer \( d_0 \) with \( 1 \leq d_0 < n \) and any positive divisor \( h \) of \( \sqrt{q} + 1 \) with
    \begin{equation}\label{lrc_two_repair_condition}
        h^2|A_1||A_2|>\prod_\ell \ell^{2\min\{2\nu_\ell(h),h_\ell\}}-h^2,
    \end{equation}
    there exists a \( q \)-ary \([n = mh^2|A_1||A_2|, k, d;(r_1,r_2)]_q\) locally repairable code where \( r_1 = h^2|A_1| - 1 \), \(r_2 = h^2(|A_2|-1)\), \(d \ge n - L \ge d_0\), 
    \[
        k \geq \bigg\lfloor\frac{n-d_0}{r_1 + 1}\bigg\rfloor r_1 + \bigg\lfloor\frac{n-d_0}{r_2 + h^2}\bigg\rfloor r_2  + 2 - L,
    \]
    and
    \[
        L = \max\left\{\left\lfloor\frac{n-d_0}{r_1 + 1}\right\rfloor (r_1+1),\left\lfloor\frac{n-d_0}{r_2 + h^2}\right\rfloor (r_2 + h^2)\right\}, 
    \]
    for any integer \( m \) satisfying \(1 \leq m \leq \left\lceil \frac{q+2\sqrt{q}+1-2h^2}{h^2|A_1||A_2|} \right\rceil - 1 \), provided that \(|A_1|\), \(|A_2|\) and \( p \) satisfy one of the following cases:
     
    \begin{itemize}
        \item[(i)] \(|A_1| = 2,4,8\) and \(|A_2| = 3\) for \( p = 2 \);  
        \item[(ii)] \(|A_1| = 2,4\) and \(|A_2| = 3\) for \( p = 3 \);  
        \item[(iii)] \(|A_1| = 2\) and \(|A_2| = 3\) for \( p \equiv 2 \pmod{3} \) and \( p \neq 2 \).
    \end{itemize}
\end{theorem}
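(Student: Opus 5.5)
We will deduce the theorem from Proposition~\ref{cons_lrc_two} by exhibiting, in each case (i)--(iii), a maximal elliptic function field $E/\mathbb{F}_q$ with the required automorphisms, a subgroup $H\le\mathbb{P}^1_E$ with $|H|=h^2$, and enough completely splitting places avoiding $E[h^2]$.

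\textbf{Choice of curve.} Since $a$ is even, Lemma~\ref{iso}(ii) gives a maximal curve with $N(E)=q+2\sqrt{q}+1$. By the group-structure lemma (case (I)), $\mathbb{P}^1_E\cong \mathbb{Z}/(\sqrt q+1)\times\mathbb{Z}/(\sqrt q+1)$. We take the supersingular curve with $j=0$: in characteristic $2$ or $3$ this is $j=0=1728$, and for $p\equiv 2\pmod 3$ it is $y^2=x^3+b$. After possibly twisting, it is maximal over $\mathbb{F}_q$ with $a$ even and has $\operatorname{Aut}(E,\mathcal{O})$ of order $24$, $12$ or $6$ respectively, with all automorphisms defined over $\mathbb{F}_q$. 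This is the standard fact used in \cite{MA2023105686}. Inside this group we choose $A_2$ cyclic of order $3$ and $A_1$ of order $2,4,8$ (case (i)), $2,4$ (case (ii)) or $2$ (case (iii)) with $A_1\cap A_2=1$. For $p=2$, the group $\operatorname{Aut}(E,\mathcal{O})\cong\mathrm{SL}_2(\mathbb{F}_3)$ has a quaternion Sylow $2$-subgroup of order $8$, and for $p=3$ we have $\mathbb{Z}/3\rtimes\mathbb{Z}/4$. In every case the product $A_1A_2$ is a subgroup, since one of $A_1,A_2$ is normal or $A_1A_2$ is the whole group. Set $H=E[h]\cong(\mathbb{Z}/h)^2$, so $|H|=h^2$. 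Because every automorphism fixing $\mathcal{O}$ is a group endomorphism, it preserves $E[h]$, and Lemma~\ref{sub} shows that $T_HA_1A_2$ is a subgroup of order $h^2|A_1||A_2|$. We also have $|H|=h^2<q$.

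\textbf{Splitting places and the torsion condition.} By Lemma~\ref{split_place} applied to $\mathcal{G}=T_HA_1A_2$, at most $|\mathcal{G}|+2h^2$ rational places ramify. All others split completely, which gives at least $\lceil (q+2\sqrt q+1-2h^2)/(h^2|A_1||A_2|)\rceil-1$ completely splitting rational places $Q_i$. The remaining hypothesis of Proposition~\ref{cons_lrc_two} is that no place above the chosen $Q_i$ lies in $E[h^2]$. The fibres are $\mathcal{G}$-orbits, and $E[h^2]$ is $\mathcal{G}$-stable since $H\subseteq E[h^2]$ and automorphisms fixing $\mathcal{O}$ preserve it. Hence each fibre lies either entirely inside or entirely outside $E[h^2]\cap\mathbb{P}^1_E$. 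The size of this intersection is
\[
\prod_\ell \ell^{2\min\{2\nu_\ell(h),h_\ell\}},
\]
computed from the structure $(\mathbb{Z}/(\sqrt q+1))^2$. Among these torsion points, the $h^2$ points of $H$ form the orbit of $\mathcal{O}$, which is a ramified fibre. The remaining torsion points number $\prod_\ell\ell^{2\min\{2\nu_\ell(h),h_\ell\}}-h^2$. Condition~\eqref{lrc_two_repair_condition} says this is less than $|\mathcal{G}|$, so these points cannot contain a whole completely split fibre: they lie only in ramified fibres. Therefore every completely splitting fibre avoids $E[h^2]$. This torsion-counting step is the main point to check carefully, in particular that the ramified-place count of Lemma~\ref{split_place} still leaves the stated number of split fibres.

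\textbf{Conclusion.} With $|H|=h^2$, Proposition~\ref{cons_lrc_two} gives $r_1=h^2|A_1|-1$, $r_2=h^2(|A_2|-1)$, $n=mh^2|A_1||A_2|$, together with the stated bounds on $k$, $L$ and $d$. This holds for every $m$ up to the number of available split places.
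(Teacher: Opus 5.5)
Your overall route is the paper's: a maximal $E$ with $\mathbb{P}^1_E\cong(\mathbb{Z}/(\sqrt q+1)\mathbb{Z})^2$, $H=E[h]$, Proposition~\ref{cons_lrc_two}, and a count of $h^2$-torsion points. Your treatment of the last step is correct and more explicit than the paper's. $E[h^2]$ is $\mathcal{G}$-stable and $H$ is the ramified orbit of $\mathcal{O}$. So a completely split fibre meeting $E[h^2]$ would be a full orbit of size $|\mathcal{G}|$ inside $E[h^2]\setminus H$, which \eqref{lrc_two_repair_condition} forbids. This is exactly why the paper subtracts $|E[h]|$.

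The gap is your claim that $A_1A_2$ is a subgroup ``in every case''. It is false for $p=2$, $|A_1|=4$, $|A_2|=3$. In $\operatorname{Aut}(E,\mathcal{O})\cong\mathrm{SL}_2(\mathbb{F}_3)$, the subgroups of order $4$ are the three cyclic subgroups of the normal $Q_8$, and the subgroups of order $3$ are the four Sylow $3$-subgroups. None of these is normal, and $|A_1A_2|=12<24$.

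No other choice rescues this case, because $\mathrm{SL}_2(\mathbb{F}_3)$ has no subgroup of order $12$. Such a subgroup would have index $2$, so it would contain every square. That gives the $8$ elements of order $3$ and $-1$, hence also the $8$ elements of order $6$, i.e.\ at least $18$ elements. On the other hand, suppose $T_HA_1A_2$ were a subgroup of $\operatorname{Aut}(E/\mathbb{F}_q)$. Its image under the homomorphism $\operatorname{Aut}(E/\mathbb{F}_q)\to\operatorname{Aut}(E/\mathbb{F}_q)/T_E\cong\operatorname{Aut}(E,\mathcal{O})$, $\tau\alpha\mapsto\alpha$, would be exactly the set $A_1A_2$, a subgroup of order $12$. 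So Proposition~\ref{cons_lrc_two} cannot be invoked at all in this subcase.

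The paper's proof does not address this either. It only cites \cite{Optimal_Locally_Repairable_Codes_Via_Elliptic_Curves} and \cite[Thm. 1.2]{MA2023105686} for the existence of $T_H,A_1,A_2$. The subcase $p=2$, $|A_1|=4$ of (i) is therefore not established by this method and should be dropped.

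Your argument does go through in the remaining cases:
\begin{itemize}
\item[(a)] $p=2$ with $|A_1|\in\{2,8\}$, where $A_1$ is the center, resp.\ the normal $Q_8$;
\item[(b)] $p=3$, where the $\mathbb{Z}/3\mathbb{Z}$ in $\mathbb{Z}/3\mathbb{Z}\rtimes\mathbb{Z}/4\mathbb{Z}$ is normal;
\item[(c)] $p\equiv 2\pmod 3$, $p\neq 2$, where the group is cyclic of order $6$.
\end{itemize}

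A minor remark: $|H|=h^2<q$ fails for $h=\sqrt q+1$. In that case the admissible range of $m$ is empty, so nothing is lost.
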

\begin{proof}
    For a maximal elliptic function field \(E/\mathbb{F}_q\), the translation group \(T_E\) of \(E\) has the following group structure:
    \[
    T_E \cong \mathbb{Z}/(\sqrt{q} + 1)\mathbb{Z} \times \mathbb{Z}/(\sqrt{q} + 1)\mathbb{Z}.
    \]
    Let \(H = E[h]\) with \(|H| = h^2\). Then, by \cite{Optimal_Locally_Repairable_Codes_Via_Elliptic_Curves} and \cite[Thm. 1.2]{MA2023105686}, we can derive the existence of subgroups \(T_H\), \(A_1\) and \(A_2\) that satisfy one of the above cases together with the conditions in Proposition \ref{cons_lrc_two}. Now we only need to prove that for any rational place \(Q\) of \( E^{T_HA_1A_2} \) that split completely in \( E \), the rational place $P$ of $E$ lying over $Q$ satisfies that \(P\notin E[h^2]\); then, together with Proposition \ref{cons_lrc_two}, we can obtain our conclusion.

    Suppose, on the contrary, that \(P\in E[h^2]\). Then we have \([h^2]P = \mathcal{O}\in \mathbb{P}_E^1\). It follows that \(h^2P\sim h^2\mathcal{O}\). For any \(\tau\sigma\in T_HA_1A_2\), we can conclude that
    \begin{align*}
        h^2P\sim h^2\mathcal{O} &\Longleftrightarrow h^2\tau\sigma(P)\sim h^2\tau\sigma(\mathcal{O})\\
        &\Longleftrightarrow h^2\tau\sigma(P)\sim h^2\tau(\mathcal{O}).
    \end{align*}
    Note that \(\tau(\mathcal{O})\in H\). Hence we have \([h^2]\tau(\mathcal{O}) = \mathcal{O}\) which follows that \(h^2\tau(\mathcal{O})\sim h^2\mathcal{O}\). Therefore \(h^2\tau\sigma(P)\sim h^2\mathcal{O}\) i.e. \([h^2]\tau\sigma(P) = \mathcal{O}\). This implies that we can obtain
    \[
         h^2|A_1||A_2|\le |E[h^2]|-|E[h]|,
    \]
    i.e.
    \[
        h^2|A_1||A_2|\le\prod_\ell \ell^{2\min\{2\nu_\ell(h),h_\ell\}}-h^2
    \]
    which contradicts to the assumption. Hence, we have \(P\notin E[h^2]\). The proof is completed. 
\end{proof}

In the following we restate Theorem~\ref{lrc_two_nonmax}, which presents the result obtained from elliptic function fields corresponding to the elliptic curves in Proposition \ref{new_ellp_cuv}, and provide its proof. 

\begin{theorem}
    Let \( q \) be a prime power. For any integer \( d_0 \) with \( 1 \leq d_0 < n \) and any positive divisor \( h \) with \(h\mid\mid q^2+2q\), there exists a \( q^2 \)-ary \([n = mh|A_1||A_2|, k, d;(r_1,r_2)]_{q^2}\) locally repairable code where \( r_1 = h|A_1| - 1 \), \(r_2 = h(|A_2|-1)\), \(d \ge n - L \ge d_0\), 
    \[
        k \geq \bigg\lfloor\frac{n-d_0}{r_1 + 1}\bigg\rfloor r_1 + \bigg\lfloor\frac{n-d_0}{r_2 + h}\bigg\rfloor r_2  + 2 - L,
    \]
    and
    \[
        L = \max\left\{\left\lfloor\frac{n-d_0}{r_1 + 1}\right\rfloor (r_1+1),\left\lfloor\frac{n-d_0}{r_2 + h}\right\rfloor (r_2 + h)\right\} , 
    \]
     for any integer \( m \) satisfying \(1 \leq m \leq \left\lceil \frac{q^2+2q-2h}{h|A_1||A_2|} \right\rceil - 1 \), provided that \(|A_1|\), \(|A_2|\) and \( q \) satisfy that \(|A_1| = 2\) and \(|A_2| = 3\) for \(q = 3u^2 + 3u + 1\) with \(u\in\mathbb{Z}\). 
\end{theorem}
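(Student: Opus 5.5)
The plan is to apply Proposition~\ref{cons_lrc_two} to the elliptic function field $E/\mathbb{F}_{q^2}$ of $\mathfrak{E}: y^2 = x^3+b$ with $b\in\mathbb{F}_q^*$. By Proposition~\ref{new_ellp_cuv}(i) and Remark~\ref{new_ellp_cuv_rem}(iii), since $q=3u^2+3u+1$, this field has $N(E)=q^2+2q$ and $|\operatorname{Aut}(E,\mathcal{O})|=6$. As $6\mid q^2-1$, the cyclic group $\operatorname{Aut}(E,\mathcal{O})=\{\sigma_w: w^6=1\}$ is defined over $\mathbb{F}_{q^2}$, where $\sigma_w(x)=w^2x$ and $\sigma_w(y)=w^3y$. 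Take $A_1=\{\sigma_w:w^2=1\}$ of order $2$ and $A_2=\{\sigma_w:w^3=1\}$ of order $3$. Then $|A_1\cap A_2|=1$ and $A_1A_2=\operatorname{Aut}(E,\mathcal{O})$.

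For the translation part, let $h\mid\mid N(E)$ and let $H$ be the subgroup of $\mathbb{P}_E^1$ of order $h$. Because $\gcd(h,N(E)/h)=1$, $H$ is the unique subgroup of order $h$; it is the $h$-primary part, i.e.\ $H=\{P: [h]P=\mathcal{O}\}\cap\mathbb{P}_E^1$. Every $\sigma\in\operatorname{Aut}(E,\mathcal{O})$ is a group automorphism of $\mathbb{P}_E^1$, so $\sigma^{-1}(H)=H$. Lemma~\ref{sub} then shows that $T_HA_1$, $T_HA_2$ and $\mathcal{G}=T_HA_1A_2$ are subgroups of $\operatorname{Aut}(E/\mathbb{F}_{q^2})$. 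We also need $|H|=h<q^2$, which holds since a proper unitary divisor satisfies $h\le (q^2+2q)/3<q^2$; the degenerate case $h=N(E)$ can be excluded or handled separately.

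By Lemma~\ref{split_place}, at most $|\mathcal{G}|+2h$ rational places of $E$ ramify in $E/E^{\mathcal{G}}$, and every other rational place splits completely. This gives at least $\lceil (q^2+2q-2h)/(6h)\rceil-1$ rational places $Q_i$ of $E^{\mathcal{G}}$ that split completely, which matches the stated range of $m$.

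The remaining hypothesis of Proposition~\ref{cons_lrc_two} is $P_{i,j}\notin E[h]$, and I expect this to be the main point. In this setting it is easier than in Theorem~\ref{lrc_two_max_restate}: $E[h]\cap\mathbb{P}_E^1=H$, because $h\mid\mid N(E)$ means every rational $h$-torsion point lies in the $h$-primary part. The translations $T_H\subseteq\mathcal{G}$ permute $H$, and $\operatorname{Aut}(E,\mathcal{O})$ preserves $H$, so $H$ is a union of $\mathcal{G}$-orbits. Since $\mathcal{O}\in H$ is fixed by $\operatorname{Aut}(E,\mathcal{O})$, the $\mathcal{G}$-orbit of $\mathcal{O}$ is $H$ itself, of size $h<6h=|\mathcal{G}|$. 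So every place of $H$ has a nontrivial stabilizer and is ramified in $E/E^{\mathcal{G}}$. Hence no place lying over a completely split $Q_i$ belongs to $E[h]$, and the condition holds automatically without any inequality like \eqref{lrc_two_repair_condition}.

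Applying Proposition~\ref{cons_lrc_two} with $|H|=h$, $r_1=2h-1$ and $r_2=2h$ then gives the stated bounds on $n$, $k$, $d$ and $L$. Interchanging $A_1$ and $A_2$ covers the symmetric case.
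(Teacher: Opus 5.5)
Your proposal is correct and follows the paper's proof. Both apply Proposition~\ref{cons_lrc_two} to the curve of Proposition~\ref{new_ellp_cuv}(i) with $H=E[h]$, $|A_1|=2$, $|A_2|=3$, and both verify $P_{i,j}\notin E[h]$ from the fact that $h\mid\mid N(E)$ forces $|E[h]|=h$. Your direct argument, that $E[h]=H$ is the $\mathcal{G}$-orbit of $\mathcal{O}$ and therefore consists of ramified places, is a cleaner form of the paper's counting contradiction $h|A_1||A_2|\le |E[h]|-|E[h]|=0$, which tacitly relies on the same fact; you also fill in details the paper leaves implicit, namely $|H|<q^2$, the $\operatorname{Aut}(E,\mathcal{O})$-invariance of $H$, and the count of completely split places that gives the range of $m$.
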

\begin{proof}
    Apart from using the elliptic curve in Proposition \ref{new_ellp_cuv}(i) and the corresponding elliptic function field, the proof is analogous to that of Theorem \ref{lrc_two_max_restate}. 
    
    Let \(H = E[h]\). Then, by Proposition \ref{new_ellp_cuv}(i), we can derive \(|H| = h\) and the existence of subgroups \(T_H\), \(A_1\) and \(A_2\) that satisfy the above assumptions together with the conditions in Proposition \ref{cons_lrc_two}. Therefore, we only need to prove that for any rational place \(Q\) of \( E^{T_HA_1A_2} \) that split completely in \( E \), the rational place $P$ of $E$ lying over $Q$ satisfies that \(P\notin E[|H|]\); then, together with Proposition \ref{cons_lrc_two}, we can obtain our conclusion. 

    Suppose, on the contrary, that \(P\in E[|H|]\). Then we can also obtain 
    \[
        h|A_1||A_2|\le |E[|H|]|-|E[h]|.
    \]
    However, it can be readily verified that \(|E[|H|]|=|E[h]|=h\). So we have \(h|A_1||A_2|\le 0 \) which is a contradiction. Hence, we have \(P\notin E[|H|]\). The proof is completed. 
\end{proof}

\begin{remark}\label{lrc_two_re}
    \begin{itemize}
        \item[(i)] From Proposition~\ref{cons_lrc_two}, we can see that the condition for subgroups \(A_1\) and \( A_2 \) is \( |A_1\cap A_2| = 1 \). Therefore, the roles of subgroups \( A_1 \) and \( A_2 \) in Theorems~\ref{lrc_two_max} and \ref{lrc_two_nonmax} can be interchanged.
        \item[(ii)] In fact, the condition \eqref{lrc_two_repair_condition} can be omitted when constructing locally repairable codes equipped with two recovering sets. However, in this case, we are unable to consider places in \( E[|H|] \) to construct algebraic geometry codes, and as a result, the lengths of the locally repairable codes that we construct become shorter.
    \end{itemize}
\end{remark}

Now we present families of examples of our locally repairable codes with two recovering sets. Moreover, we obtain a good estimate of the Singleton-defect \( \Delta(\mathcal{C}) \) of our locally repairable codes. 

\begin{example}\label{lrc_two_ex_1}
    Let \( q \) be a prime power satisfying that \(q = 3u^2 + 3u + 1\) with \(u\in\mathbb{Z}\). 
    \begin{itemize}
        \item[(i)] Let \(h = q\) and \(m = \left\lceil \frac{q}{6} \right\rceil - 1\). Therefore \(n = 6q\left(\left\lceil \frac{q}{6} \right\rceil - 1\right)\). Let \(|A_1| = 2\), \(|A_2| = 3\) and \(d_0 = n - 5q\). Then by Theorem~\ref{lrc_two_nonmax}, there exists a \( q^2 \)-ary \([n, k, \ge n - 4q;(2q-1,2q)]_{q^2}\) locally repairable code \(\mathcal{C}\), where \(k\) satisfies
        \[
            k \geq 2(2q-1) + 2q  + 2 - 4q = 2q. 
        \]
        The Singleton-defect of the code \(\mathcal{C}\) is
        \[
            \Delta(\mathcal{C}) \le \frac{1}{q^2-6q}(n - 2q - n + 4q + 1) = \frac{2q + 1}{q^2-6q}. 
        \]
        Consequently, the Singleton-defect tends to zero as \(q\) goes to infinity.
        \item[(ii)] Let \(h = \ell \mid\mid q + 2\) with \(4\ell<q\). Let \(m = \left\lceil \frac{q^2 +2q - 2\ell}{6\ell} \right\rceil - 1\). Therefore \(n = 6\ell\left(\left\lceil \frac{q^2 +2q - 2\ell}{6\ell} \right\rceil - 1\right)\). Let \(|A_1| = 3\), \(|A_2| = 2\) and \(d_0 = n - 3\ell\). Then by Theorem~\ref{lrc_two_nonmax}, there exists a \( q^2 \)-ary \([n, k, \ge n - 3\ell;(3\ell-1,\ell)]_{q^2}\) locally repairable code \(\mathcal{C}\), where \(k\) satisfies
        \[
            k \geq (3\ell-1) + \ell  + 2 - 3\ell = \ell + 1. 
        \]
        The Singleton-defect of the code \(\mathcal{C}\) is
        \[
            \Delta(\mathcal{C}) \le \frac{1}{q^2 + 2q - 8\ell}(n - (\ell+1) - n + 3\ell + 1) = \frac{2\ell}{q^2 + 2q - 8\ell}. 
        \]
        Similarly, the Singleton-defect tends to zero as \(q\) goes to infinity. Note that
        \[
            n\ge 6\ell\left( \frac{q^2 +2q - 2\ell}{6\ell}  - 1\right) = q^2 + 2q - 8\ell > q^2. 
        \]
        Therefore, as \( q \) increases, we can construct \( q^2 \)-ary locally repairable codes with length \(O(q^2 + 2q)\) and small Singleton-defect.
    \end{itemize}
\end{example}

In the following we present some numerical examples of our locally repairable codes with two recovering sets over \(\mathbb{F}_{5^{6}}\), which are obtained from specific choices of subgroups in Theorem~\ref{lrc_two_max}.

\begin{example}\label{lrc_two_ex_2}
    Let \(q = 5^6=15625\). Then \(\sqrt{q} + 1 = 2 \cdot 3^2 \cdot 7\). By Theorem \ref{lrc_two_max}, we can construct locally repairable codes with two recovering sets in Table~\ref{LRC_two_5^6_1}. 
    \begin{table}[ht]
	\caption{Our locally repairable codes over $\mathbb{F}_{5^6}$} 
	\centering
	\begin{tabular}{|c|c|c|}
        \toprule[1.2pt]
		Locally repairable code \(\mathcal{C}\) & Parameters & Singleton-defect \( \Delta(\mathcal{C}) \) \\
        \midrule[1.2pt]
        
		\([15864,\,\ge1006,\,\ge 14004;(7,8)]_{5^6}\) & \makecell{\(h = 2\) and \(d_0 = 14001\) \\ \(|A_1| = 2\) and \(|A_2| = 3\)} & \(\le 0.044945\) \\
        \midrule[0.5pt]

        \([15864,\,\ge 1392,\,\ge 12528;(11,4)]_{5^6}\) & \makecell{\(h = 2\) and \(d_0 = 12525\) \\ \(|A_1| = 3\) and \(|A_2| = 2\)} & \(\le 0.112708\) \\

        \midrule[0.5pt]

        \([15582,\,\ge 3090,\,\ge 10878;(97,98)]_{5^6}\) & \makecell{\(h = 7\) and \(d_0 = 10789\) \\ \(|A_1| = 2\) and \(|A_2| = 3\)} & \(\le 0.101656\) \\

        \midrule[0.5pt]

        \([15582,\,\ge 777,\,\ge 13720;(146,49)]_{5^6}\) & \makecell{\(h = 7\) and \(d_0 = 13674\) \\ \(|A_1| = 3\) and \(|A_2| = 2\)} & \(\le 0.069247\) \\
        
        \bottomrule[1.2pt]
	\end{tabular}
	\label{LRC_two_5^6_1}
    \end{table}
    
    Note that the code lengths of the locally repairable codes in the last two examples of Table~\ref{LRC_two_5^6_1} are shorter than \( q \), but the locality property does not satisfy the conditions required by the result in \cite{8865660}. Therefore, the examples in the last two examples of Table~\ref{LRC_two_5^6_1} cannot be obtained from \cite{8865660} and are new results. For \(h = 3\), the condition \eqref{lrc_two_repair_condition} is not satisfied. However, by Remark \ref{lrc_two_re}(ii) and Sagemath, we can construct locally repairable codes with two recovering sets in Table \ref{LRC_two_5^6_2}. 
    \begin{table}[ht]
	\caption{Our locally repairable codes over $\mathbb{F}_{5^6}$} 
	\centering
	\begin{tabular}{|c|c|c|}
        \toprule[1.2pt]
		Locally repairable code \(\mathcal{C}\) & Parameters & Singleton-defect \( \Delta(\mathcal{C}) \) \\
        \midrule[1.2pt]
        
		\([15768,\,\ge4160,\,\ge 8964;(17,18)]_{5^6}\) & \makecell{\(h = 3\) and \(d_0 = 8947\) \\ \(|A_1| = 2\) and \(|A_2| = 3\)} & \(\le 0.152270\) \\
        \midrule[0.5pt]

        \([15768,\,\ge 1885,\,\ge 11691;(26,9)]_{5^6}\) & \makecell{\(h = 3\) and \(d_0 = 11683\) \\ \(|A_1| = 3\) and \(|A_2| = 2\)} & \(\le 0.134006\) \\
        
        \bottomrule[1.2pt]
	\end{tabular}
	\label{LRC_two_5^6_2}
    \end{table}

    With the help of Sagemath, we get the specific dimension of the first example in Table~\ref{LRC_two_5^6_1}. Specifically, we obtained a \([15864,\,1083,\,\ge 14004;(7,8)]_{5^6}\) locally repairable codes with two recovering sets. It means that the specific dimensions of the locally repairable codes we construct slightly exceed the lower bound established in Proposition~\ref{cons_lrc_two}. 
\end{example}

\subsection{A Construction over Finite Field of Characteristic 2}

Due to the inherent binary operations at the hardware level, codes over finite fields of characteristic \(2\) exhibit both extremely low implementation complexity and high error correction efficiency, forming the mathematical foundation for most practical coding schemes in modern storage and communication systems. Building on this advantage, we present a special construction of locally repairable codes with two recovering sets on a specific elliptic function field defined over a finite field of characteristic \(2\), which provides more locality that differs from that in Theorem~\ref{lrc_two_max}. 

Let \( q \) be an odd power of 4, i.e., \( q = 4^{2a+1} \) for an integer \( a \in \mathbb{N} \). Consider the elliptic function field \( E = \mathbb{F}_q(x, y) \) defined by the equation \( y^2 + y = x^3 \). By \cite[Lem. 15]{Optimal_Locally_Repairable_Codes_Via_Elliptic_Curves}, \( E/\mathbb{F}_q \) is a maximal elliptic curve. Let \( Q = (0, 1) \) and 
\[
    H = \langle Q\rangle = \{Q,[2]Q = (0,0),\mathcal{O}\}. 
\]
Let \( T_H = \{\tau_P:P\in H\} \) be a subgroup of \( T_E \). Let \(A_1\) be the cyclic subgroup of \(\operatorname{Aut}(E, \mathcal{O})\) generated by \(\sigma_1\) where \(\sigma_1\) is defined by \(\sigma_1(x) = x\) and \(\sigma_1(y) = y+1\). Let \(A_2\) be the cyclic subgroup of \(\operatorname{Aut}(E, \mathcal{O})\) generated by \(\sigma_2\) where \(\sigma_2\) is defined by \(\sigma_2(x) = u^2x\) and \(\sigma_2(y) = y\) with \(u\in\mathbb{F}_q\) satisfying that \(u^3 = 1\). Then we have \(|A_1| = 2\), \(|A_2|  = 3\) and \(|A_1\cap A_2| = 1\). It is easy to verify that \(\sigma(Q^\prime)\in H\) for any \(\sigma\in A_1A_2\) and \(Q^\prime\in H\). Therefore, by Lemma~\ref{sub}, we have the fact that \(T_HA_1A_2\) is a subgroup of \( \operatorname{Aut}(E/\mathbb{F}_q) \). Similar to the proof of Theorem~\ref{lrc_two_max_restate}, by
\[
    |T_HA_1A_2| = 18>|E[|H|]|=9,
\]
we can conclude that for any rational place \(Q\) of \( E^{T_HA_1A_2} \) that split completely in \( E \), the rational place $P$ of $E$ lying over $Q$ satisfies that \(P\notin E[|H|]\). Thus we can obtain \(q\)-ary \([n, k, d;(5,6)]_q\) and \([n, k, d;(8,3)]_q\) locally repairable codes by Proposition~\ref{cons_lrc_two}. 

\begin{theorem}\label{lrc_two_F2}
    Let \( q \) be an odd power of \( 4 \), i.e., \( q = 4^{2a+1} \) for an integer \( a \in \mathbb{N} \). For any integers \( d_0 \) with \( 1 \leq d_0 < n \) and \( m \) with \(1 \leq m \leq \left\lceil \frac{q+2\sqrt{q}-24}{18} \right\rceil\),  
    \begin{itemize}
        \item[(i)] there exists a \( q \)-ary \([n = 18m, k, d\ge n - L;(5,6)]_{q}\) locally repairable code with 
        \[
            k \geq 5\bigg\lfloor\frac{n-d_0}{6}\bigg\rfloor + 6\bigg\lfloor\frac{n-d_0}{9}\bigg\rfloor  + 2 - L, 
        \]
        \item[(ii)] there exists a \( q \)-ary \([n = 18m, k, d\ge n - L;(8,3)]_{q}\) locally repairable code with 
        \[
            k \geq 8\bigg\lfloor\frac{n-d_0}{9}\bigg\rfloor + 3\bigg\lfloor\frac{n-d_0}{6}\bigg\rfloor  + 2 - L,
        \]
    \end{itemize}
    where \(L = \max\left\{6\left\lfloor\frac{n-d_0}{6}\right\rfloor,9\left\lfloor\frac{n-d_0}{9}\right\rfloor\right\}\). 
\end{theorem}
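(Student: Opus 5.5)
The plan is to apply Proposition~\ref{cons_lrc_two} directly to the specific data fixed just before the statement: the maximal curve \(y^2+y=x^3\) over \(\mathbb{F}_q\) with \(q=4^{2a+1}\), the subgroup \(H=\langle Q\rangle\) of order \(3\) with \(Q=(0,1)\), and the two subgroups \(A_1=\langle\sigma_1\rangle\) of order \(2\) and \(A_2=\langle\sigma_2\rangle\) of order \(3\).

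First I would recheck the hypotheses of Proposition~\ref{cons_lrc_two}. We have \(|H|=3<q\) and \(|A_1\cap A_2|=1\). To see that \(T_HA_1A_2\) is a group via Lemma~\ref{sub}, note that \(\sigma_1\) sends \((0,1)\) to \((0,0)\), so it swaps \(Q\) and \([2]Q=-Q\). Also \(\sigma_2\) fixes \((0,0)\) and \((0,1)\) because their \(x\)-coordinate is \(0\). Moreover \(A_1A_2\) is a group, since \(A_1A_2\cong\mathbb{Z}/6\) commutes here, so \(|\mathcal{G}|=18\).

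Next I need the non-torsion condition \(P_{i,j}\notin E[3]\). Here I would reuse the argument of Theorem~\ref{lrc_two_max_restate}. If some place \(P\) above a completely split place lies in \(E[3]\), then its whole \(\mathcal{G}\)-orbit lies in \(E[3]\), since the elements of \(A_1A_2\) are group automorphisms and translation by \(H\subseteq E[3]\) preserves \(E[3]\). That orbit has \(18\) distinct elements because the place splits completely. But \(|E[3]|\le 9\), so \(18>9\) gives a contradiction.

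Then I count completely split places. By Lemma~\ref{split_place}, at most \(18+2\cdot3=24\) rational places ramify, and all others split completely. Since \(N(E)=q+2\sqrt{q}+1\), this leaves at least \(\lceil (q+2\sqrt{q}+1-24)/18\rceil\) split places of \(E^{\mathcal{G}}\), which covers the stated range of \(m\). I would check this rounding carefully; \(q+2\sqrt q+1\equiv 0\pmod 9\), so the stated bound is plausible.

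Finally I apply Proposition~\ref{cons_lrc_two} with \(|H|=3\).
\begin{itemize}
\item[(i)] Taking \(A_1\) of order \(2\) and \(A_2\) of order \(3\) gives \(r_1=3\cdot2-1=5\) and \(r_2=3\cdot2=6\). The relevant divisors are \(r_1+1=6\) and \(r_2+|H|=9\), which yields the formulas for \(k\) and \(L\) in (i).
\item[(ii)] Swapping the roles of the two subgroups, as permitted by Remark~\ref{lrc_two_re}(i), gives \(r_1=3\cdot3-1=8\) and \(r_2=3\cdot1=3\). The divisors become \(9\) and \(6\), and the formulas in (ii) follow the same way.
\end{itemize}

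The main obstacle is the bookkeeping in the split-place count, where the stated bound on \(m\) has to match exactly. The group-closure check under Lemma~\ref{sub} is straightforward but also needs care.
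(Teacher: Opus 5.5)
Your proposal is correct and is essentially the paper's argument. The paper likewise uses Lemma~\ref{sub} to check that \(T_HA_1A_2\) is a group of order \(18\) on \(y^2+y=x^3\), excludes \(E[3]\) by the same orbit count \(18>|E[3]|=9\) taken from the proof of Theorem~\ref{lrc_two_max_restate}, and then applies Proposition~\ref{cons_lrc_two} with \((|A_1|,|A_2|)=(2,3)\) and \((3,2)\); your explicit count via Lemma~\ref{split_place}, giving at least \(\lceil (q+2\sqrt{q}-23)/18\rceil\) completely split places, just fills in bookkeeping the paper leaves implicit.
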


\begin{remark}
    Let \( q \) be an odd power of \( 4 \). In the above construction, we can see that \(T_HA_1A_2\) is a subgroup of \( \operatorname{Aut}(E/\mathbb{F}_q) \) with order \(18\). Therefore, combining the construction framework in Section~\ref{lrc_framework}, we can provide an optimal \( q \)-ary \([18m, 17t+1, 18(m-t)]_{q}\) locally repairable code with locality \( 17 \) for any integers \( t \) and \( m \) satisfying \(1 \leq t < m \leq \left\lceil \frac{q+2\sqrt{q}-24}{18} \right\rceil \). Note that Theorem~1.1 of \cite{MA2023105686} also yields optimal locally repairable codes with locality \( r = 17 \) over \(\mathbb{F}_q\). However, the upper bound on \(m\) in our construction is slightly larger than that in the construction of Theorem~1.1 in \cite{MA2023105686}, which implies that our code length will be slightly longer.
\end{remark}

\section{Conclusion and Remarks}\label{conclusion}

The contributions of this paper are summarized as follows.
\begin{itemize}
    \item We present an alternative method for determining the functions \( e_i \) in the construction of locally repairable codes (see Proposition~\ref{find_e_i}). Our method does not rely on a set of completely splitting places. As a result, in the subsequent construction of locally repairable codes, we avoid using the modified algebraic geometry code technique compared to the previous constructions. 
    \item By employing the elliptic curves introduced in Proposition~\ref{new_ellp_cuv}, we construct several new families of optimal (classical) locally repairable codes (see Theorems~\ref{opt_LRC_one_hA} and \ref{opt_LRC_one_2h}). As shown in Example~\ref{opt_lrc_one_ex}, our results achieve two improvements: we produce optimal locally repairable codes over more finite fields (such as \(\mathbb{F}_{13^4}\) and \(\mathbb{F}_{37^2}\)), and offer more flexible locality. 
    \item We provide a general framework for constructing locally repairable codes with two recovering sets via elliptic function fields (see Proposition~\ref{cons_lrc_two}). Subsequently, using this framework and utilizing maximal elliptic function fields and Proposition \ref{new_ellp_cuv}, we construct several distinct families of locally repairable codes with two recovering sets (see Theorems~\ref{lrc_two_max}, \ref{lrc_two_nonmax} and \ref{lrc_two_F2}). Moreover, we can construct a family of \( q^2 \)-ary \([n, k, \ge n - 3\ell;(3\ell-1,\ell)]_{q^2}\) locally repairable code \(\mathcal{C}\) with length \(O(q^2 + 2q)\) and Singleton-defect \(\Delta(\mathcal{C})  = O\!\left(\frac{2\ell}{q^2 + 2q - 8\ell}\right) \) where \(\ell \mid\mid q + 2\) with \(4\ell<q\) (see Example~\ref{lrc_two_ex_1}). 
\end{itemize}

The lower bound on the dimension of the locally repairable codes with two recovering sets given in Proposition~\ref{cons_lrc_two} is actually relatively rough. As seen in Example~\ref{lrc_two_ex_2}, the specific dimensions of the locally repairable codes we construct slightly exceed the lower bound established in Proposition~\ref{cons_lrc_two}. Therefore, finding a sharper lower bound on the dimension presents an interesting problem. Besides, it would be also very interesting to consider how to construct locally repairable codes with more than two recovering sets via elliptic function fields. 


\section*{Acknowledgement}

This work is supported by Guangdong Basic and Applied Basic Research Foundation (No.~\seqsplit{2025A1515011764}), the National Natural Science Foundation of China (No. 12441107) and  the National Key Research and Development Program of
China (No.~\seqsplit{2025YFA1017100}).
\bibliographystyle{ieeetr}
\bibliography{reference_newmain}

\begin{IEEEbiographynophoto}{Junjie Huang}
received the BS degree from Sun Yat-sen University, Guanggdong, P.R. China. He is currently working toward the PhD degree in mathematics at Sun Yat-sen University. His research interests include function fields theory and algebraic coding theory.
\end{IEEEbiographynophoto}

\begin{IEEEbiographynophoto}{Chang-An Zhao}
received the bachelor’s degree in electronical engineering, the master’s degree in applied mathematics, and the PhD degree in information science and technology all from Sun Yat-sen University, Guangzhou, P.R.China, in 2001, 2005, and 2008, respectively. He works with the School of Mathematics, Sun Yat-sen University, Guangzhou, China. His research mainly focuses on elliptic curve cryptography, post-quantum cryptography and algebraic coding theory.
\end{IEEEbiographynophoto}

\end{document}